\documentclass[10pt,journal,compsoc]{IEEEtran}

\usepackage{times,amsmath,epsfig}
\usepackage[linesnumbered,ruled,vlined]{algorithm2e}
\usepackage{booktabs,graphicx}

\usepackage{subfigure}
\usepackage{xspace}
\usepackage{multirow}
\usepackage{graphicx}
\usepackage{amsthm}

\usepackage{tabularx}

\usepackage{amssymb}

\usepackage{color}
\usepackage{colortbl}
\usepackage[usenames,dvipsnames]{xcolor}
\usepackage{xcolor}

\usepackage{flushend}
\usepackage{algorithmic}
\usepackage{microtype}

\usepackage{hyperref}

\newtheorem{myDef}{Definition}
\newtheorem{myTheo}{Theorem}
\newtheorem{myLem}{Lemma}

\newtheorem{myExmp}{Example}

\newcommand{\TT}[1]{\textcolor{black}{#1}}
\newcommand{\TTao}[1]{\textcolor{black}{#1}}

\ifCLASSOPTIONcompsoc
  \usepackage[nocompress]{cite}
\else
  \usepackage{cite}
\fi

\begin{document}

\title{Incremental Graph Computation: Anchored Vertex Tracking in Dynamic Social Networks} 

\author{Taotao~Cai,
        Shuiqiao~Yang,
        Jianxin~Li$^*$, 
		Quan Z. Sheng,
		Jian~Yang, \\
		Xin~Wang,
		Wei Emma~Zhang,
		and Longxiang~Gao

\IEEEcompsocitemizethanks{\IEEEcompsocthanksitem Jianxin Li is with Deakin University, Melbourne, Australia. Jianxin Li is the corresponding author.
E-mail: jianxin.li@deakin.edu.au 
\IEEEcompsocthanksitem Taotao Cai, Quan Z. Sheng, and Jian Yang are with Macquarie University, Sydney, Australia.
E-mail: \{taotao.cai, michael.sheng, jian.yang\}@mq.edu.au
\IEEEcompsocthanksitem Shuiqiao Yang is with University of New South Wales, Sydney, Australia. Email: shuiqiao.yang@unsw.edu.au.
\IEEEcompsocthanksitem Xin Wang is with College of Intelligence and Computing, Tianjin University, Tianjin, China. E-mail: wangx@tju.edu.cn
\IEEEcompsocthanksitem Wei Emma Zhang is with the University of Adelaide, Adelaide, Australia.
E-mail: wei.e.zhang@adelaide.edu.au
\IEEEcompsocthanksitem Longxiang Gao is with Qilu University of Technology (Shandong Academy of Sciences) and Shandong Computer Science Center (National Supercomputer Center in Jinan). E-mail: gaolx@sdas.org.
\IEEEcompsocthanksitem  Taotao Cai and Shuiqiao Yang are the joint first authors.}

}




\IEEEtitleabstractindextext{%

\begin{abstract}
User engagement has recently received significant attention in understanding the decay and expansion of communities in many online social networking platforms. When a user chooses to leave a social networking platform, it may cause a cascading dropping out among \TT{her} friends. In many scenarios, it would be a good idea to persuade critical users to stay active in the network and  prevent such a cascade because critical users can have significant influence on user engagement of \TT{the} whole network. Many user engagement studies \TT{have been conducted} to find a set of critical \textit{(anchored)} users in the static social network. However, \TT{social networks are highly dynamic and their structures are continuously evolving}. In order to fully utilize the power of anchored users in evolving networks, existing studies have to mine multiple sets of anchored users at different times, which incurs an expensive computational cost. \TT{To better understand user engagement in evolving network, we target a new research problem called \textit{Anchored Vertex Tracking} (AVT) in this paper, aiming} to  
track the anchored users at each timestamp of evolving networks. Nonetheless, it is nontrivial to handle the AVT problem which we have proved to be NP-hard. To address the challenge, we develop a greedy algorithm inspired by the previous anchored $k$-core study in the static networks. Furthermore, we design an incremental algorithm to efficiently solve the AVT problem by utilizing the smoothness of the network structure's evolution. The extensive experiments conducted on real and synthetic datasets demonstrate the performance of our proposed algorithms and the effectiveness in solving the AVT problem.    
\end{abstract}

\begin{IEEEkeywords}
Anchored vertex tracking, user engagement, dynamic social networks, k-core computation
\end{IEEEkeywords}}

\maketitle

\IEEEdisplaynontitleabstractindextext

\IEEEpeerreviewmaketitle

\section{Introduction}
\label{sec:intro}

\IEEEPARstart{I}{N} recent years, user engagement has become a hot research topic in network science, arising from a plethora of online social networking and social media applications, such as \textit{Web of Science Core Collection}, \textit{Facebook}, and \textit{Instagram}. 
Newman~\cite{newman2001clustering} studied the collaboration of users in a collaboration network, and found that the probability of collaboration between two users \TT{is} highly related to the number of common neighbors of the selected users. Kossinets and Watts~\cite{doi:10.1086/599247,Kossinets88} verified that two users who have numerous common friends are more likely to be friends by investigating a series of social networks. 
Cannistraci et al.~\cite{Cannistraci2013} presented that two social network users are more likely to become friends if their common neighbors are members of a local community, and the strength of their relationship relies on the number of their common neighbors in the community.
\TTao{Centola et al.~\cite{centola2010spread} stated that in the presence of high clustering (i.e., $k$-core), any additional adoption of messages is likely to produce more multiple exposures than in the case of low clustering. Each additional exposure significantly increases the chance of message adoption.}
\TTao{Weng et al.~\cite{weng2013virality} pointed out that people are more susceptible to the information from peers in the same community. 
This is because the people in the same community sharing similar characteristics naturally establish more edges among them.}
Moreover, Laishram et al.~\cite{DBLP:conf/sdm/LaishramSEPS20} mentioned that the incentives for keeping users' engagement on a social network platform partially \TT{depends} on how many friends they can keep in touch with. Once the users' incentives are low, they may leave the platform. The decreased engagement of one user may affect others' engagement incentives, further causing them to leave. 
Considering a model of user engagement in a social network platform, where the participation of each user is motivated by the number of engaged neighbors. The user engagement model is a natural equilibrium corresponding to the $k$-core of the social network, where $k$-core is a popular model to \TT{identify} the maximal subgraph in which every vertex has at least $k$ neighbors.
The leaving of some critical users may cause a cascading departure from the social network platform. Therefore, the efforts of user engagement studies~\cite{DBLP:conf/icalp/BhawalkarKLRS12,DBLP:conf/asunam/RossettiPKPGD15, DBLP:conf/cikm/MalliarosV13, DBLP:journals/siamdm/BhawalkarKLRS15, DBLP:journals/pvldb/ZhangZZQL17} have been devoted to finding the crucial (anchored) users who significantly impact the formation of social communities and the operations of social networking platforms. \TT{In particular, Bhawalkar et al.~\cite{DBLP:conf/icalp/BhawalkarKLRS12} first studied the problem of anchored $k$-core, aiming to retain (anchor) some users with incentives to ensure they will not leave the community modeled by $k$-core, such that the maximum number of users will further remain engaged in the community.}



\TT{The previous studies of anchored $k$-core~\cite{DBLP:conf/icalp/BhawalkarKLRS12,DBLP:journals/pvldb/ZhangZZQL17,DBLP:conf/sdm/LaishramSEPS20} for user engagement have benefited many real-life applications, such as revealing the evolution of the community's decay and expansion in social networks.}
However, most of the previous anchored $k$-core researches dedicated to user engagement depend on a strong assumption - social networks are modelled as static graphs. This simple premise rarely reflects the evolving nature of social networks, of which the topology often evolves over time in real world~\cite{DBLP:conf/sdm/ChenSHX15, DBLP:conf/kdd/LeskovecBKT08}. Therefore, for a given dynamic social network, the anchored users selected at an earlier time may not be appropriate to be used for user engagement in the following time due to the evolution of the network.

To better understand user engagement in evolving networks, one possible way is to re-calculate the anchored users after the network structure is dynamically changed. A natural question is how to select $l$ anchored users at each timestamp of an evolving social network, 
\TT{so that the community size will be maximum when we persuade these $l$ users to keep engaged in the community of each timestamps.}
We refer this problem as \textit{Anchored Vertex Tracking} (AVT), which aims to find a series of anchored vertex sets with each set size limited to $l$. 
\TT{In other words, under the above problem scenario, it requires performing the anchored $k$-core query at each timestamp of evolving networks.}
By solving the proposed AVT problem, we can efficiently track the anchored users to improve the effectiveness of user engagement in evolving networks. 

Tracking the anchored vertices could be very useful for many practical applications, such as sustainable analysis of social networks, impact analysis of advertising placement, and social recommendation. 
Taking the impact analysis of advertising placement as an example. Given a social network, the users' connection often evolves, which leads to the dynamic change of user influences and roles. The AVT study can continuously track the critical users to locate a set of users who favor propagating the advertisements at different times. In contrast, traditional user engagement methods like OLAK~\cite{DBLP:journals/pvldb/ZhangZZQL17} and RCM~\cite{DBLP:conf/sdm/LaishramSEPS20} only work well in static networks. Therefore, AVT can deliver timely support of services in many applications.  
Here, we utilize an example in Figure~\ref{fig:Motivation} to explain the AVT problem in details. 


\begin{figure}[t]
	\centering
		\includegraphics[scale=0.50]{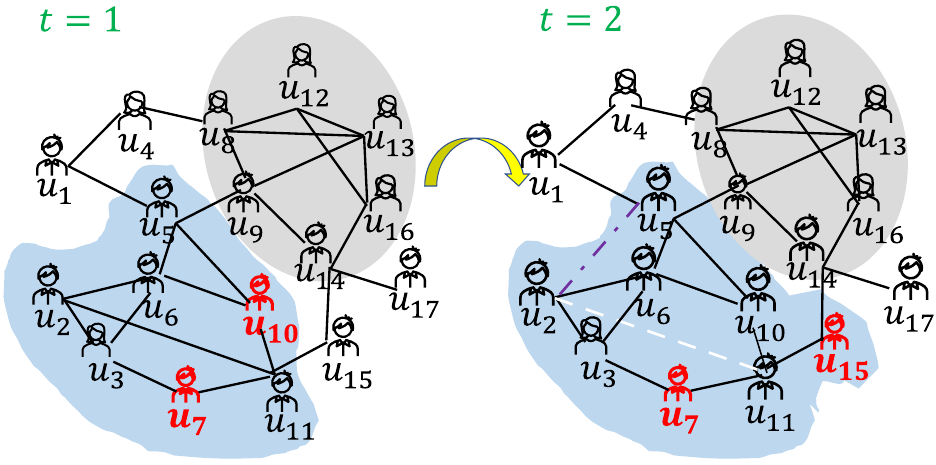}
		\caption{\TT{An example of Anchored Vertex Tracking (AVT).}}
		\vspace{-4mm}
	\label{fig:Motivation}
	\vspace{-3mm}
\end{figure}

\begin{myExmp}
\TT{Figure~\ref{fig:Motivation} presents a reading hobby community with 17 users and their friend relationships over two continuous periods. The number of a user's friends in the network reflects his willingness to engage. If one user has many friends (neighbors), the user would be willing to remain engaged in the community. Moreover, if a user leaves the community, it will weaken their friends' willingness to remain engaged in the community. 
According to the above engagement model with number of friends $k = 3$ (e.g., a user keep engaged in the group iff at least $3$ of his/her friends remaining engaged in the same community), $3$-core of the network at timestamp $t=1$ would be $\{u_8, u_9, u_{12},u_{14},u_{16}\}$ (covered by gray color). If we motivate users $\{u_7, u_{10}\}$ (e.g., red icons with friends less than 3) to keep engaged in the network at the timestamp $t=1$, then the users $\{u_2,u_3,u_5,u_6,u_{11}\}$ will remain engaged in the community because they have three friends in the reading hobby community now. Therefore, the number of $3$-core users would increase from 5 (gray) to 12 (gray \& blue). With the evolution of the network, at the timestamp $t=2$, a new relationship between users $u_2$ and $u_5$ is established (purple dotted line) while the relationship of users $u_2$ and $u_{11}$ is broken (white dotted line). Under this situation, the number of 3-core users will increase from $5$ to $14$ if we persuade users $\{u_7, u_{15}\}$ to keep the engagement in the community; However, the $3$-core users would only increase to $11$ once we motivate users $\{u_7,u_{10}\}$ to keep engaged. Therefore, the optimal users (called ``anchor") we selected to 
keep engaging may vary in different timestamps while the network evolves. 
}
\end{myExmp}

\noindent
\textbf{Challenges.} Considering the dynamic change of social networks and the scale of network data, it is infeasible to directly use the existing methods ~\cite{DBLP:journals/siamdm/BhawalkarKLRS15, DBLP:conf/aaai/ChitnisFG13, DBLP:journals/pvldb/ZhangZZQL17,DBLP:conf/sdm/LaishramSEPS20} \TT{of the anchored $k$-core problem} to compute the anchored user set for every timestamp. 
We prove that the AVT problem is NP-hard. To the best of our knowledge, there is no existing work to solve the AVT problem, particularly when the number of timestamps is large. 

To conquer the above challenges, we first develop a Greedy algorithm by extending the previous anchored $k$-core study in the static graph~\cite{DBLP:conf/icalp/BhawalkarKLRS12,DBLP:journals/pvldb/ZhangZZQL17}. However, the Greedy algorithm is expensive for large-scale social network data. Therefore, we optimize the Greedy algorithm in two aspects: (1) reducing the number of potential anchored vertices; and (2) accelerating computation of followers.
To further improve the efficiency, we also design an incremental algorithm by utilizing the smoothness of the network structure's evolution. 

\vspace{2mm}
\noindent
\textbf{Contributions.} We state our major contributions as follows:
\begin{itemize}
\item We formally define the problem of AVT and explain the motivation of solving the problem with real applications. 
\item We propose a Greedy algorithm by extending the core maintenance method in~\cite{DBLP:conf/icde/ZhangYZQ17} to tackle the AVT problem. Besides, we build several pruning strategies to accelerate the Greedy algorithm. 
\item We develop an efficient incremental algorithm by utilizing the smoothness of the network structure's evolution and the well-designed fast-updating core maintenance methods in evolving networks. 
\item We conduct extensive experiments to demonstrate the  efficiency and effectiveness of proposed approaches using real and synthetic datasets. 
\end{itemize}  

\noindent
\textbf{Organization.} 
We present the preliminaries in Section~\ref{sec:problem}. Section~\ref{sec:analysis} formally defines the AVT problem. We propose the Greedy algorithm in Section~\ref{sec:alg}, and further develop an incremental algorithm to solve the AVT problem more efficiently in Section~\ref{sec:improved}. The experimental results are reported in Section~\ref{sec:exp}. Finally, we review the related works in Section~\ref{sec:related}, and conclude the paper in Section~\ref{sec:Con}.

\section{Preliminaries} \label{sec:problem}

We define an \TT{undirected} evolving network as a sequence of graph snapshots  $\mathcal{G} = \{G_{t}\}_1^T$, and $\{1,2,.., T\}$ is a finite set of time points. We assume that the network snapshots in $\mathcal{G}$ share the same vertex set. Let $G_t$ represent the network snapshot at timestamp $t\in [1,T]$, where $V$ and $E_t$ are the vertex set and edge set of $G_t$, respectively. \TT{Similar to~\cite{DBLP:journals/tkde/JiaLDZGXZ21,das2019incremental}, we can create ``dummy" vertices at each time step $t$ to represent the case of vertices joining or leaving the network at time $t$ (e.g., $V = \cup_{t+1}^{T} V^t$ where $V^t$ is the set of vertices truly exist at $t$).} Besides, we set $nbr(u,G_t)$ as the set of vertices adjacent to vertex $u\in V$ in $G_t$, and the degree $d(u,G_t)$ represents the number of neighbors for $u$ in $G_t$, $i.e., |nbr(u,G_t)|$. 
Table~\ref{tab:symbol} summarizes the mathematical notations frequently used throughout this paper.

\subsection{Anchored $k$-core}
We first introduce \TT{the} notion of \textbf{$k$-core}, which has been widely used to describe the cohesiveness of subgraph. 
\begin{myDef}[\TT{k-core~\cite{DBLP:journals/corr/cs-DS-0310049}}]\label{def:k-core}
Given \TT{an undirected} graph $G_t$, the $k$-core of $G_t$ is the maximal subgraph in $G_t$, denoted by $C_k$, in which the degree of each vertex in $C_k$ is at least $k$. 
\end{myDef}

The $k$-core of a graph $G_t$, can be computed by repeatedly deleting all vertices (and their adjacent edges) with the degree less than $k$. The process of the above $k$-core computation is called \textbf{core decomposition}~\cite{DBLP:journals/corr/cs-DS-0310049}, which is described in Algorithm~\ref{alg:k-core}.

For a vertex $u$ in graph $G_t$, the \textbf{core number} of $u$, denoted as $core(u)$, is the maximum value of $k$ such that $u$ is contained in the $k$-core of $G_t$. Formally,

\begin{table}[t!]
  \begin{center}
    \caption{Notations Frequently Used in This Paper} 
    \label{tab:symbol}
    \vspace{-4mm}
    \scalebox{0.9}{
    \begin{tabular}{|p{1.5cm}<{}|p{5.5cm}<{}|}
    \hline
      \textbf{Notation}         & \textbf{Definition} \\
      \hline
       $\mathcal{G}$               & an \TT{undirected} evolving graph \\\hline
       $G_t$             & the snapshot graph of $\mathcal{G}$ at time instant $t$ \\\hline
       $V$; $E_t$                 & the vertex set and edge set of $G_t$ \\\hline 
       $nbr(u,G_t)$               & the set of adjacent vertices of $u$ in $G_t$ \\\hline
       $d(u,G_t)$                 & the degree of $u$ in $G_t$ \\\hline
       $deg^+(u)$              & the remaining degree of $u$ \\\hline
       $deg^-(u)$               & the candidate degree of $u$ \\\hline
       $C_k$                 & the $k$-core subgraph \\\hline
       $O(G_t)$                   & the $K$-order of $G_t$ where $O(G_t) = \{\mathcal{O}_1, \mathcal{O}_2,...\}$ \\\hline
       $C_k(\mathcal{S}_t)$  & the anchored $k$-core that anchored by $\mathcal{S}_t$ \\\hline
       $\mathcal{S}_t$          & the anchored vertex set of $G_t$ \\\hline
       $F_k(u,G_t)$              & followers of an anchored vertex $u$ in $G_t$ \\\hline
       $F_k(\mathcal{S}_t,G_t)$ & followers of an anchored vertex set $\mathcal{S}_t$ in $G_t$ \\\hline
       $E^+$; $E^-$               & the edges insertion and edges deletion from graph snapshots $G_{t-1}$ to $G_t$ \\\hline
       $mcd(u)$                  & the max core degree of $u$ \\\hline
    \end{tabular}
    }
  \end{center}
  \vspace{-6mm}
\end{table}

\begin{myDef}[Core Number]\label{def:corenumber}
Given \TT{an undirected} graph $G_t=(V,E_t)$, for a vertex $u\in V$, its core number, denoted as $core(u)$, is defined as $core(u,G_t)=max\{k: u\in C_k\}$.   
\end{myDef}

When the context is clear, we use $core(u)$ instead of $core(u,G_t)$ for the sake of concise presentation.
\begin{myExmp}
Consider the graph snapshot $G_1$ in Figure~\ref{fig:Motivation}. The subgraph $C_3$ induced by vertices $ \{u_8, u_9, u_{12}, u_{13}, u_{16}\}$ is the $3$-core of $G_1$. This is because every vertex in the induced subgraph has a degree at least $3$. Besides, there does not exist a $4$-core in $G_1$. Therefore, we have $core(v) = 3$ for each vertex $v\in C_3$. 
\end{myExmp}

If a vertex $u$ is \textbf{anchored}, in this work, it supposes that such vertex meets the requirement of $k$-core regardless of the degree constraint. The anchored vertex $u$ may lead to add more vertices into $C_k$ due to the contagious nature of \textit{$k$-core} computation. These vertices are called as \textbf{followers} of $u$.

\begin{myDef}[Followers]\label{def:followers}
Given \TT{an undirected} graph $G_t$ and an anchored vertex set $S_t$, the followers of $S_t$ in $G_t$, denoted as $F_k(S_t,G_t)$, are the vertices whose degrees become at least $k$ due to the selection of the anchored vertex set $S_t$.   
\end{myDef}

\begin{myDef}[\TT{Anchored $k$-core~\cite{DBLP:conf/icalp/BhawalkarKLRS12}}] \label{def:anchored}
Given \TT{an undirected} graph $G_t$ and an anchored vertex set $S_t$, the anchored $k$-core $C_k(S_t)$ consists of the $k$-core of $G_t$, $S_t$, and the followers of $S_t$. 
\end{myDef}

\begin{myExmp}
Consider the graph $G_1$ in Figure~\ref{fig:Motivation}, the $3$-core is $C_3= \{u_8, u_9, u_{12}, u_{13}, u_{16}\}$. If we give users $u_{7}$ and $u_{10}$ a special budget to join in $C_3$, the users $\{u_2, u_3, u_5, u_6,u_{11}\}$ could be brought into $C_3$ because they have no less than $3$ neighbors in $C_3$. Hence, the size of $C_3$ is enlarged from 16 to 23 with the consideration of $u_7$ and $u_{10}$ being the ``anchored'' vertices where the users $\{u_2, u_3, u_5, u_6,u_{11}\}$ are the ``followers" of anchored vertex set $S = \{u_7, u_{10}\}$. Also, the anchored $3$-core of $S$ would be $C_3(S) = \{u_2, u_3, u_5,.., u_{14}, u_{16}\}$.
\end{myExmp}

\begin{algorithm}[t]\footnotesize
  \caption{Core decomposition($G_t,k$)} \label{alg:k-core}
  $k \leftarrow 1$; \\
  \While{$V$ is not empty}{
	\While{exists $u \in V$ with $nbr(u,G_t) < k$}{ \label{line:k-coredecomposition}
	  $V\leftarrow V \setminus \{u\}$; \\
	  $core(u)\leftarrow k-1$; \\
	  \For{$w\in nbr(u,G_t)$}{
	  $nbr(w,G_t)\leftarrow nbr(w,G_t) - 1$;   \\
	  }
	}
	$k \leftarrow k+1$; \\	
  }	
  \Return{$core$;} \\
\end{algorithm}
\setlength{\textfloatsep}{0.2mm} 

\subsection{Problem Statement}

The traditional anchored $k$-core problem aims to explore anchored vertex set for static social networks. However, in real-world social networks, the network topology is almost always evolving over time. Therefore, the anchored vertex set, which maximizes the $k$-core size, should be constantly updated according to the dynamic changes of the social networks. 
In this paper, we model the evolving social network as a series of snapshot graphs $\mathcal{G} = \{G_{t}\}_1^T$. Our goal is to track a series of anchored vertex set $S=\{S_1,S_2,..,S_T\}$ that maximizes the k-core size at each snapshot graph $G_t$ where $t=1,2,..,T$. More formally, we formulate the above task as the \emph{Anchored Vertex Tracking} problem. 

\vspace{2mm}
\noindent
\textbf{Problem formulation: }
Given an \TT{undirected} evolving graph $\mathcal{G} = \{G_t\}^T_1$, the parameter $k$, and an integer $l$, the problem of \textit{anchored vertex tracking (AVT)} in $\mathcal{G}$ aims to discover a series of anchored vertex set $\mathcal{S}=\{S_t\}_1^T$ , satisfying
\begin{equation}
S_t = \arg\max_{|S_t|\leq l} |\mathcal{C}_k(S_t)|
\end{equation}
where $t\in [1,T]$, and $S_t\subseteq V$.

\begin{myExmp}
\TT{In Figure~\ref{fig:Motivation}, if we set $k = 3$ and $l = 2$, the result of the anchored vertex tracking problem can be $\mathcal{S} = \{S_1, S_2, ...\}$ with $S_1 = \{u_7, u_{10}\}$, $S_2 = \{u_7, u_{15}\}$. Besides, the related anchored $k$-core of snapshot graph $G_1$ and $G_2$ would be $\mathcal{C}_k(S_1) = \{u_2,u_3, u_5, u_6,..,u_{13},u_{16}\}$ and $\mathcal{C}_k(S_2) = \{u_2,u_3,u_5,u_6,..,u_{16}\}$, respectively.} 
\end{myExmp}

\section{Problem analysis}\label{sec:analysis}
In this section, we discuss the problem complexity of AVT. In particular, we will verify that the AVT problem can be solved exactly while $k = 1$ and $k = 2$ but become intractable for $k \geq 3$.

\begin{myTheo} \label{theo:problem_complexity} \label{Def:NPhard}
Given an \TT{undirected} evolving general graph $\mathcal{G} = \{G_t\}^T_1$, the problem of AVT is NP-hard when $k \geq 3$. 
\end{myTheo}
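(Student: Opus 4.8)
The plan is to establish NP-hardness of AVT by a reduction from the classical anchored $k$-core problem in static graphs, which is already known to be NP-hard for $k \geq 3$ by Bhawalkar et al.~\cite{DBLP:conf/icalp/BhawalkarKLRS12,DBLP:journals/siamdm/BhawalkarKLRS15}. The key observation is that AVT is a strict generalization: a static instance is simply an evolving instance with $T = 1$. First I would recall the static anchored $k$-core decision problem: given a graph $G$, integers $k$ and $b$ (budget), and a target $p$, decide whether there exists an anchor set $S$ with $|S| \leq b$ such that anchoring $S$ brings at least $p$ vertices into the $k$-core. Given such an instance, I construct the evolving graph $\mathcal{G} = \{G_t\}_1^T$ with $T = 1$ and $G_1 = G$, set the AVT budget $l := b$, and keep the same $k$. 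Then I would argue that a solution $\mathcal{S} = \{S_1\}$ to AVT on this instance with $|\mathcal{C}_k(S_1)|$ maximal directly yields the optimal anchor set for the static problem (and conversely), so the decision version of AVT answers the decision version of static anchored $k$-core. Since the reduction is clearly polynomial (it is essentially the identity map), and AVT is in NP for the decision version because a candidate sequence $\mathcal{S}$ can be verified by running core decomposition (Algorithm~\ref{alg:k-core}) at each timestamp in polynomial time, NP-hardness follows.

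An alternative, if one wants a self-contained proof rather than citing the static hardness result, is to reduce directly from a known NP-hard problem such as Set Cover or Maximum Coverage, mimicking the gadget construction used in the static anchored $k$-core hardness proof but instantiated at a single timestamp. In that case I would build, for a Set Cover instance with universe $\mathcal{U}$ and sets $\mathcal{F}$, a graph with a vertex gadget for each set and for each element, wired so that an element-vertex can only reach degree $k$ in the anchored core if one of the set-vertices covering it is anchored; then choosing $l$ anchors that maximize the anchored $k$-core size corresponds exactly to choosing $l$ sets covering the maximum number of elements. The degree-padding gadgets that force the threshold behavior are exactly where $k \geq 3$ is needed — for $k = 1, 2$ the core structure is too simple to encode the combinatorial constraint, which is consistent with the paper's claim that those cases are tractable.

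The main obstacle is not the logical structure of the reduction, which is routine once we invoke or replicate the static result, but rather making sure the reduction is faithful in the details of the AVT model as defined here: in particular, the paper stipulates that all snapshots share a common vertex set $V$, and that $\mathcal{C}_k(S_t)$ includes the (unanchored) $k$-core of $G_t$ together with $S_t$ and the followers $F_k(S_t, G_t)$. I would need to check that maximizing $|\mathcal{C}_k(S_t)|$ is equivalent to maximizing the number of followers (since the base $k$-core is fixed and $|S_t| \leq l$ adds a bounded additive term), so that the AVT objective at a single timestamp coincides with the static anchored $k$-core objective up to an additive constant — this makes the optimization-to-optimization correspondence exact, and hence the decision-to-decision correspondence clean. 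With $T=1$ this is immediate; the only care needed is the bookkeeping that anchored vertices not already in the $k$-core contribute at most $l$ extra vertices, which does not affect which anchor set is optimal.

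Finally, I would note for completeness (matching the remark preceding the theorem) that the accompanying tractability claim for $k \in \{1,2\}$ would be handled separately: for $k=1$ the $1$-core is just the set of non-isolated vertices and anchoring is trivially greedy-optimal, and for $k=2$ one can exploit the fact that $2$-core membership is determined by lying on a cycle or a path between high-degree regions, which admits a polynomial-time algorithm; but since Theorem~\ref{theo:problem_complexity} only asserts the hardness direction, the proof proper only needs the reduction above.
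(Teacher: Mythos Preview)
Your proposal is correct and follows essentially the same approach as the paper: both observe that AVT at a single timestamp coincides with the static anchored $k$-core problem, so NP-hardness of the latter (for $k\geq 3$) immediately transfers to AVT. The only difference is presentational---the paper chooses to reproduce Bhawalkar et al.'s Set Cover reduction in full (including the $d$-ary tree gadget to lift the set-size restriction), which is exactly the ``alternative'' you sketch, whereas your primary route simply cites that result; both are valid and the paper also handles the $k\in\{1,2\}$ tractability cases within the same proof, as you anticipated in your final paragraph.
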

\begin{proof}
(1) When $k = 1$ and $t\in [1,T]$, the followers of any selected anchored vertex would be empty. Therefore, we can randomly select $l$ vertices from $\{G_t \setminus C_1\}$ as the anchored vertex set of $G_t$ where $G_t$ is the snapshot graph of $\mathcal{G}$ and $C_1$ is the $1$-core of $G_t$. 
Besides, the time complexity of computing the set of $\{G_t \setminus C_1\}$ from snapshot graph $G_t$ is $\mathcal{O}(|V| +|E_t|)$. 
Thus, the AVT problem is solvable in polynomial time with the time complexity of $\mathcal{O}(\sum_{t=1}^T(|V| + |E_t|))$ while $k = 1$.   

\vspace{1mm}
(2) When $k = 2$ and $t\in [1,T]$, we note that the AVT problem can be solved by repeatedly answering the anchored $2$-core at each snapshot graph $G_t\in \mathcal{G}$. Besides, Bhawalkar et al.~\cite{DBLP:conf/icalp/BhawalkarKLRS12} proposed an exactly \textit{Linear-Time Implementation} algorithm to solve the anchored $2$-core problem in the snapshot graph $G_t$ with time complexity $\mathcal{O}(|E_t| + |V|log|V|)$. From the above, we can conclude that there is an implementation of the algorithm to answer the AVT problem by running in time complexity $\mathcal{O}(\sum_{t=1}^T(|E_t| + |V|log|V|))$. Therefore, the AVT problem is solvable in polynomial time while $k = 2$.   


\vspace{1mm}
(3) When $k \geq 3$ and $t\in [1,T]$, we first note that the anchored vertex tracking problem is equivalent to a set of anchored $k$-core problems at snapshot graphs $G_t\in \mathcal{G}$. 
Thus, we can conclude that the anchored vertex tracking problem is NP-hard once the anchored $k$-core problem is NP-hard.

Next, we prove the problem of anchored $k$-core at each snapshot graph $G_t \in \mathcal{G}$ is NP-hard, by reducing the anchored $k$-core problem to the \textit{Set Cover} problem~\cite{DBLP:conf/coco/Karp72}.
Given a fix instance $l$ of set cover with $s$ sets $S_1, .., S_s$ and $n$ elements $\{e_1,.., e_n\} = \bigcup_{i=1}^s S_i$, we first give the construction only for instance of set cover such that for all $i$, $|S_i|\leq k-1$. 
\TT{In the following}, we construct a corresponding instance of the anchored $k$-core problem in $G_t$ by lifting the above restriction while still obtaining the same results.

Considering $G_t$ contains a set of nodes $V = \{u_1,...,u_n\}$ which is associated with a collection of subsets $\mathcal{S} = \{S_1,..., S_s\}$, $S_i\subseteq V$. We construct an arbitrarily large graph $G'$, where each vertex in $G'$ has degree $k$ except for a single vertex $v(G')$ that has degree $k-1$. Then, we set $H = \{G'_1, ..., G'_m\}$ as the set of $n$ connected components $G'_j$ of $G'$, where $G'_j$ is associated with an element $e_j$. When $e_j \in S_i$, there is an edge between $u_i$ and $v(G'_j)$. 
Based on the definition of $k$-core in Definition~\ref{def:k-core}, once there exists $i$ such that $u_i$ is the neighbor of $v(G'_j)$, then all vertices in $G'_j$ will remain in $k$-core. Therefore, if there exists a set cover $C$ with size $l$, we can set $l$ anchors from $u_i$ while $S_i\in C$ for each $i$, and then all vertices in $H$ will be the member of $k$-core. Since we are assuming that $|S_i|< k$ for all sets, each vertex $u_i$ will not in the subgraph of $k$-core unless $u_i$ is anchored. Thus, we must anchor some vertex adjacent to $v(G'_j)$ for each $G'_j\in G'$,  
which corresponds precisely to a set cover of size $l$.
From the above, we can conclude that for instances of set cover with maximum set size at most $k-1$, there is a set cover of size $l$ if and only if there exists an assignment in the corresponding anchored $k$-core instance using only $l$ anchored vertices such that all vertices in $H$ keep in $k$-core. Hence, the remaining question of reducing the anchored $k$-core problem to the \textit{Set Cover} problem is to lift the restriction on the maximum set size, i.e. $|S_i|\leq k-1$. Bhawalkar et al.~\cite{DBLP:conf/icalp/BhawalkarKLRS12} proposed a $d$-ary tree (defined as $tree(d,y)$) method to lift this restriction. Specifically, to lift the restriction on the maximum set size, they use $tree(k-1, |S_i|)$ to replace each instance of $u_i$. Besides, if $y_1,..., y_{|S_i|}$ are the leaves of the $d$-ary tree, then the pairs of vertices $(y_j, u_j)$ will be constructed for each $u_j\in S_i$.

Since the \textit{Set Cover} problem is NP-hard, we prove that the anchored $k$-core problem is NP-hard for $k\geq 3$, and so is the anchored vertex tracking problem. 
\end{proof}

We then consider the inapproximability of the anchored vertex tracking problem.

\begin{myTheo} \label{theo:approximate}
For $k \geq 3$ and any positive constant $\epsilon > 0$, there does not exist a polynomial time algorithm to find an approximate solution of AVT problem within an $\mathcal{O}(n^{1-\epsilon})$ multiplicative factor of the optimal solution in general graph, unless P = NP.
\end{myTheo}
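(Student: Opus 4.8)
The plan is to establish inapproximability of AVT by a reduction from a hard-to-approximate problem, leveraging the fact (proved in Theorem~\ref{theo:problem_complexity}) that AVT restricted to a single snapshot is precisely the anchored $k$-core problem. Since the anchored $k$-core reduction in Theorem~\ref{theo:problem_complexity} goes through Set Cover, and Set Cover does not have an $\mathcal{O}(n^{1-\epsilon})$-inapproximability bound, the cleaner route is to reduce instead from \emph{Maximum Independent Set} (or equivalently Maximum Clique), which is known not to be approximable within $\mathcal{O}(n^{1-\epsilon})$ unless P $=$ NP. First I would take an instance $H=(V_H,E_H)$ of Maximum Independent Set with $|V_H|=n$, and build a single graph $G$ for one timestamp (the remaining $T-1$ snapshots can be copies of a trivial graph with no anchor benefit, so the AVT optimum equals the anchored $k$-core optimum on $G$). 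The gadget construction should attach to each vertex $v\in V_H$ a private component that already sits ``just below'' the $k$-core threshold, so that anchoring $v$ pulls in a large block of followers of size polynomial in $n$, but only if $v$'s neighbors in $H$ are \emph{not} simultaneously anchored — encoding the independent-set constraint through shared vertices whose degrees cannot be satisfied twice.

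Concretely, the key steps in order: (i) for each $v\in V_H$ create a large ``reward gadget'' $R_v$ — an arbitrarily large near-$k$-core block (every vertex has degree $k$ except one sink of degree $k-1$, as in the proof above) whose activation contributes $N$ followers, where $N \gg n$ is chosen so that the total follower count is dominated by the number of activated gadgets; (ii) wire the gadgets so that activating the reward block $R_v$ requires anchoring $v$, but make the edges of $H$ into ``conflict'' gadgets so that if both endpoints of an edge $uv\in E_H$ are anchored, a shared threshold vertex fails and neither $R_u$ nor $R_v$ activates (one way: route both $R_u$ and $R_v$ through a common cut-vertex that has exactly $k-1$ other neighbors, so it survives iff exactly one of $u,v$ anchors it); (iii) set the budget $l$ to the largest independent-set size we wish to certify, and argue that the anchored $k$-core is maximized precisely by anchoring an independent set, with $|C_k(S)| = \Theta(N\cdot\alpha(H))$ where $\alpha(H)$ is the independence number; (iv) transfer the $n^{1-\epsilon}$ gap: since $|V(G)| = \mathrm{poly}(n)$ and the objective scales linearly with $\alpha(H)$, a hypothetical $\mathcal{O}(|V(G)|^{1-\epsilon'})$-approximation for AVT would yield an $\mathcal{O}(n^{1-\epsilon})$-approximation for Maximum Independent Set, contradicting the known hardness unless P $=$ NP; (v) finally, handle the restriction that conflict gadgets as described only work cleanly when vertex degrees are small, by reusing the $tree(k-1,\cdot)$ $d$-ary tree trick from Theorem~\ref{theo:problem_complexity} to replace high-degree vertices and lift that restriction.

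The main obstacle I anticipate is step (ii): designing the conflict gadget so that anchoring \emph{both} endpoints of an edge genuinely \emph{destroys} rather than \emph{enhances} the follower contribution. Anchoring is monotone-friendly in the naive sense (an anchored vertex always satisfies its own constraint), so we cannot make extra anchors directly harmful to a vertex's own status; the penalty must instead be engineered through a shared bottleneck whose \emph{followers} are lost — e.g., a gadget vertex $w$ that can be ``claimed'' as a follower by at most one of the two reward blocks, so that double-anchoring forces $w$ (and the block behind it) out of the core while single-anchoring keeps the whole block. Getting the counting tight enough that the AVT objective is a faithful affine function of $\alpha(H)$ — with error terms $o(N)$ so the multiplicative gap is preserved — is the delicate part, and it is where most of the technical care will go; the rest (choosing $N$, propagating the gap, citing the Håstad/Zuckerman $n^{1-\epsilon}$ hardness for Independent Set, and applying the $d$-ary tree reduction) is routine once the gadget is correct.
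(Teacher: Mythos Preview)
Your diagnosis that Set Cover lacks an $n^{1-\epsilon}$ inapproximability bound is correct, but the reroute through Maximum Independent Set has a fatal structural obstruction, not merely a delicate one. The anchored $k$-core objective is monotone in the anchor set: whenever $S\subseteq S'$, every vertex of $C_k(S)\setminus S'$ keeps all of its $C_k(S)$-neighbours and hence still has degree at least $k$ there, so $C_k(S)\subseteq C_k(S')$. Consequently no vertex can be a follower under $\{u\}$ and fail to be one under $\{u,v\}$; your step~(ii) --- a shared bottleneck $w$ that is ``forced out of the core'' by double-anchoring, or any gadget in which anchoring both endpoints of an edge \emph{destroys} follower mass --- is impossible in principle, not just hard to tune. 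The most you can engineer is that adjacent anchors are \emph{redundant} (their follower sets overlap), but redundancy encodes Maximum Coverage rather than Independent Set and yields only constant-factor gaps. The $d$-ary tree trick of step~(v) does nothing here, since the obstruction is monotonicity of the objective, not a degree bound.

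The paper takes exactly the route you rejected: it reuses the Set Cover reduction of Theorem~\ref{Def:NPhard} and asserts the inapproximability transfers. Your skepticism about the one-line justification is warranted --- Feige's bound is $(1-o(1))\ln n$, not $n^{1-\epsilon}$ --- but what actually produces the $n^{1-\epsilon}$ gap in the Bhawalkar et al.\ argument the paper is leaning on is not any inapproximability of Set Cover per se. It is that the element gadgets in the reduction are ``arbitrarily large'' and fragile (every internal vertex has degree exactly $k$), so a \emph{yes} instance of the Set Cover \emph{decision} problem yields an anchored $k$-core of size $N$, while a \emph{no} instance collapses the gadget and leaves only $\mathrm{poly}(s,n)$ survivors; taking $N$ polynomially large manufactures the $n^{1-\epsilon}$ gap from plain NP-hardness of the decision problem. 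If you want a correct argument, the amplification has to come from this all-or-nothing cascade on the covering side --- a packing-style conflict gadget cannot exist for a monotone objective.
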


\begin{proof}
We have reduced the anchored vertex tracking (AVT) problem from the \textit{Set Cover} problem in the proof of Theorem~\ref{Def:NPhard}. Here, we show that this reduction can also prove the inapproximability of AVT problem. For any $\epsilon > 0$, the Set Cover problem cannot be approximated in polynomial time within $(n^{1-\epsilon}) -$ ratio, unless $P = NP$~\cite{DBLP:journals/jacm/Feige98}. Based on the previous reduction in Theorem~\ref{Def:NPhard}, every solution of the AVT problem in the instance graph $G$ corresponds to a solution of the \textit{Set Cover} problem. 
Therefore, it is NP-hard to approximate anchored vertex tracking problem on general graphs within a ratio of $(n^{1-\epsilon})$ when $k \geq 3$. 
\end{proof}

\section{The Greedy Algorithm} \label{sec:alg}
Considering the NP-hardness and inapproximability of the AVT problem, we first resort to developing a Greedy algorithm to solve the AVT problem. Algorithm~\ref{alg:basic1} summzrizes the major steps of the Greedy algorithm. 
The core idea of our Greedy algorithm is to iteratively find the $l$ number of best anchored vertices which have the largest number of followers in each snapshot graph $G_t\in \mathcal{G}$  (Lines~\ref{R3:alg2:0}-\ref{R3:alg2:4}). For each $G_t\in \mathcal{G}$ where $t$ is in the range of $[1,T]$ (Line~\ref{R3:alg2:0}), in order to find the best anchored vertex in each of the $l$ iterations (Lines~\ref{R3:alg2:1}), we compute the followers of every candidate anchored vertex by using the \textit{core decomposition} process mentioned in Algorithm~\ref{alg:k-core} (Lines~\ref{R3:alg2:2}-\ref{R3:alg2:3}). 
Specifically, considering the $k$-core $C_k$ of $G_t$, if a vertex $u$ is anchored, then the core decomposition process repeatedly deletes all vertices (except $u$) of $G_t$ with the degree less than $k$. Thus, the remaining vertices that do not belong to $C_k$ will be the followers of $u$ with regard to the $k$-core. In other words, these followers will become the new $k$-core members due to the anchored vertex selection.  
From the above process of the Greedy algorithm, we can see that every vertex will be the candidate anchored vertex in each snapshot graph $G_t = (V, E_t)$, and every edge will be accessed in the graph during the process of core decomposition. Hence, the time complexity of the Greedy algorithm is $\mathcal{O}(\sum_{t=1}^T l\cdot |V| \cdot |E_t|)$.


\begin{algorithm}[!t] \footnotesize
	\caption{\textit{The Greedy Algorithm}} \label{alg:basic1}
	\KwIn{$\mathcal{G}=\{G_t\}^T_1:$ an evolving graph, $l$: the allocated size of anchored vertex set, and $k$: degree constraint}
	\KwOut{ $\mathcal{S}=\{S_t\}_1^T:$ the series of anchored vertex sets} 
	\BlankLine
	$\mathcal{S}\leftarrow \emptyset$; \\ 
	\For {each $t\in [1,T]$}{ \label{R3:alg2:0}
	    $i\leftarrow 0$; $S_t\leftarrow \emptyset$ \\
	    \While{$i< l$}{ \label{R3:alg2:1}
	          \textcolor{black}{/* \ \ Candidate Anchored Vertex \ \ \ \ \ */} \\
              \For{each $u\in V$}{ \label{R3:alg2:2}
              \textcolor{black}{/*\ \ Computing Followers \ \ \ \ \ \ \ \ \ \ */ } \\
                  \textcolor{black}{ Compute $F_k(u, G_t)$}; \\ \label{R3:alg2:3}
              }
              $u'\leftarrow$ the best anchored vertex in this iteration; \\
              $S_t\leftarrow S_t\cup u'$; $i\leftarrow i+1$; \\ 
         }
       $\mathcal{S}\leftarrow \mathcal{S}\cup S_t$; \\ \label{R3:alg2:4}
     }
\Return{$\mathcal{S}$}	
\end{algorithm}

Since the Greedy algorithm's time complexity is cost-prohibitive, we need to accelerate this algorithm from two aspects: (i) reducing the number of potential anchored vertices; and (ii) accelerating the followers' computation with a given anchored vertex.

\subsection{Reducing Potential Anchored Vertices} \label{subsec:Base2}
In order to reduce the potential anchored vertices, we present the below definition and theorem to identify the quality anchored vertex candidates. 

\begin{myDef}[$K$-order \cite{DBLP:conf/icde/ZhangYZQ17}]\label{K-order}
	Given two vertices $u, v\in V$, the relationship $\preceq$ in $K$-order index holds $u \preceq v$ in either $core(u) < core(v)$; or $core(u) = core(v)$ and $u$ is removed before $v$ in the process of core decomposition. 
\end{myDef}

\begin{figure}[h]
	\centering
		\includegraphics[scale=0.5]{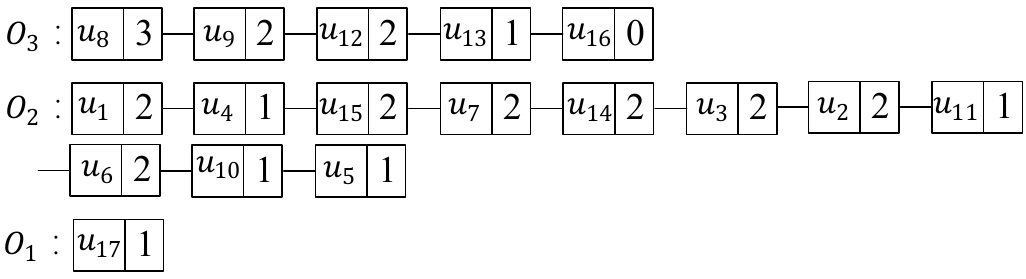}
		\vspace{-2mm}
		\caption{The $K$-order $O$ of graph $G_1$ in Figure~\ref{fig:Motivation}}
	\label{fig:K-order}
	\vspace{-2mm}
\end{figure}

Figure~\ref{fig:K-order} shows a $K$-order index $O = \{\mathcal{O}_1, \mathcal{O}_2, \mathcal{O}_3\}$ of graph snapshot $G_1$ in Figure~\ref{fig:Motivation}. The vertex sequence $\mathcal{O}_k \in O$ records all vertices in $k$-core by following the removing order of core decomposition, \textit{i.e., $\mathcal{O}_2$ records all vertices in 2-core and vertex $u_1$ is removed early than vertex $u_4$ during the process of core decomposition in $G_1$}. 

\begin{myTheo} \label{R3:theo1}
Given a graph snapshot $G_t$, a vertex $x$ can become an anchored vertex candidate if $x$ has at least one neighbor vertex $v$ in $G_t$ that satisfies: the neighbor vertex's core number must be k-1 (i.e., $core(v)=k-1$), and $x$ is positioned before the neighbor node $v$ in K-order (i.e., $x\preceq v$).
\end{myTheo}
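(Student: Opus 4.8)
The plan is to prove the contrapositive: if a vertex $x$ has \emph{no} neighbor $v$ with $core(v) = k-1$ that also satisfies $x \preceq v$, then anchoring $x$ produces no followers, i.e., $F_k(x,G_t) = \emptyset$, and hence $x$ need not be considered as an anchor candidate. The key observation is that a follower $w$ of $x$ is a vertex with $core(w) < k$ that gets ``promoted'' into the $k$-core once $x$ is exempted from the degree constraint. For $w$ to survive the core decomposition after anchoring $x$, the promotion effect must propagate from $x$ to $w$ along a path of vertices that all get promoted; in particular, the first vertex on such a propagation chain must be a neighbor of $x$ whose core number is exactly $k-1$ (a vertex with core number $< k-1$ cannot be rescued merely by having one extra ``virtually high-degree'' neighbor, since it is missing at least two; a vertex already in the $k$-core is not a follower).

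First I would set up the core-decomposition view: run the peeling process that removes vertices of current degree $< k$, but treat $x$ as undeletable. The followers are exactly the non-$k$-core vertices that are never removed. I would then argue that any follower $w$ has, at the moment just before it \emph{would} have been removed in the unanchored decomposition, a ``deficiency'' of exactly one — that is, $core(w) = k-1$ — OR it is rescued transitively through other followers; unwinding this recursion, there is a first follower $v$ on the chain from $x$, and $v$ must be a direct neighbor of $x$ with $core(v) = k-1$. Next I would bring in the $K$-order: using Definition~\ref{K-order}, among vertices of core number $k-1$, the ones removed earliest are precisely those that, in the anchored decomposition, lose a neighbor to deletion before $x$ can compensate. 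The claim $x \preceq v$ captures that $x$ is still ``present'' (has not been peeled at a lower level, and within level $k-1$ comes no later than $v$) at the time $v$'s fate is decided, so that the extra edge to the anchored $x$ genuinely counts toward $v$'s degree. If every core-$(k-1)$ neighbor $v$ of $x$ satisfies $v \prec x$ instead, then each such $v$ is already settled (removed) before $x$'s anchoring takes effect, so anchoring $x$ changes nothing downstream and $F_k(x,G_t) = \emptyset$.

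The main obstacle I anticipate is making the $K$-order argument precise in the transitive case: it is easy to handle a direct follower that is a neighbor of $x$, but a follower $w$ far from $x$ is rescued only because intermediate followers are rescued, and one must show the $K$-order condition at the \emph{first} link of that chain is both necessary and correctly stated as $x \preceq v$ with $core(v) = k-1$. I would handle this by induction on the position in the $K$-order (equivalently, on the removal time in the original core decomposition): process vertices in $K$-order, maintain for each vertex a count of neighbors that are either in the $k$-core or already established as followers (this is essentially the $deg^+$ / $deg^-$ bookkeeping from Table~\ref{tab:symbol}), and observe that the anchoring of $x$ can only ever increase such a count for a vertex appearing \emph{after} $x$ in the $K$-order, with the base case of the induction being exactly a core-$(k-1)$ neighbor $v$ with $x \preceq v$. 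The monotonicity of the $K$-order under this peeling process, already established in~\cite{DBLP:conf/icde/ZhangYZQ17}, is the technical lever that keeps the induction sound.
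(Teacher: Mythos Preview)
Your proposal is correct and follows essentially the same two-pronged contrapositive as the paper: (i) only vertices with core number $k-1$ can be followers, and (ii) a neighbor $v$ with $v \preceq x$ cannot be influenced by anchoring $x$ because it is peeled before $x$'s anchoring matters. The paper's proof is considerably shorter because it simply \emph{cites}~\cite{DBLP:journals/pvldb/ZhangZZQL17} for fact (i) rather than arguing it from the peeling semantics as you do, and it dispatches (ii) in one line (``$v$ will be deleted prior to $x$''). Where your proposal actually goes further than the paper is the transitive case: you correctly worry that a follower $w$ need not be a \emph{neighbor} of $x$, and you propose an induction along the $K$-order to show that the propagation chain must nonetheless begin at a core-$(k{-}1)$ neighbor $v$ with $x \preceq v$. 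The paper's proof glosses over this point, jumping from ``no qualifying neighbor'' to ``no followers at all'' without spelling out why the absence of a first link kills the whole chain. So your route is the same in spirit but more self-contained and more careful on exactly the step you flagged as the main obstacle; the paper buys brevity by leaning on prior work and leaving the transitive argument implicit.
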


\begin{proof}
We prove the correctness of this theorem by contradiction. If $v\preceq x$ in the $K$-order of $G_t$, then $v$ will be deleted prior to $x$ in the process of core decomposition in Algorithm~\ref{alg:k-core}. In other words, anchoring $x$ will not influence the core number of $v$. Therefore, $v$ is not the follower of $x$ when $v\preceq x$. On the other hand, it is already proved in~\cite{DBLP:journals/pvldb/ZhangZZQL17} that only vertices with core number $k-1$ may be the follower of an anchored vertex. If no neighbor of vertex $x$ has core number $k-1$, then anchoring $x$ will not bring any followers, which is contradicted with the definition of the anchored vertex.
From above analysis, we can conclude that the candidate anchored vertex only comes from the vertex $x$ which has at least one neighbor $v$ with core number $k-1$ and behind $x$ in $K$-order, i.e., $\{x\in V|\exists v\in nbr(x,G_t)\wedge core(v)=k-1\wedge x\preceq v\}$. Hence, the theorem is proved. 
\end{proof}

According to Theorem~\ref{R3:theo1}, the anchored vertex candidates will be probed only from the vertices that can bring some followers into the k-core. This also meets the requirement of anchored k-core in Definition~\ref{def:anchored}. Thus, the size of potential anchored vertices at each snapshot graph $G_t$ can be significantly reduced from $|V|$ to $|\{x\in V|\exists v\in nbr(x,G_t)\wedge core(v)=k-1\wedge x\preceq v\}|$.

\begin{myExmp}
Given the graph $G_1$ in Figure~\ref{fig:Motivation} and $k =3$, $u_{15}$ can be selected as an anchored vertex candidate because
anchoring $u_{15}$ would bring the set of followers, $\{u_{14}\}$, into the anchored 3-core. 
\end{myExmp}

\subsection{\TT{Accelerating} Followers Computation} \label{sec:greedy_acc}
To accelerate the computation of followers, a feasible way is to transform the followers' computation into the \textbf{core maintenance} problem~\cite{DBLP:journals/tkde/LiYM14,DBLP:conf/icde/ZhangYZQ17}, which 
aims to maintain the core number of vertices in a graph when the graph changes. The above problem transformation is based on an observation: 
given an anchored vertex $u$, its followers' core number can be increased to $k$ value if $core(u)$ is treated as infinite according to the concept of anchored node. 

Therefore, we modify the state-of-the-art core maintenance algorithm, \textit{OrderInsert}~\cite{DBLP:conf/icde/ZhangYZQ17}, to compute the followers of an anchored vertex $u$ in snapshot graph $G_t$. Explicitly, we first build the $K$-order of $G_t$ using \textit{core decomposition} method described in Algorithm~\ref{alg:k-core}. For each anchored vertex candidate $u$, we set the core number of $u$ as infinite and denote the set of its followers as $V^*$ initialized to be empty. 
After that, we iteratively update the core number of $u$'s neighbours and other affected vertices by using the \textit{OrderInsert} algorithm, and record the vertices with core number increasing to $k$ in $V^*$. 
Finally, we output $V^*$ as the follower set of $u$. 

Besides, we introduce two notations, \textbf{remaining degree} (denoted as $deg^+()$) and \textbf{candidate degree} (denoted as $deg^-()$), to depict more details of the above followers' computation method. Specifically, for a vertex $u$ in snapshot graph $G_t$ where $core(u) = k-1$, $deg^+(u)$ is the number of remaining neighbors when $u$ is removing during the process of core decomposition, \textit{i.e., $deg^+(u)= |{v\in nbr(u,G_t): u \preceq v}|$}. And $deg^-(u)$ records the number of $u$'s neighbors $v$ included in $\mathcal{O}_{k-1}$ but appearing before $u$ in $\mathcal{O}_{k-1}$, and $v$ is in followers set $V^*$, \textit{i.e., $deg^-(u) = |\{v \in nbr(u,G_t): v\preceq u \wedge core(v) = k-1\wedge v\in V^*\}|$}. Since, $deg^+(u)$ records the number of $u$'s neighbors after $u$ in the $K$-order having core numbers larger than or equal to $k-1$, $deg^+(u) + deg^-(u)$ is the upper bound of $u$'s neighbors in the new $k$-core. Therefore, all vertices $s$ in follower set $V^*$ must have $deg^+(s) + deg^-(s)\geq k$.  

\begin{algorithm}[!t] \footnotesize
	\caption{\textit{ComputeFollower}($G_t$, $u$, $\mathcal{O}(G_t)$) } \label{alg:basic2}
	\BlankLine
    
    $K$-order $O(G_t) = \{\mathcal{O}_1, \mathcal{O}_2,..., \mathcal{O}_{max}\}$ \\ \label{alg2:inp}
    $F_k(u, G_t)\leftarrow \emptyset$; \\  \label{alg2:0}
	\For{$v\in nbr(u,G_t)$}{ \label{alg2:1}
	    $deg^-(.)\leftarrow 0$; $V^*\leftarrow \emptyset$; \\ 
	    \textcolor{black}{/* Core phase of the \textit{OrderInsert} algorithm~\cite{DBLP:conf/icde/ZhangYZQ17} */} \\
	    \If{$core(v)= k-1 \ \& \ u\preceq v$}{  \label{alg2:2}
	       $deg^+(v)\leftarrow deg^+(v)+1$ \\
	       \If{$deg^+(v)+ deg^-(v)> k-1$}{
              remove $v$ from $\mathcal{O}_{k-1}$ and append it to $V^*$; \\
              \For{ $w\in nbr(v) \wedge w\in \mathcal{O}_{k-1} \wedge v \preceq w $}{
                  $deg^-(w) \leftarrow deg^-(w) + 1$; \\
            }
              Visit the vertex next to $v$ in $\mathcal{O}_{k-1}$; \\  	       
	       }
	       \Else{
	       \If{$deg^-(v) = 0$} {
	           Visit the vertex next to $v$ in $\mathcal{O}_{k-1}$; \\  	
	       }
	       \Else{
	          \For{each $w\in nbr(v) \wedge w\in V^*$}{
	              \If{$deg^+(w) + deg^-(w) < k$}{
	                  remove $w$ from $V^*$; \\
	                  update $deg^+(w)$ and $deg^-(w)$; \\
	                  $nbr(v) \leftarrow nbr(v) \cup nbr(w)$; \\
	                  Insert $w$ next to $v$ in $\mathcal{O}_{k-1}$; \\ 
	              }
	          }
              Visit the vertex with $deg^-(.) = 0$ and next to $v$ in $\mathcal{O}_{k-1}$; \\
	       }
	       }
	    }
	    \Else{
           Continue; \\	   \label{alg2:3}   
	    }
	    Insert vertices in $V^*$ to the beginning of $\mathcal{O}_k$ in $O(G_t)$;\label{alg2:5} \\    
	    $F_k(u, G_t)\leftarrow F_k(u, G_t)\cup V^*$; \\  \label{alg2:4}
	}
\Return{$F_k(u, G_t)$} \label{alg2:6}	
\end{algorithm}
\setlength{\textfloatsep}{0.2mm} 

The pseudocode of the above process is shown in Algorithm~\ref{alg:basic2}. Initially, the $K$-order of $G_t$ is represented as $O(G_t) = \{\mathcal{O}_1, \mathcal{O}_2,..., \mathcal{O}_{max}\}$ where $max$ represents the maximum core number of vertices in $G_t$ (Line~\ref{alg2:inp}). We then set the followers set of anchored vertex $u$, $F_k(u, G_t)$ as empty (Line~\ref{alg2:0}). For each $u$'s neighbours $v$ (Line~\ref{alg2:1}), we iteratively using the  \textit{OrderInsert} algorithm~\cite{DBLP:conf/icde/ZhangYZQ17} to update the core number of $v$ and the other affected vertices due to the core number changes of $v$, and record the vertices with core number increasing to $k$ in a set $V^*$ (Lines~\ref{alg2:2}-\ref{alg2:5}).    
After that, we add $V^*$ related to each $u$'s neighbors $v$ into $u$'s follower set $F_k(u,G_t)$ (Line~\ref{alg2:4}). Finally, we output $F_k(u, G_t)$ as the followers set of $u$ (Line~\ref{alg2:6}).  

\begin{myExmp}
Using Figure~\ref{fig:K-order} and Figure~\ref{fig:Motivation}, we would like to show the process of followers' computation. Assume $k = 3$, $V^* = \emptyset$, and the $K$-order, $O=\{\mathcal{O}_1, \mathcal{O}_2, \mathcal{O}_3\}$, in graph $G_1$. Initially, the $deg^+(u)$ value of each vertex $u$ is recorded in $O(G_1)$, i.e., $deg^+(u_{14}) = 2$, $deg^-() = 0$ for all vertices in $G_1$ as $V^*$ is empty. If we anchor the vertex $u_{15}$, i.e., $core(u_{15}) = \infty$, then we need to 
update the candidate degree value of $u_{15}$'s neighbours in $\mathcal{O}_2$, i.e., $deg^-(u_{11}) = 0 + 1$ and $deg^-(u_{14}) = 0 + 1$. We then start to visit the foremost neighbours of $u_{15}$ in $\mathcal{O}_2$, i.e., $u_{14}$. Since $deg^+(u_{14}) + deg^-(u_{14}) = 2 +1 \geq 3$ and $deg^+(u_{11}) + deg^-(u_{11}) = 1 +1 < 3$, we can add $u_{14}$ in $V^*$ and then update the $deg^-()$ of its impacted neighbours. After that, we 
sequentially explore the vertices $s$ after $u_{14}$ in $\mathcal{O}_2$, and operate the above steps once $deg^+(s) + deg^-(s) \geq 3$. The follower computation terminates when the last vertex in $\mathcal{O}_2$ is processed, i.e., $u_{11}$. Therefore, the $V^*$ related to $u_{14}$ is $\{u_{14}\}$, and the follower set of $u_{15}$ is $F_k(u_{15},G_1) = \emptyset \cup V^* = \{u_{14}\}$. Finally, we output the follower set of $u_{15}$, i.e., $F_k(u_{15},G_1) = \{u_{14}\}$.  
\end{myExmp}

The time complexity of Algorithm~\ref{alg:basic2} is calculated as follows. 
The followers' computation of an anchored vertex $u$ can be transformed as the core maintenance problem under inserting edges $(u,v)$ where $v$ is the neighbor of $u$. Meanwhile, Zhang et al.~\cite{DBLP:conf/icde/ZhangYZQ17} reported that the core maintenance process while inserting an edge takes $\mathcal{O}(\sum_{v\in V^+}deg(v)\cdot logmax\{|C_{k-1}|, |C_k|\})$ (Lines~\ref{alg2:2}-\ref{alg2:5}), and $V^+$ is a small set with average size less than $3$. Therefore, we conclude that the time complexity of Algorithm~\ref{alg:basic2} is $\mathcal{O}(\sum_{v\in nbr(u)}\sum_{v\in V^+}deg(v)\cdot logmax\{|\mathcal{O}_{k-1}|, |\mathcal{O}_k|\})$. The time complexity of the above followers' computation method is far less than directly using \textit{core decomposition} to compute the followers of a given anchored vertex.  

\section{Incremental Computation Algorithm} \label{sec:improved}
For an evolving graph $\mathcal{G}$, the Greedy approach individually constructs the $ K $-order and iteratively searches the anchored vertex set at each snapshot graph $G_t$ of $\mathcal{G}$. \TT{However,} it does not fully exploit the connection of two neighboring snapshots to advance the performance of solving AVT problem. 
To address the limitation, in this section, we propose a bounded $K$-order maintenance approach that can avoid the reconstruction of the $K$-order at each snapshot graph. With the support of our designed $K$-order maintenance, we develop an incremental algorithm, called \textit{IncAVT}, to find the best anchored vertex set at each graph snapshot more efficiently. 

\subsection{The Incremental Algorithm Overview}
Let $\mathcal{G}=\{G_1,G_2,..,G_T\}$ be an evolving graph, $\mathcal{S}_t$ be the anchored vertex result set of AVT in $G_t$ where $t\in [1,T]$. $E^+$ and $E^-$ represent the number of edges to be inserted and deleted at the time when $G_{t-1}$ evolves to $G_t$.  
To find out the anchored vertex sets $\mathcal{S} = \{S_t\}^T_1$ of $\mathcal{G}$ using the \textit{IncAVT} algorithm, we first build the $K$-order of $G_1$, and then compute the anchored vertex set $S_1$ of $G_1$. 
Next, we develop a bounded $K$-order maintenance approach to maintain the $K$-order by considering the change of edges from $G_{t-1}$ to $G_t$. The benefit of this approach is to avoid the $K$-order reconstruction at each snapshot $G_t$.
Meanwhile, during the process of $K$-order maintenance, we use vertex sets $V_I$ and $V_R$ to record the vertices that are impacted by the edge insertions and edge deletions, respectively.
After that, we iteratively find the $l$ number of best anchored vertices in each snapshot graph $G_t$, while the potential anchored vertices are selected to probe from $V_I$, $V_R$, and $S_{t-1}$. The $l$ anchored vertices are recorded in $\mathcal{S}_t$. Finally, we output $\mathcal{S} =\{S_t\}^T_1$ as the result of the AVT problem.

\subsection{Bounded $K$-order Maintenance Approach}
In this subsection, we devise a bounded $K$-order maintenance approach to maintain the $K$-order while the graph evolving from $G_{t-1}$ to $G_t$, i.e., $t\in [2, T]$. Our bounded $K$-order maintenance approach consists of two components: (1) \textbf{EdgeInsert}, handling the $K$-order maintenance while inserting the edges $E^+$; and (2) \textbf{EdgeRemove}, handling the $K$-order maintenance while deleting the edges $E^-$. 

\subsubsection{Handling Edge Insertion}

If we insert the edges in $E^+$ into $G_{t-1}$, then the core number of each vertex in $G_{t-1}$ either remains unchanged or increases. Therefore, the $k$-core of snapshot graph $G_{t-1}$ is part of the $k$-core of snapshot graph $G_{t}$ where $G_t = G_{t-1} \oplus E^+$. 
The following lemmas show the update strategies of core numbers of vertices when the edges are added.

\begin{myLem} \label{R3:lem1}
Given a new edge $(u,v)$ that is added into $G_{t-1}$, the remaining degree of $u$ increases by 1, i.e.,  $deg^+(u) = deg^+(u) + 1$, if $u\preceq v$ holds. 
\end{myLem}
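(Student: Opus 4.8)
\textbf{Proof proposal for Lemma~\ref{R3:lem1}.}

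The plan is to argue directly from the definition of the remaining degree $deg^+(u)$ in terms of the $K$-order, keeping careful track of which neighbors of $u$ lie after $u$ in that order. Recall that for a vertex $u$ with $core(u) = k-1$, we have $deg^+(u) = |\{w \in nbr(u, G_{t-1}) : u \preceq w\}|$, i.e., $deg^+(u)$ counts exactly those neighbors of $u$ that are removed no earlier than $u$ during core decomposition (equivalently, that sit behind $u$ in the $K$-order). So the statement amounts to showing that inserting the single edge $(u,v)$ with $u \preceq v$ adds precisely one new element, namely $v$, to the set $\{w \in nbr(u) : u \preceq w\}$, and changes nothing else about which of $u$'s \emph{old} neighbors lie behind it.

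First I would observe that after inserting $(u,v)$, the vertex $v$ becomes a neighbor of $u$, and by hypothesis $u \preceq v$ in the (updated) $K$-order; hence $v$ contributes $+1$ to the count defining $deg^+(u)$. Second, I would argue that the relative order of $u$ with respect to every pre-existing neighbor $w$ of $u$ is unaffected by this insertion, so each old neighbor $w$ with $u \preceq w$ still satisfies $u \preceq w$, and each old neighbor with $w \prec u$ still satisfies $w \prec u$. The cleanest way to see this is to appeal to the incremental core-maintenance semantics: inserting an edge can only move vertices \emph{later} in the $K$-order (core numbers weakly increase, and within a core level the removal order shifts consistently), and in the regime relevant here — where the insertion of $(u,v)$ does not raise $core(u)$ — the portion of the order restricted to $u$ and its old neighbors is preserved. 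Combining the two observations, the set $\{w \in nbr(u, G_t) : u \preceq w\}$ is exactly the old set together with $\{v\}$, so $deg^+(u)$ increases by exactly $1$.

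The main obstacle I anticipate is making the second step fully rigorous: I must ensure that the edge insertion does not reorder $u$ relative to one of its existing neighbors $w$ in a way that would either lose an old contribution or silently add another. This is precisely the content of the correctness of the underlying \textit{OrderInsert} / core-maintenance framework of Zhang et al.~\cite{DBLP:conf/icde/ZhangYZQ17}, so I would lean on that result (cited earlier in the paper) to assert that the $K$-order is updated monotonically and that, absent a core-number change for $u$, the local order around $u$ shifts only by the placement of the new neighbor $v$. A secondary subtlety is the boundary case where inserting $(u,v)$ does bump $core(u)$ up to $k$; there the lemma is to be read as a statement about the bookkeeping variable $deg^+(u)$ just before any such promotion is processed, and I would note explicitly that the $+1$ update is applied at that stage, consistent with how Lemma~\ref{R3:lem1} is used in the algorithm.
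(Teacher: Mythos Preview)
Your core approach---recall the definition $deg^+(u) = |\{w \in nbr(u) : u \preceq w\}|$ and observe that the new edge contributes the single new neighbor $v$ to this set---is exactly what the paper does. However, you have made the argument substantially harder than the paper intends by reading ``$u \preceq v$'' and $deg^+(u)$ with respect to the \emph{updated} $K$-order. The paper's proof works entirely in the $K$-order $O(G_{t-1})$ of the \emph{old} snapshot: it states explicitly that $u \preceq v$ is taken in $O(G_{t-1})$, and $deg^+(u)$ is the bookkeeping variable attached to that fixed order. With that reading, your ``second step''---that no pre-existing neighbor of $u$ changes sides relative to $u$---is vacuous, because the order has not been touched yet; the entire ``main obstacle'' you anticipate (monotonicity of \textit{OrderInsert}, possible promotion of $core(u)$, etc.) simply does not arise.

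So your proof is not wrong, but it proves more than is needed and invokes machinery (correctness of the core-maintenance update, monotonicity of the order) that the paper deliberately avoids here. The paper's version is a two-line bookkeeping observation: inserting $(u,v)$ adds one neighbor behind $u$ in the existing order, hence $deg^+(u)$ goes up by one. You should drop the discussion of the updated order and of the Zhang et al.\ maintenance guarantees, and instead state clearly that both $\preceq$ and $deg^+$ refer to $O(G_{t-1})$; the lemma is then immediate from the definition.
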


\begin{proof}
From Section~\ref{sec:greedy_acc} of the remaining degree of a vertex, we get  $deg^+(u) = |\{v\in nbr(u)\ |\ u\preceq v\}|$. Inserting an edge $(u,v)$ into graph snapshot $G_{t-1}$ brings one new neighbour $v$ to $u$ where $u\preceq v$ in the $K$-order of $G_{t-1}$, i.e., $O(G_{t-1})$. Therefore, $deg^+(u)$ needs to increase by $1$ after inserting $(u,v)$ into $G_{t-1}$. 
\end{proof}

\begin{myExmp}
Consider the snapshot graph $G_1$ in Figure~\ref{fig:Motivation}, if we add a new edge $(u_2, u_5)$ into $G_1$ where $u_2\preceq u_5$ (mentioned in Figure~\ref{fig:K-order}), then the remaining degree of $u_2$, $deg^+(u_2) = deg^+(u_2) +1 = 3$.
\end{myExmp}

\begin{myLem} \label{R3:lem2}



Let $deg^+(u)$ and $core(u)$ be the remaining degree and core number of vertex $u$ in snapshot graph $G_t$ respectively. Suppose we insert a new edge $(u,v)$ into $G_t$ and update $deg^+(u)$. Thus, the core number $core(u)$ of $u$ may increase by 1 if $core(u) < deg^+(u)$. Otherwise, $core(u)$ remains unchanged. 
\end{myLem}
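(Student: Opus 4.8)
The plan is to split the statement into two parts: (i) inserting one edge can raise $core(u)$ by at most $1$, and (ii) it raises $core(u)$ at all only if the \emph{updated} $deg^{+}(u)$ strictly exceeds $core(u)$ (so ``otherwise it stays unchanged''). Throughout I would work under the hypothesis of Lemma~\ref{R3:lem1}, namely $u\preceq v$ in the $K$-order $O(G_t)$, which is exactly the case in which $deg^{+}(u)$ gets incremented; write $c=core(u)$ for the old core number and $G_t^{+}=G_t\oplus\{(u,v)\}$.

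For part (i) I would use only the maximal-subgraph characterisation of the $k$-core (Definition~\ref{def:k-core}). Since adding an edge never lowers the degree of a vertex inside any fixed subgraph, the $k$-core of $G_t$ sits inside the $k$-core of $G_t^{+}$ for every $k$, so no core number decreases. For the upper bound I would argue by contradiction: if some vertex $x$ jumped from $core(x)=c_x$ in $G_t$ to $core(x)\ge c_x+2$ in $G_t^{+}$, take the $(c_x+2)$-core $H$ of $G_t^{+}$ through $x$; because $G_t$ and $G_t^{+}$ differ only in the single edge $(u,v)$, every vertex loses at most one neighbour inside $V(H)$ when we pass back to $G_t$, so $G_t[V(H)]$ still has minimum degree $\ge c_x+1$, forcing $core(x)\ge c_x+1$ in $G_t$ — a contradiction. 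Hence every core number changes by $0$ or $+1$; in particular the new core number of $u$ is $c$ or $c+1$. (This ``$+1$'' bound is also the one used in~\cite{DBLP:conf/icde/ZhangYZQ17,DBLP:journals/tkde/LiYM14}.)

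For part (ii) I would prove the contrapositive: if $core(u)$ does increase, then the updated $deg^{+}(u)\ge c+1>c$. By part (i) the new core number of $u$ is exactly $c+1$, so $u$ lies in the $(c+1)$-core $H'$ of $G_t^{+}$ and hence has at least $c+1$ neighbours inside $H'$. The key sub-claim is that every such neighbour $w$ satisfies $u\preceq w$ in $O(G_t)$; granting it, the updated value $deg^{+}(u)=|\{w\in nbr(u,G_t^{+}):u\preceq w\}|$ (the old value plus one for $v$, which satisfies $u\preceq v$) is at least $c+1$, as wanted. To prove the sub-claim, note first that $w\in H'$ gives $core(w)\ge c+1$ in $G_t^{+}$, hence $core(w)\ge c$ in $G_t$ by part (i); if, toward a contradiction, $w\preceq u$ in $O(G_t)$, then $core(w)\le core(u)=c$, so $core(w)=c$ and $w$ is removed before $u$ in the core-decomposition run defining $O(G_t)$. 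I would then show that inserting $(u,v)$ with $u\preceq v$ cannot raise the core number of any vertex strictly preceding $u$ in $O(G_t)$: since core numbers are independent of the deletion order, replay the order of $O(G_t)$ on $G_t^{+}$, and observe that because both endpoints of the new edge occur only at positions $\ge$ that of $u$, every vertex strictly before $u$ has, when reached, exactly the same set of surviving neighbours as in the run on $G_t$; so the replay is legal up to $u$ and assigns each such vertex its old core number, the two runs diverging only from $u$ onward. This keeps $core(w)=c$ in $G_t^{+}$, contradicting $core(w)\ge c+1$; hence $u\preceq w$ after all, and part (ii) follows.

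The routine part is (i), which is just standard subgraph bookkeeping. The real obstacle is the sub-claim inside (ii) — that the vertices preceding $u$ in the $K$-order are untouched by the insertion. Making this rigorous needs the ``replay the old deletion order on the new graph'' induction over the rounds of core decomposition, and it genuinely relies on both endpoints of $(u,v)$ being at or after $u$ in $O(G_t)$, which is precisely the orientation hypothesis inherited from Lemma~\ref{R3:lem1}. I would also state explicitly that $deg^{+}(u)$ in the lemma denotes the value \emph{after} the update, matching the phrase ``insert $(u,v)$ \dots and update $deg^{+}(u)$'', and remark that this same sub-claim is what licenses the \textit{OrderInsert} traversal to start at $u$.
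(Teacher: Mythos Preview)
Your proposal is correct and substantially more complete than the paper's own argument. The paper offers only a one-paragraph contradiction: it asserts, without justification, that ``if $u$'s core number does not update after inserting $(u,v)$ then the number of $u$'s neighbours $w$ with $u\preceq w$ is at most $core(u)$'' and observes that this clashes with the hypothesis $core(u)<deg^{+}(u)$. Read literally this establishes the \emph{converse} of the ``otherwise'' clause (it concludes $core(u)<deg^{+}(u)\Rightarrow core(u)$ changes, not $deg^{+}(u)\le core(u)\Rightarrow core(u)$ unchanged), it leaves the ``at most $+1$'' part entirely untreated, and the asserted implication is in fact false in general: take two vertex-disjoint triangles containing $u$ and $v$ respectively with $u\preceq v$, then insert $(u,v)$; the updated $deg^{+}(u)=3>2=core(u)$, yet $core(u)$ remains $2$.

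Your decomposition into (i) the single-edge $+1$ bound via the min-degree bookkeeping on the $(c_x+2)$-core, and (ii) the contrapositive via the $K$-order replay sub-claim --- that no vertex strictly preceding $u$ in $O(G_t)$ can have its core number raised by inserting $(u,v)$ with $u\preceq v$ --- proves exactly the required direction and supplies the step the paper omits. The replay argument is the non-obvious part, and you have isolated precisely why it works (both endpoints of the new edge sit at or after $u$, so the deletion prefix up to $u$ is unchanged). This sub-claim is also the invariant underlying \textit{OrderInsert}~\cite{DBLP:conf/icde/ZhangYZQ17}, so your proof additionally makes explicit why the follower scan in Algorithm~\ref{alg:basic2} is entitled to start at $u$.
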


\begin{proof}
We prove the correctness of this lemma by contradiction. 
From Definition~\ref{def:corenumber} and the definition of remaining degree in Section~\ref{sec:greedy_acc}, we know that if $u$'s core number \TT{does not need to be updated} after inserting edge $(u,v)$ into $G_{t-1}$, then the number of $u$'s neighbours $v$ with $u\preceq v$ must be no more than $core(u)$. Therefore, the value of updated $deg^+(u)$ should be no more than $core(u)$, which is contradicted with the fact that $core(u)< deg^+(u)$.  
\end{proof}

\begin{myExmp}
Considering a vertex $u_2$ in graph $G_1$, we can see $deg^+(u_2) = 2$, and $core(u_2) = 2$ as shown in Figure~\ref{fig:Motivation} and Figure~\ref{fig:K-order}. If an edge $(u_2, u_5)$ is inserted into $G_1$, we can get $deg^+(u_2) = deg^+(u_2) + 1 =3$ (refer Lemma~\ref{R3:lem1}). Since $core(u_2) = 2 < deg^+(u_2) = 3$, the $core(u_2)$ may increase by 1 according to Lemma~\ref{R3:lem2}.
\end{myExmp}

We present the \textit{EdgeInsert} algorithm for $K$-order maintenance. It consists of three main steps. Firstly, for each vertex $u$ relating to the inserting edges $(u,v)\in E^+$, we need to update its remaining degree, \textit{i.e., $deg^+(u)$} (refer Lemma~\ref{R3:lem1}). 
Then, we identify the vertices impacted by the insertion of $E^+$ and update its remaining degree value, core number, and positions in $K$-order (refer Lemma~\ref{R3:lem2}). This step is the core phase of our algorithm. 
Finally, we add the vertex $u$ into the vertex set $V_I$ if $u$ 
has the updated core number $core(u) = k-1$ after inserting $E^+$. This is because the followers only come from vertices with core number $k-1$ (refer Theorem~\ref{R3:theo1}).   

\begin{algorithm}[t] \footnotesize
\small
	\caption{\textit{EdgeInsert}$(G'_{t}$, $O$, $E^+$, $k$)} \label{alg:insert}
	\BlankLine
	\TT{$i\leftarrow 0$}, $V_I\leftarrow \emptyset$, $m \leftarrow 0$, $O'\leftarrow \emptyset$; \\
	\For{each $e=(u,v) \ \& \ e\in E^+$}{ \label{R3:alginsert_1}
	    $m \leftarrow \max\{m, \min(core(u), core(v))\}$; \\
	    $u \preceq v$ ? $deg^+(u) + = 1 : deg^+(v) + = 1$; \\ \label{R3:alginsert_2}
	}
	\While{$i\leq m$}{ \label{R3:alginsert_3}
	  $V_C\leftarrow \emptyset$, $deg^-(.)\leftarrow 0$;\\  \label{R3:alginsert_5}
      $u^* \leftarrow$ the first vertex of $\mathcal{O}_i\in O$; \\ \label{R3:alginsert_6}
      \While{$u^*\neq nil$}{ \label{R3:alginsert_7}
           \If{$deg^+(u^*) + deg^-(u^*)> i$}{ \label{R3:alginsert_9}  
              remove $u^*$ from $\mathcal{O}_i$; 
              append $u^*$ into $V_C$;\\
              \If{$i = k-1$}{
                  add $u^*$ into $V_I$              
              }
              \For{each $v\in nbr(u^*,G'_t)\wedge core(v)=i \wedge u^*\preceq v$}{\
                  $deg^-(v)\leftarrow deg^-(v) + 1$; \\\label{R3:alginsert_10}  
              }           
           }
           \Else{
               \If{$deg^-(u^*)=0$}{   \label{R3:alginsert_11}  
                    remove $u^*$ from $\mathcal{O}_i$; 
                    append $u^*$ to $\mathcal{O}_{i'}$; \\  \label{R3:alginsert_12}  
               }
               \Else{ 
                    $deg^+(u^*)\leftarrow deg^+(u^*) + deg^-(u^*)$; \\\label{R3:alginsert_13} 
                    $deg^-(u^*)\leftarrow 0$; \\
                    remove $u^*$ from $\mathcal{O}_i$;         
                    append $u^*$ to $\mathcal{O}_{i'}$; \\   
                    update the $deg^+(.)$ of $u^*$'s neighbors; \\ \label{R3:alginsert_8}
               } 
           }
           $u^*\leftarrow$ the vertex next to $u^*$ in $\mathcal{O}_i$; \\  
      }
      \For {$v\in V_C$}{ \label{R3:alginsert_14}  
          $deg^-(v)\leftarrow 0$; 
          $core(v)\leftarrow core(v)+1$; \\   
          \If {$i = k-1$}{
              remove $v$ from $V_I$; \\          
          }  
      } 
      insert vertex set $V_C$ into the beginning of $\mathcal{O}_{i+1}$; \\
      \If {$i=k-2$}{
          $V_I\leftarrow V_I \cup V_C$; \\   \label{R3:alginsert_15}      
      }
      add $\mathcal{O}_{i'}$ to new $K$-order $O'$ in $G'_t$; \\ 
      $i \leftarrow i+1$; \\	\label{R3:alginsert_4}
	}
\Return {the K-order $O'$ in $G'_t$, and $V_I$}  \label{R3:edgeinsert_final}
\end{algorithm} 
\setlength{\textfloatsep}{0.2mm} 

The detailed description of our \textit{EdgeInsert} algorithm is outlined in Algorithm~\ref{alg:insert}. The inputs of the algorithm are snapshot graph $G_{t-1}$ where $t\in [2,T]$, the $K$-order $O = \{\mathcal{O}_1, \mathcal{O}_2,.., \mathcal{O}_k,..\}$ of $G_{t-1}$, the edge insertion $E^+$, and a positive integer $k$. Initially, for each inserted edge $(u, v)\in E^+$, we increase the remaining degree of $u$ by 1 where vertex $u\preceq v$ (refer Lemma~\ref{R3:lem1}), use $m$ to record the maximum core number of all vertices related to $E^+$ (Lines~\ref{R3:alginsert_1}-\ref{R3:alginsert_2}). 
Next, for $i\in [0, m]$, we iteratively identify the vertices in $\mathcal{O}_i\in O$ whose 
core number increases after the insertion of $E^+$, and we also update $\mathcal{O}_i$ of $K$-order (Lines~\ref{R3:alginsert_3}-\ref{R3:alginsert_4}). Here, a new set $V_C$ is initialized as empty and it will be used to maintain the new vertices whose core number increases from $i-1$ to $i$. And then, we start to select the first vertex $u^*$ from  $\mathcal{O}_i$ 
(Line~\ref{R3:alginsert_6}). In the inner \textit{while} loop, we visit the vertices in $\mathcal{O}_i$ in order (Lines~\ref{R3:alginsert_7}-\ref{R3:alginsert_8}). The visited vertex $u^*$ must satisfy one of the three conditions: (1) $deg^+(u^*) + deg^-(u^*) > i$; (2) $deg^+(u^*) + deg^-(u^*) \leq i \ \wedge \ deg^- (u^*) = 0$; (3) $deg^+(u^*) + deg^-(u^*) \leq i \ \wedge \ deg^- (u^*) > 0$. For condition (1), the core number of the visited vertex $u^*$ may increase. Then, we remove $u^*$ from $\mathcal{O}_i$ and add it into $V_C$. Besides, the candidate degree of each neighbour $v$ of $u^*$ should increase by $1$ if $u^*\preceq v$ (Lines~\ref{R3:alginsert_9}-\ref{R3:alginsert_10}). For condition (2), the core number of $u^*$ will not change. So we remove $u^*$ from the previous $\mathcal{O}_i$ and append it into $\mathcal{O}_{i'}$ of the new $K$-order $O'$ of graph $G'_t = G_{t-1}\oplus E^+$ (Lines~\ref{R3:alginsert_11}-\ref{R3:alginsert_12}). 
For condition (3), we can identify that $u^*$'s core number will not increase. So we need to update the remaining degree and candidate degree of $u^*$, and remove $u^*$ from $\mathcal{O}_i$ and append it to $\mathcal{O}_{i'}$. We also need to update the remaining degree of the neighbours of $u^*$ (Lines~\ref{R3:alginsert_13}-\ref{R3:alginsert_8}). After that, $V_I$ maintains the vertices that are affected by the edge insertion, and these vertices have core number $k-1$ in new $K$-order $O'$ of graph $G'_t$ (Lines~\ref{R3:alginsert_14}-\ref{R3:alginsert_15}). Finally, when the outer \textit{while} loop terminates, we can output the maintained $K$-order and the affected vertices set $V_I$ (Line~\ref{R3:edgeinsert_final}).

\subsubsection{Handling Edge Deletion}
Here, we present the procedure of $K$-order maintenance for edge deletions. The following definitions and lemmas show the update strategies of core numbers of vertices when the edges are deleted. 

\begin{myLem} \label{lem:remove:deg+}
Suppose an edge $(u,v)$ is deleted while graph evolves from $G_{t-1}$ to $G_t$, then the remaining degree of $u$ from $G_{t-1}$ to $G_t$ decreases by $1$, i.e., $deg^+(u) = deg^+(u) - 1$, if $u \preceq v$ holds.
\end{myLem}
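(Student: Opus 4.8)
The plan is to mirror the proof of Lemma~\ref{R3:lem1} almost verbatim, reading the deletion as the reverse of an insertion. Recall from Section~\ref{sec:greedy_acc} that the remaining degree of a vertex is $deg^+(u) = |\{v \in nbr(u,G_{t-1}) : u \preceq v\}|$, i.e. the number of neighbours of $u$ that appear \emph{after} $u$ in the $K$-order when $u$ is removed during core decomposition. First I would fix the deleted edge $(u,v)$ and the hypothesis $u \preceq v$ in $O(G_{t-1})$, and observe that $v$ is counted in the set defining $deg^+(u)$ in $G_{t-1}$.

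Next I would argue that removing the edge $(u,v)$ simply deletes $v$ from $u$'s neighbour set, so the neighbour $v$ (which satisfied $u \preceq v$) is no longer present. Every other neighbour $w$ of $u$ with $u \preceq w$ is unaffected by this single edge deletion, so the set $\{w \in nbr(u,G_t) : u \preceq w\}$ equals $\{w \in nbr(u,G_{t-1}) : u \preceq w\} \setminus \{v\}$, which has exactly one fewer element. Hence $deg^+(u)$ decreases by exactly $1$, i.e. $deg^+(u,G_t) = deg^+(u,G_{t-1}) - 1$, which is the claim.

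The only subtlety — and the main thing to be careful about — is whether the relative $K$-order of $u$ and its \emph{other} neighbours can flip as an immediate consequence of this edge deletion, which would change which neighbours are counted in $deg^+(u)$. The resolution is that the lemma is a statement about the \emph{bookkeeping update} performed at the very first step of \textit{EdgeRemove}, before any reordering of the $K$-order takes place; the subsequent lemmas (analogous to Lemma~\ref{R3:lem2} for insertion) then handle any core-number decrease and the resulting repositioning. So at this initial stage we compare $deg^+$ against the \emph{old} order $O(G_{t-1})$ restricted to the new edge set, and the decrement by $1$ is exact. I would phrase the proof to make this ``update against the current $K$-order'' reading explicit, paralleling how Lemma~\ref{R3:lem1} is stated with respect to $O(G_{t-1})$, so there is no circularity with the order-maintenance steps that come afterwards.
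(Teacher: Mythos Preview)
Your proposal is correct and follows essentially the same approach as the paper: recall the definition $deg^+(u) = |\{v\in nbr(u): u\preceq v\}|$ from Section~\ref{sec:greedy_acc}, observe that the deleted edge removes exactly one neighbour $v$ with $u\preceq v$, and conclude the count drops by one. Your extra paragraph making explicit that the decrement is computed against the current $K$-order $O(G_{t-1})$ \emph{before} any repositioning is a clarification the paper's proof does not spell out, but the underlying argument is the same.
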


\begin{proof}
From Section~\ref{sec:greedy_acc} of the remaining degree of a vertex, we get  $deg^+(u) = |\{v\in nbr(u)\ |\ u\preceq v\}|$. Deleting an edge $(u,v)$ from graph snapshot $G_{t-1}$ evolving to $G_t$ removes one neighbour $v$ of $u$ where $u\preceq v$ in the $K$-order of $G_t$. Therefore, $deg^+(u)$ needs to decrease by $1$ after deleting $(u,v)$ from $G_{t-1}$. 
\end{proof}

\begin{myExmp}
Consider the snapshot graph $G_1$ and $G_2$ in Figure~\ref{fig:Motivation}, if we remove edge $(u_2, u_{11})$ from $G_1$ to $G_2$ where $u_2 \preceq u_{11}$ (mentioned in Figure~\ref{fig:K-order}), then the remaining degree of $u_2$ will decrease from $2$ to $1$. 
\end{myExmp}

We then introduce an important notion, called \textit{max core degree}, and the related lemma. 

\begin{myDef}[Max core degree~\cite{DBLP:journals/pvldb/SariyuceGJWC13}] \label{def:mcd}
Given \TT{an undirected} graph $G_t$, the max-core degree of a vertex $u$ in $G_t$, denoted as $mcd(u)$, is the number of $u$'s neighbours whose core number no less than $core(u)$. 
\end{myDef}

\begin{myExmp}
Consider the snapshot graph $G_1$ in Figure~\ref{fig:Motivation}, we have $core(u_9) = 3$, $core(u_{14}) = 2$, $core(u_{15}) = 2$, $core(u_{16}) = 3$, and $core(u_{17}) = 1$. Therefore, the max core degree of vertex $u_{14}$ is $3$ due to $3$ of $u_{14}$'s neighbors $\{u_9, u_{15}, u_{16}\}$ has core number no less than $core(u_{14})$. 
\end{myExmp}

Based on $k$-core definition (refer Definition~\ref{def:k-core}), $mcd(u) < core(u)$ means that $u$ does not have enough neighbors who meet the requirement of $k$-core. Thus, $u$ itself cannot stay in $k$-core as well. 
Therefore, it can conclude that for a vertex, its max core degree is always larger than or equal to its core number, \textit{i.e, $mcd(u) \geq core(u)$}.



\begin{myLem} \label{R3:mcd}
Let $mcd(u)$ and $core(u)$ be the Max-core degree and core number of vertex $u$ in snapshot graph $G_t$. Suppose we delete an edge $(u,v)$ from $G_t$ and the updated $mcd(u)$. Thus, the core number $core(u)$ of $u$ may decrease by 1 if $mcd(u) < core(u)$. Otherwise, $core(u)$ remain unchanged. 
\end{myLem}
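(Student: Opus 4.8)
The plan is to prove Lemma~\ref{R3:mcd} by contradiction in both directions, relying on the structural observation recorded just above the statement, namely that a vertex $u$ can survive in the $k$-core only if at least $core(u)$ of its neighbours also have core number at least $core(u)$; equivalently, $mcd(u)\geq core(u)$ must hold for any vertex. First I would fix notation: let $c = core(u)$ be the core number of $u$ \emph{before} the deletion of $(u,v)$, and observe that deleting a single edge can never increase any core number, so $core(u)$ after the deletion is either $c$ or $c-1$ or smaller; the first task is to argue it cannot drop below $c-1$.

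For the case $mcd(u) \geq core(u)$ after the update (``$core(u)$ remains unchanged''): I would show that $u$ still lies in the $c$-core of $G_t$. The subgraph induced by the old $c$-core minus the edge $(u,v)$ still has $u$ with at least $mcd(u) \geq c$ neighbours of core number (in the old graph) at least $c$; since deleting one edge only removes $u$--$v$ and leaves everyone else's adjacency within that vertex set intact, a short cascading-deletion argument (mirroring core decomposition) shows no vertex in that set ever drops below degree $c$, so the whole set — including $u$ — stays in the $c$-core. Hence $core(u)$ is unchanged. The symmetric subtlety is that $v$'s core number might drop, which could in principle remove $v$ from the set and trigger further deletions; I would handle this by noting that if $core(v) \geq c$ then by definition $v$ was already not counted toward a deficiency and the same cascade argument applies to the residual vertex set, and if $core(v) < c$ then $v$ was never in the $c$-core to begin with and its status is irrelevant.

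For the case $mcd(u) < core(u)$ after the update (``$core(u)$ may decrease by $1$''): here the claim is only a \emph{possibility}, so I would argue that $u$ can no longer be certified to lie in the $c$-core, because it now has fewer than $c$ neighbours of core number at least $c$, which directly violates the necessary condition $mcd(u)\geq core(u)$ applied at level $c$; therefore $core(u) \leq c-1$. Combined with the fact that a single edge deletion decreases core numbers by at most $1$ (which itself follows from the standard argument that $G_t$ with $(u,v)$ re-inserted recovers the old $c$-core, so the old core number exceeds the new one by at most one), we get $core(u) \in \{c-1\}$ exactly when the decrease occurs, i.e.\ $core(u)$ decreases by $1$ or stays the same.

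The main obstacle I expect is making the ``decreases by \emph{at most} $1$'' step fully rigorous in the $mcd(u)<core(u)$ branch — i.e.\ ruling out a larger cascade in which $u$'s demotion forces further demotions that come back around to $u$. The clean way to close this is the re-insertion argument: $G_t \oplus \{(u,v)\}$ equals $G_{t-1}$ (restricted to the relevant neighbourhood), whose $c$-core contains $u$; since adding one edge raises core numbers by at most $1$, the core number of $u$ in $G_{t-1}$ is at most $core(u,G_t)+1$, hence $core(u,G_t) \geq c-1$. This bounds the drop from below, while the $mcd$ condition bounds it from above, yielding the dichotomy stated in the lemma. All other steps are routine bookkeeping with the definitions of $mcd$, $deg^+$, and the $K$-order.
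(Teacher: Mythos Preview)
Your central device --- the invariant $mcd(u)\geq core(u)$ for every vertex in any graph, and the conclusion that a violation after edge deletion forces $core(u)$ to drop --- is exactly what the paper uses, and is in fact the \emph{entirety} of the paper's proof: it simply restates that $mcd(u)\geq core(u)$ must always hold, so if the updated $mcd(u)$ falls below the old $core(u)$ then $core(u)$ must be decremented to restore the inequality. The paper does not separately argue the ``at most $1$'' bound or the ``Otherwise'' direction. Your re-insertion argument for the former is correct and strictly more careful than anything the paper offers.

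Your treatment of the ``Otherwise'' branch, however, has a real gap, and not one you can patch along the lines you sketch. You acknowledge the cascade from $v$ but dismiss it by saying ``the same cascade argument applies to the residual vertex set''; this is circular, because the cascade triggered at $v$ can propagate through $u$'s \emph{other} high-core neighbours and return to $u$. Concretely: take two disjoint copies of $K_4$, one on $\{u,v,w_1,w_2\}$ and one on $\{w_3,w_4,x,y\}$, and add edges $uw_3$, $uw_4$; every vertex has core number $3$. Deleting $(u,v)$ gives updated $mcd(u)=4\geq 3$, yet $v$ now has only two $3$-core neighbours and drops, then $w_1$ drops, then $w_2$, leaving $u$ with only $w_3,w_4$ --- so $core(u)$ falls to $2$. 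The ``Otherwise'' clause thus fails as a one-shot statement; the paper sidesteps this by simply not proving that direction, and the surrounding \textit{EdgeRemove} algorithm compensates by iterating the $mcd$ test through a queue rather than relying on a single check at $u$.
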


\begin{proof}
Based on Definition~\ref{def:k-core} and Definition~\ref{def:corenumber}, the core number of vertex $u$ is identified by the number of its neighbours with core number no less than $u$. Moreover, a vertex $u$ must have at least $core(u)$ number of neighbours with core number no less than $core(u)$. From Definition~\ref{def:mcd}, the max core degree of a vertex $u$ is the number of $u$'s neighbour with core number no less than $u$, i.e, $mcd(u)=|\{v\ |\ v = nbr(u) \wedge core(v)\geq core(u)\}|$. Therefore, we can conclude that $mcd(u)\geq core(u)$ always holds. Hence, if $mcd(u) < core(u)$ after \TT{deleting} an edge from $G_t$ and \TT{updating} $mcd(u)$, then $core(u)$ also needs to be decreased by $1$ to ensure $mcd(u) > core(u)$ in the changed graph.   
\end{proof}

\begin{algorithm}[t] \footnotesize
\small
	\caption{\textit{EdgeRemove}($G'_t$, $O'$, $E^-$, $k$)} \label{R3:alg:remove}
	\BlankLine
    \textcolor{black}{/* $mcd(u)$ is the number of $u$'s neighbour $v$ with $core(u)\leq core(v)$ */} \\ 
    $O' = \{\mathcal{O}_1, \mathcal{O}_2, ...\}$; \TT{Initialize array $F[|V|]$}\\

    $V_R\leftarrow \emptyset$, and $m\leftarrow 0$; \\ 
    let $Q$ be an empty queue and $V^* = \{V_1, V_2,..\}$, $V_i\in V^*$ be the empty list; \\
   \textcolor{black}{/* identify the vertex need to remove from $O'$ */} \\
    \For{each $e=(u,v) \ \& \ e\in E^-$}{  \label{R3:alg5:1} 
         $u'\leftarrow $ $u$ if $u\preceq v$, otherwise $v$; \\
        \TT{$G_t := G'_t \oplus e$;} 
        $\TT{j} \leftarrow core(u',G'_t)$; \\
         compute $mcd(u', G_t)$ of $u'$;    \label{alg5_x3}    \\
         
         \If{$mcd(u', G_t)< \TT{j}$}{  \label{alg5_x1}
             remove $u'$ from $O'_i$,
             enqueue $u'$ to $Q$; \\ 
             $core(u')\leftarrow core(u')-1$; \\ 
            \TT{ \If{$F[u'] == 1$}{
              remove $u'$ from $V_j$; \\
             }
             \Else{
                $F[u'] == 1$; \\}   \label{R3:alg5:2}
             }
         }  
    
    \While{$Q$ is not empty}{ \label{R3:alg5:3} 
          dequeue $u$ from $Q$, 
          $i \leftarrow core(u,G_t)$; \\
          append $u$ to $V_i$, 
          $m\leftarrow max\{m, i\}$; \\
          \For{$u'\in nbr(u, G'_t)\wedge core(u')\TT{==} j$}{
               repeat lines~\ref{alg5_x3}-\ref{R3:alg5:2}; \\  \label{R3:alg5:4}        
          }   
    }
    \TT{$G'_t := G_t$}; \\
   }
    \textcolor{black}{/* update the k-order $O'$ */} \\
    \For{$i\leftarrow$ $m$ to $1$}{    \label{R3:alg5:5} 
         \For{each $u\in V_i$ in order}{
             $deg^+(u) \leftarrow 0$; \\
             \For {$u'\in nbr(u, G_t)$}{
                \If{$core(u') > core(u) \vee u'\in V_i$}{
                    $deg^+(u)\leftarrow deg^+(u) + 1$; \\
                }
                recompute $deg^+(u')$; \\
             }
             append $u$ to the end of $\mathcal{O}_{i}$; \\    \label{R3:alg5:6}     
         }          
    }
    $V_R\leftarrow V_{k-1}$,   
    $O(G_t) \leftarrow O'$;\\ \label{R3:alg5:7} 
\Return {the $K$-order $O(G_t)$ of $G_t$, and $V_R$} 
\end{algorithm}  
\setlength{\textfloatsep}{0.2mm}


The \textit{EdgeRemove} algorithm is presented in Algorithm~\ref{R3:alg:remove}. The inputs of the algorithm are the graph $G'_t$ constructed by $G_{t-1}$ with the insertion edges of $E^+$, i.e., $G'_t = G_{t-1} \oplus E^+$, and $O'$ is the $K$-order of $G'_t$. The main body of Algorithm~\ref{R3:alg:remove} consists of three steps. In the first step (Lines~\ref{R3:alg5:1}-\ref{R3:alg5:4}), we identify the vertices that needs to be removed from their previous position of $K$-order $O'$ after the edge deletion. Specifically, we first update the graph $G_t$, 
and then compute the max core degree of these vertices (Line~\ref{alg5_x3}). Meanwhile, we add the influenced vertex $u$ related to the deleting edges, i.e., $mcd(u)< core(u)$, into a queue $Q$. All vertices in $Q$ need to update their core numbers based on Lemma~\ref{R3:mcd} (Lines~\ref{alg5_x1}-\ref{R3:alg5:2}). After that, the algorithm recursively probes each neighboring vertex $v$ of vertices in $Q$, and adds $v$ into the vertex set $V^*$ if $mcd(v) < core(v)$ (Lines~\ref{R3:alg5:3}-\ref{R3:alg5:4}). 
In the second step, we maintain the $K$-order $O'$ by adjusting the position of vertices in $V^*$, which is identified in Step 1, to reflect the edges deletion of $E^-$ (Lines~\ref{R3:alg5:5}-\ref{R3:alg5:6}). In details, for each $u\in V_i$, we update the $deg^+(.)$ of $u$ and its neighbours, remove $u$ from $\mathcal{O}'_t$, and insert $u$ to the end of $\mathcal{O}'_{t-1}$. In the final step, we use $V_R$ to record the vertices that may become the potential followers for the anchored vertices, i.e., these vertices' core number becomes $k-1$ in the new $K$-order $O'$ (Line~\ref{R3:alg5:7}). 

\subsection{The Incremental Algorithm}

\begin{algorithm}[t] \footnotesize 
	\caption{\textit{IncAVT}} \label{alg:IncAVT}
	\KwIn{$\mathcal{G}=\{G_t\}^T_1:$ an evolving graph, $l$: the allocated size of anchored vertex set, and $k$: degree constraint} 
	\KwOut{ $\mathcal{S}=\{S_t\}_1^T:$ the series of anchored vertex sets} 
	\BlankLine
	Build the $K$-order $O(G_1)$ of $G_1$; \textcolor{black}{/* using Algorithm~\ref{alg:k-core} */}  \label{R3_alg6_1}  \\
	Compute the anchored vertex set $S_1$ of $G_1$ with size $l$ using Algorithm~\ref{alg:basic1}; \\
	$\mathcal{S}:=\{S_1\}$; $t:= 2$; \\ \label{R3_alg6_2}
	\While{$t < T$}{ \label{while1}
	    $G'_t := G_{t-1}\oplus E^+$, 
	    $S_t\leftarrow S_{t-1}$; \\ \label{R3_alg6_3+}
        \textcolor{black}{/* maintain $K$-order by using Algorithm~\ref{alg:insert}, \ref{R3:alg:remove} */} \\
	    $(O', V_I)\leftarrow$ \textit{EdgeInsert}($G'_{t}, O(G_{t-1}), E^+, k$); \\  \label{R3_alg6_4}
	    $(O(G_t), V_R)\leftarrow$ \textit{EdgeRemove}($G'_{t}, O', E^-, k$);  \\  \label{R3_alg6_5}

	    \For {each $u\in S_{t-1}$}{ \label{R3_alg6_6}
	        compute $F_k(S_t, G_t)$, $F\leftarrow |F_k(S_t, G_t)|$; \\  \label{R3_alg6_6.1}
	        $F_{max} \leftarrow 0$, $u'\leftarrow u$; \\
	        \For{ each \textcolor{black}{ ~ ~ ~ ~ ~ ~ ~ ~ ~ ~ ~ ~ ~ /* Theorem~\ref{R3:theo1} */} 
	        $\{v|v\in \{V_I\cup V_R\cup nbr(V_I\cup V_R)\setminus C_k(G_t)\}\wedge \{\exists u\in nbr(v)\wedge core(u)=k-1\wedge v\preceq u\}\}$    }{  \label{R3_alg6_8} 
	             \If{$F_{max}< F_k(S_t\setminus u\cup v, G_t)$}{
                     $F_{max} \leftarrow F_k(S_t\setminus u\cup v, G_t)$, $u'\leftarrow v$; \\	              \label{R3_alg6_9}
	             }
	        }
	        \If{$F_{max}> F$}{  \label{R3_alg6_10}
	           remove $u$ from $S_t$, 
               add $u'$ to $S_t$; \\   \label{R3_alg6_7}	        
	        }
	    }
       $\mathcal{S}:= \mathcal{S}\cup S_t$; 
       $t\leftarrow t + 1$; \\  \label{while2}
     }
\Return{$\mathcal{S}$}	\label{R3_return} 
\end{algorithm}
\setlength{\textfloatsep}{0.2mm} 

Base on the above $K$-order maintenance strategies and the impacted vertex sets $V_I$ and $V_R$, we propose an efficient incremental algorithm, \textit{IncAVT}, for processing the \textit{AVT} query. Algorithm~\ref{alg:IncAVT} summarizes the major steps of \textit{IncAVT}. Given an evolving graph $\mathcal{G} = \{G_t\}^T_1$, the allocated size of selected anchored vertex set $l$, and a positive integer $k$, the \textit{IncAVT} algorithm returns a series of anchored vertex set $S= \{\mathcal{S}_t\}^T_1$ of $\mathcal{G}$ where each $S_t$ has size $l$. Initially, we build the $K$-order $O(G_1)$ of $G_1$ by using Algorithm~\ref{alg:k-core}, and then compute the anchored vertex set $S_1$ of $G_1$ by using Algorithm~\ref{alg:basic1} where $T$ is set as $1$ (Lines~\ref{R3_alg6_1}-\ref{R3_alg6_2}). 
The \textit{while} loop at lines~\ref{while1}-\ref{while2}, computes the anchored vertex set of each snapshot graph $G_t\in \mathcal{G}$. 
$E^+$ and $E^- $ represent the edges insertion and edges deletion between $G_{t-1}$ to $G_t$ respectively, and we initialize the anchored vertex set $S_{t}$ in $G_t$ as $S_{t-1}$ (Line~\ref{R3_alg6_3+}). The $K$-order is maintained by using Algorithm~\ref{alg:insert} while considering the edge insertion $E^+$ to $G_{t-1}$ and consequently, the vertex set $V_I$ is returned to record the vertices, which is impacted by inserting $E^+$ and has core number $k-1$ in the updated $K$-order (Line~\ref{R3_alg6_4}). Similarly, we use Algorithm~\ref{R3:alg:remove} to update the $K$-order while considering the edges deletion of $E^-$ and use $V_R$ to record the vertices which has core number k-1 and impacted by the edge deletion (Line~\ref{R3_alg6_5}). Next, an inner \textit{for} loop is to track the anchored vertex set of $G_{t}$ (Lines~\ref{R3_alg6_6}-\ref{R3_alg6_7}). More specifically, we first compute $S_t$'s followers set size $F$ (Line~\ref{R3_alg6_6.1}). Then, for each vertex $u$ in $S_{t-1}$, we only probe the vertices $v$ in vertex set $\{V_I\cup V_R \cup nbr(V_I\cup V_R)\setminus C_k(G_t)\}$ based on Theorem~\ref{R3:theo1} (Lines~\ref{R3_alg6_6}-\ref{R3_alg6_9}). If the number of followers of anchored vertex set $\{S_t\setminus u \cup v\}$ is \TT{bigger} than $F$, we then update $S_t$ by using $v$ to replacement $u$ (Lines~\ref{R3_alg6_10}-\ref{R3_alg6_7}). After the inner \textit{for} loop finished, we add the anchored vertex set $S_t$ of $G_t$ into $\mathcal{S}$ (Line~\ref{while2}). The \textit{IncAVT} algorithm finally returns the series of anchored vertex set $\mathcal{S}$ as the final result (Line~\ref{R3_return}). 


\section{Experimental Evaluation} \label{sec:exp}
In this section, we present the experimental evaluation of our proposed approaches for the AVT problem: \TT{the Greedy algorithm that is optimized by two strategies mentioned in Section~\ref{sec:alg}} (\textbf{Greedy}); and the incremental algorithm (\textbf{IncAVT}). 
\TT{The source codes of this work are available at \url{https://github.com/IncAVT/IncAVT}}.

\subsection{Experimental Setting}
\textbf{Algorithms.} 
To the best of our knowledge, no existing work investigates the \textit{Anchored Vertex Tracking} (AVT) problem. To further validate, we compare with two baselines adapted from the existing works: (i) \textbf{OLAK}, which is proposed in~\cite{DBLP:journals/pvldb/ZhangZZQL17} to find out the best anchored vertices at each snapshot graph, and (ii) \textbf{RCM}, which is the state-of-the-art anchored $k$-core algorithm proposed in~\cite{DBLP:conf/sdm/LaishramSEPS20}, for tracking the best anchored vertices selection at each snapshot graph.



\vspace{2pt}
\noindent
\textbf{Datasets.} We conduct the experiments using \TTao{six} publicly available datasets from the \textit{Stanford Network Analysis Project (SNAP)}\footnote{http://snap.stanford.edu/}: \textit{email-Enron}, \textit{Gnutella}, \textit{Deezer}, \TT{\textit{eu-core}, \textit{mathoverflow}, and \TTao{\textit{CollegeMsg}}}. 
The statistics of the datasets are shown in Table~\ref{tab:datasets}. As the orginal datasets \TT{(i.e., \textit{email-Enron}, \textit{Gnutella}, and \textit{Deezer})} do not contain temporal information, we thus generate 30 synthetic time evolving snapshots for each dataset by randomly inserting new edges and removing old edges.
More specifically, we use it as the first snapshot $T_1$. Then, we randomly remove $100 - 250$ edges from $T_1$, denoted as $T'_1$ and randomly add $100-250$ new edges into $T'_1$, denoted as $T_2$. By repeating the similar operation, we generate $30$ snapshots for each dataset.
Moreover, we further conduct our experiments using \TTao{two} real-world temporal network datasets from SNAP: \textit{en-core}, \textit{mathoverflow}, and \TTao{\textit{CollegeMsg}}. Specifically, we have averagely divided these two datasets into $T$ graph snapshots (\textit{e.g., $G_t = (V, E_t)$, $t\in [0,T]$)}, where $V$ is the vertex and $E_t$ is the edges appearing in the time period of $t$ in each dataset. Besides, the edge insertion set $E^+$ of $G_t$ contains edges newerly appears in $G_t$ but does not exist in $G_{t-1}$; Similarly, the edge deletion set $E^-$ of $G_t$ is the edges existed in $G_{t-1}$ but disappear in $G_t$. Note that an edge will be disppear if it keeps being inactive in a period of time (\textit{i.e., a time window $W = 365$ days in \textit{mathoverflow} dataset}). 

\begin{table}[t]
	\caption{\TT{Dataset Statistics} \label{tab:datasets}}
	\vspace{-6mm}
	\begin{center}
		\scalebox{0.9}{
			\begin{tabular}{|c|c|p{1.1cm}|c|c|c|c|} \hline
				{\bf Dataset}  & Nodes & \TT{(Temporal)} Edges  & $d_{avg}$ & Days & Type      \\\hline
				\hline
				email-Enron    & 36,692    & 183,831    & 10.02 & -  & Communication    \\\hline
				Gnutella       & 62,586    & 147,878    & 4.73 & -   & P2P Network    \\\hline 
				Deezer         & 41,773    & 125,826    & 6.02 & -   & Social Network   \\\hline		
			\TT{eu-core}        & \TT{986}       & \TT{332,334}    & \TT{25.28} & \TT{803}    & \TT{Email}  \\\hline	
			\TT{mathoverflow}  & \TT{13,840}    & \TT{195,330}    & \TT{5.86} & \TT{2,350}    & \TT{Question\&Answer}   \\\hline
			\TTao{CollegeMsg}  & \TTao{1,899}    & \TTao{59,835}    & \TTao{10.69} & \TTao{193}    & \TTao{Social Network}   \\\hline
			\end{tabular}
		}
	\end{center}
	\vspace{-2mm}
\end{table}

\begin{table}[t]
	\caption{Parameters and Their Values \label{R3_tab:parameter1}}
	\vspace{-3mm}
	\begin{center}
		\scalebox{0.9}{
			\begin{tabular}{|c|p{4.0cm}<{\centering}|c|} \hline
				{\bf Parameter}  & Values                         & Default \\\hline 
				\hline
				$l$              & $[5, 10, 15, 20]$                        & $10$       \\\hline
				$k$              & $[2,3,4,5]$ or $[5,10,15,20]$    & $3$ or $10$   \\\hline 
				$T$              & $[0 - 30]$                       & 30        \\\hline
			\end{tabular}
		}
	\end{center}
	\vspace{-2mm}
\end{table}

\vspace{2pt}
\noindent
\textbf{Parameter Configuration.} Table~\ref{R3_tab:parameter1} presents the parameter settings.
We consider three parameters in our experiments: core number $k$, anchored vertex size $l$, and the number of snapshots $T$. In each experiment, if one parameter varies, we use the default values for the other parameters. Besides, we use the sequential version of the RCM algorithm in the following discussion and results. All the programs are implemented in C++ and compiled with GCC
on Linux. The experiments are executed on the same computing server with 2.60GHz Intel Xeon CPU and 96GB RAM.

\subsection{Efficiency Evaluation}
In this section, we  study the efficiency of the approaches for the AVT problem regarding running time under different parameter settings. 

\subsubsection{Varying Core Number $k$}
We compare the performance of different approaches by varying $k$. Due to the various average degree of \TTao{six} datasets, we set different $k$ for them.
Figure~\ref{R3:exp:vary_k1} - \TTao{\ref{R3:exp:vary_k6}} show the running time of \textit{OLAK}, \textit{Greedy}, \textit{IncAVT}, and \textit{RCM}, on the \TTao{six} datasets. 
From the results, we can see that \textit{Greedy} and \textit{RCM} perform faster than \textit{OLAK}, and \textit{IncAVT} performs one to two orders of magnitude faster than the other three approaches in \TT{\textit{email-Enron, Gnutella, and Deezer}}. \TT{Besides, our proposed Greedy method performs the best in \textit{eu-core}, \textit{mathoverflow}}\TTao{, and \textit{CollegeMsg}. }
As expected, we do not observe any noticeable trend from all three approaches when $k$ is varied. This is because, in some networks, the increase of the core number may not induce the \TT{increase} of the size of $k$-core subgraph and the number of candidate anchored vertices needing to probe. 

\begin{figure}[ht]
	\centering
	\subfigure[email-Enron]{\label{R3:exp:vary_k1}
		\includegraphics[width=0.3\linewidth]{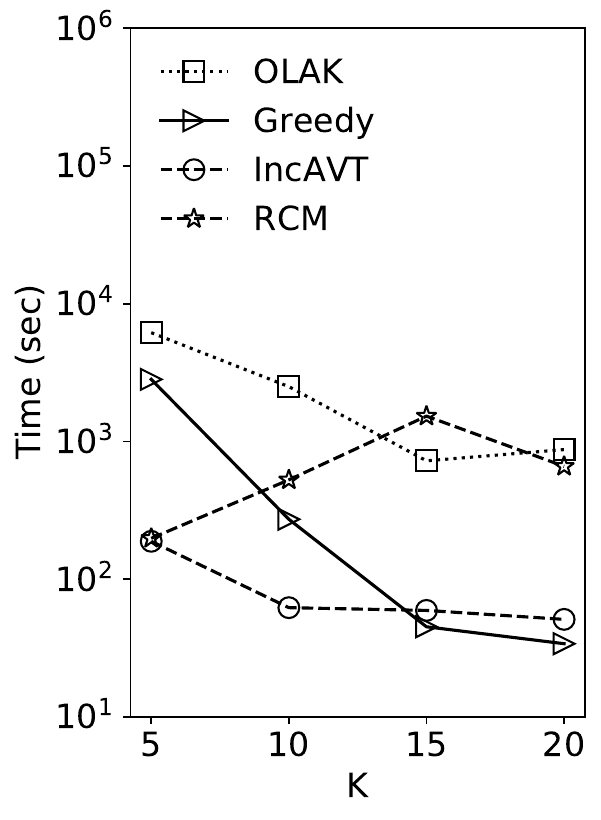}
	}
	\subfigure[Gnutella]{\label{R3:exp:vary_k2}		
		\includegraphics[width=0.3\linewidth]{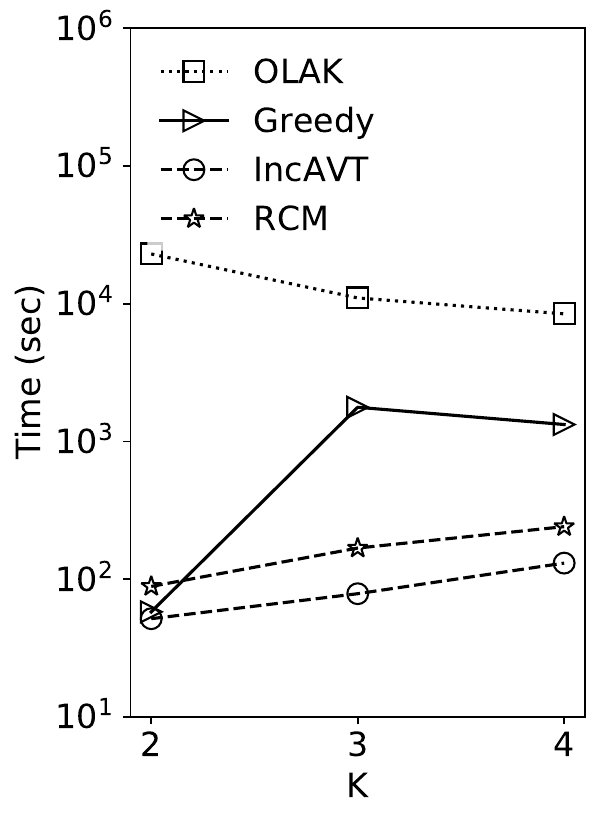}
	}
	\subfigure[Deezer]{\label{R3:exp:vary_k3}		
	\includegraphics[width=0.3\linewidth]{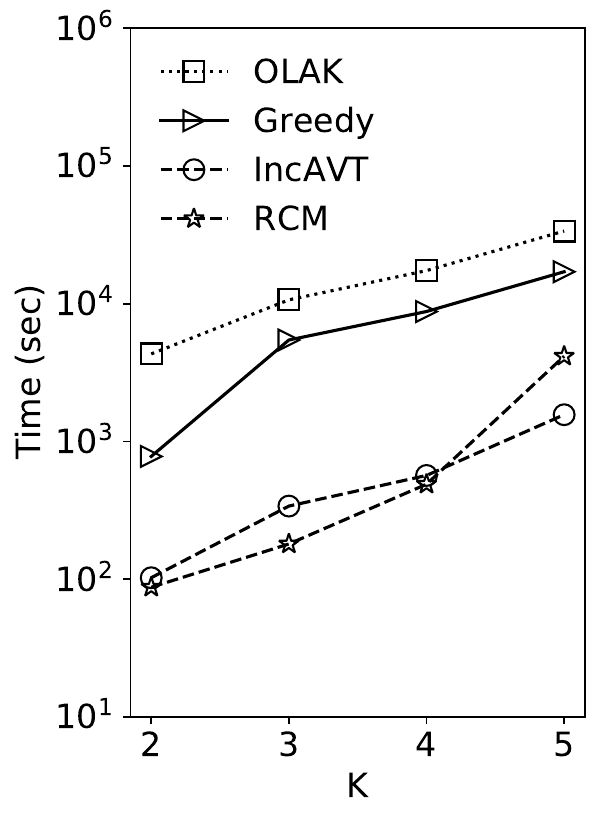}
	}
	\subfigure[eu-core]{\label{R3:exp:vary_k4}		
	\includegraphics[width=0.3\linewidth]{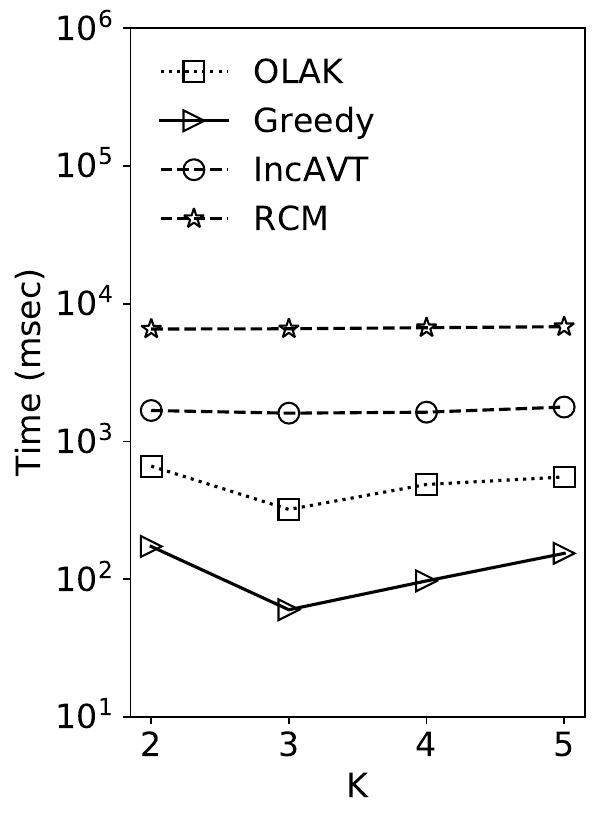}
	}
	\subfigure[mathoverflow]{\label{R3:exp:vary_k5}		
	\includegraphics[width=0.3\linewidth]{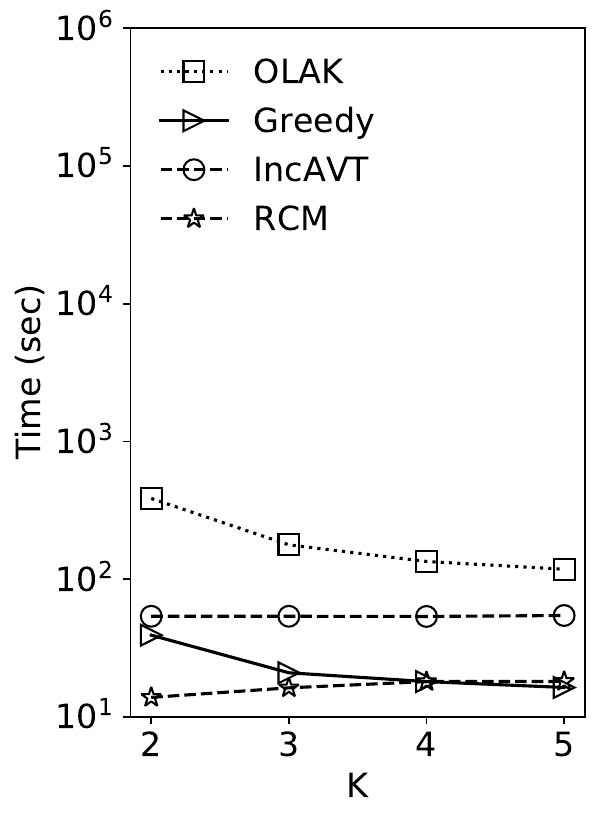}
	}
	\subfigure[\TTao{CollegeMsg}]{\label{R3:exp:vary_k6}		
	\includegraphics[width=0.3\linewidth]{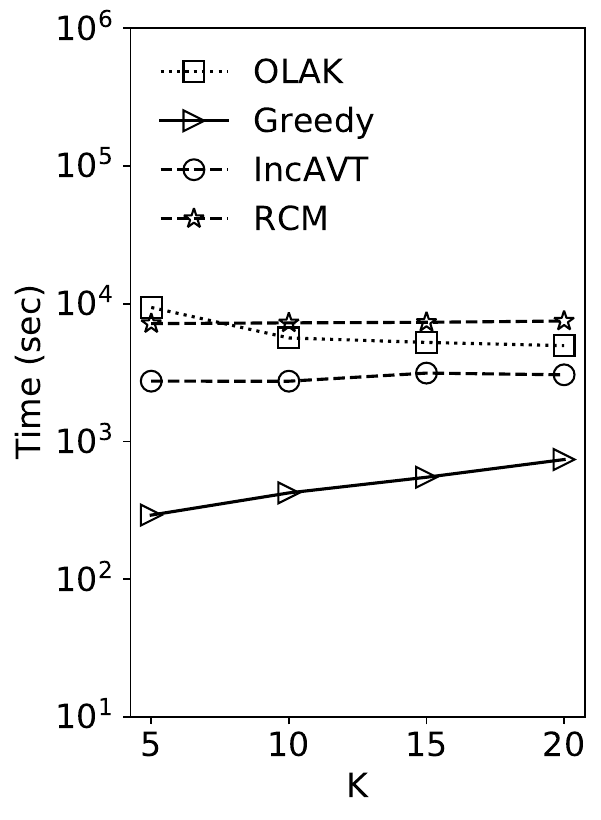}
	}
	\vspace{-2mm}
	\caption{\TT{Time cost of algorithms with varying $k$}}
	\label{fig:running_time_k}
\end{figure}

Since the performance of \textit{Greedy}, \textit{OLAK}, and \textit{IncAVT} are highly influenced by the number of visited candidate anchored vertices in algorithm execution, we also investigate the number of candidate anchored vertices that need to be probed for these approaches in different datasets. 
Figure \ref{R3_exp:anchor_vary_k1} - \TTao{\ref{R3_exp:anchor_vary_k6}} show the number of visited candidate anchored vertices for the three approaches when $k$ is varied. 
We notice that \textit{OLAK} visits more number of candidate anchored vertices than the other two approaches, and \textit{IncAVT} shows the minimum number of visited candidate anchored vertices.  

\begin{figure}[!tb]
	\centering
	\subfigure[email-Enron]{	\label{R3_exp:anchor_vary_k1}
		\includegraphics[width=0.30\linewidth]{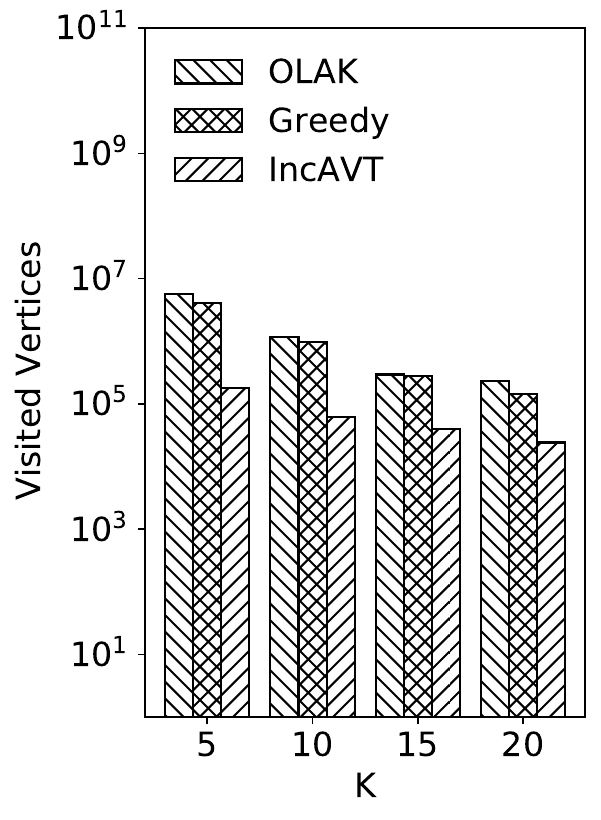}
	}
	\subfigure[Gnutella]{	\label{R3_exp:anchor_vary_k2}		
		\includegraphics[width=0.30\linewidth]{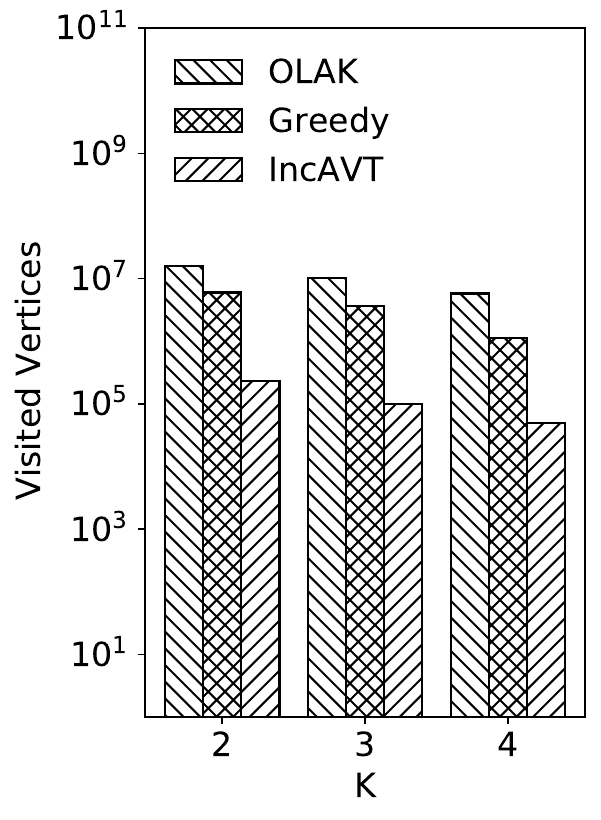}
	}
	\subfigure[Deezer]{	\label{R3_exp:anchor_vary_k3}		
		\includegraphics[width=0.30\linewidth]{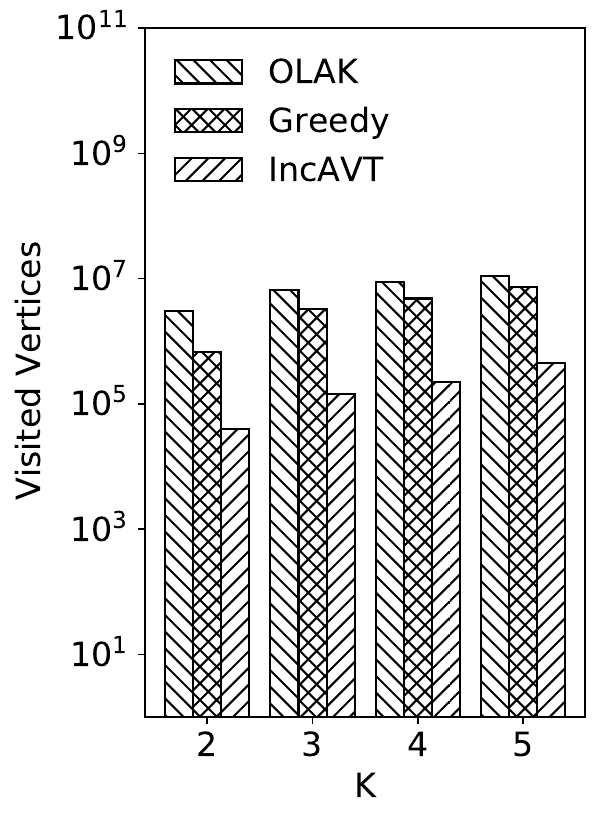}
	}
	\subfigure[eu-core]{	\label{R3_exp:anchor_vary_k4}		
		\includegraphics[width=0.30\linewidth]{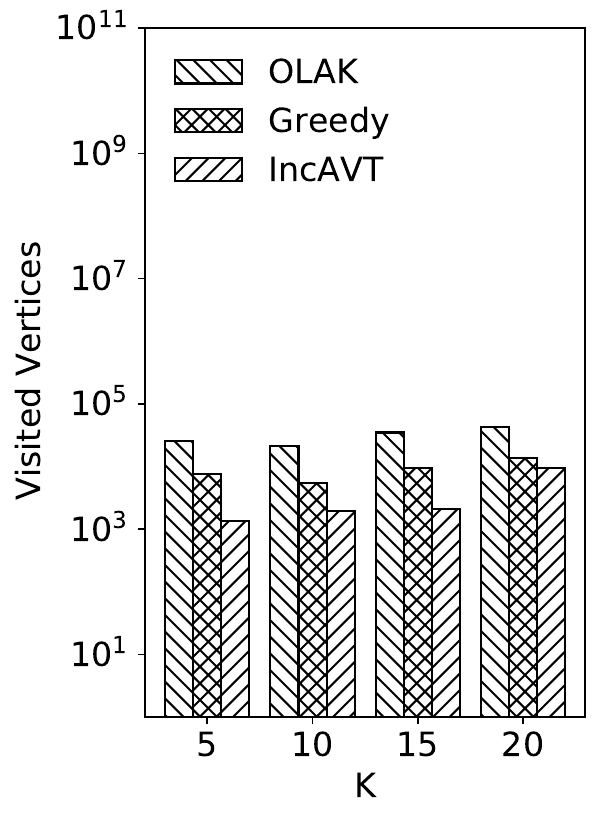}
	}
	\subfigure[mathoverflow]{	\label{R3_exp:anchor_vary_k5}		
		\includegraphics[width=0.30\linewidth]{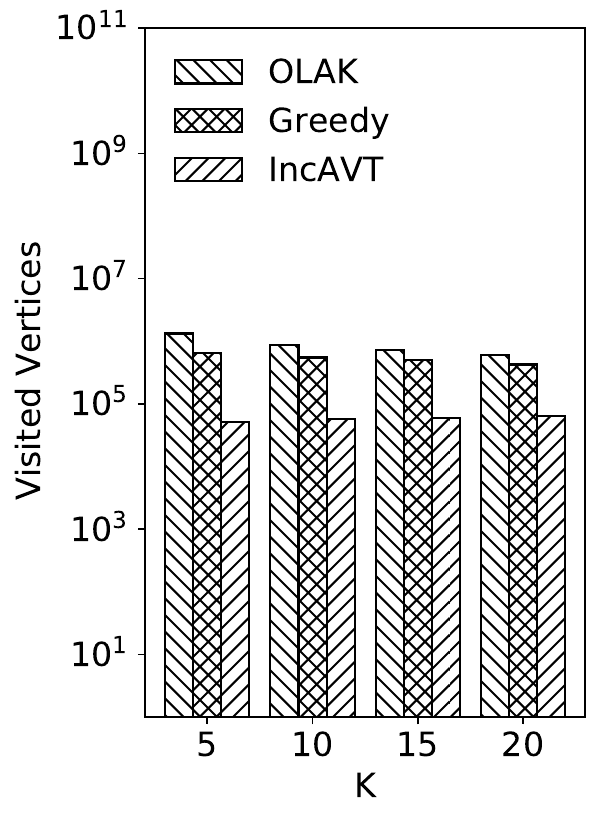}
	}
	\subfigure[\TTao{CollegeMsg}]{	\label{R3_exp:anchor_vary_k6}		
		\includegraphics[width=0.30\linewidth]{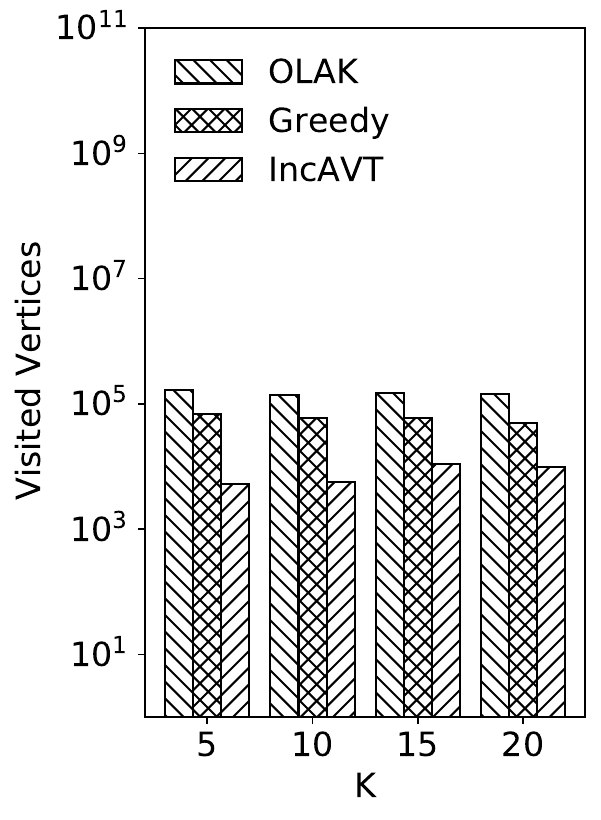}
	}
		\vspace{-2mm}
	\caption{\TT{Number of candidate anchored vertices with varying $k$}}
	
	\label{fig:visited_candidates_k}
\end{figure}

\subsubsection{Varying Snapshot Size $T$}
We also test our proposed algorithms by varying $T$ from 2 to 30.
Specifically, Figure~\ref{R3:exp:vary_t1} - \ref{R3:exp:vary_t3} present 
the running time with varied values of $T$ \TT{in \textit{email-Enron, Gnutella, and Deezer}}. 
The results show similar findings that \textit{IncAVT} outperforms \textit{OLAK}, \textit{Greedy}, and \textit{RCM} significantly in efficiency as it utilizes the smoothness of the network structure in evolving network to reduce the visited candidate anchored vertices. 
Meanwhile, the speed of running time increasing in \textit{IncAVT} is much slower than the other three algorithms in each snapshot when $T$ increases. In other words, the performance advantage of \textit{IncAVT} will enhance with the increase of the network snapshot size. \TT{The above experimental results verify the excellent performance of our \textit{IncAVT} when the network is smoothly evolving, which is claimed in the contributions part of Section~\ref{sec:intro} in this paper.}  
Figure~\ref{R3:exp:vary_t4} - \TTao{\ref{R3:exp:vary_t6}} show the running time of these approaches on \TTao{three} real-world temporal datasets \textit{eu-core}, \textit{mathoverflow}\TTao{, and \textit{CollegeMsg}} when $T$ is varied. We observe that our optimized \textit{Greedy} method always performs better than \textit{OLAK} and \textit{RCM} for all varied $T$ values in \textit{eu-core} and \textit{mathoverflow}. As expected, in \textit{eu-core}, when $T \leq 20$, the performance of \textit{IncAVT} is significantly better than the other three methods; Besides, the running time of \TT{IncAVT} significantly increases when $T = 21$, and then increased slowly with the increases of $T$. This is because the efficiency of $K$-order maintenance will downgrade when the percentage of updated edges is high (\textit{i.e., 17\% percentage of edges updated at snapshot $T = 21$ in \textit{eu-core})}. In fact, the above phenomenon is the inherent character of the core maintenance technical strategy \textit{(e.g., Zhang et al.~\cite{DBLP:conf/icde/ZhangYZQ17} reported that their core maintenance related method decreased above five times when the percentage of updated edges increasing from $1\%$ to $5\%$)}. In addition, Figure~\ref{R3:exp:vary_t5} - \TTao{Figure~\ref{R3:exp:vary_t6}} show that even the performance of our \textit{IncAVT} method decreases at $T= 16$ in \textit{mathoverflow} \TTao{and $T = 22$ in \textit{CollegeMsg}}, when many edges are updated in \TTao{these two periods}, \textit{IncAVT} still performs better than \textit{OLAK} for all values of $T$. 


\begin{figure}[t!]
	\centering
	\subfigure[email-Enron]{	\label{R3:exp:vary_t1}
		\includegraphics[width=0.3\linewidth]{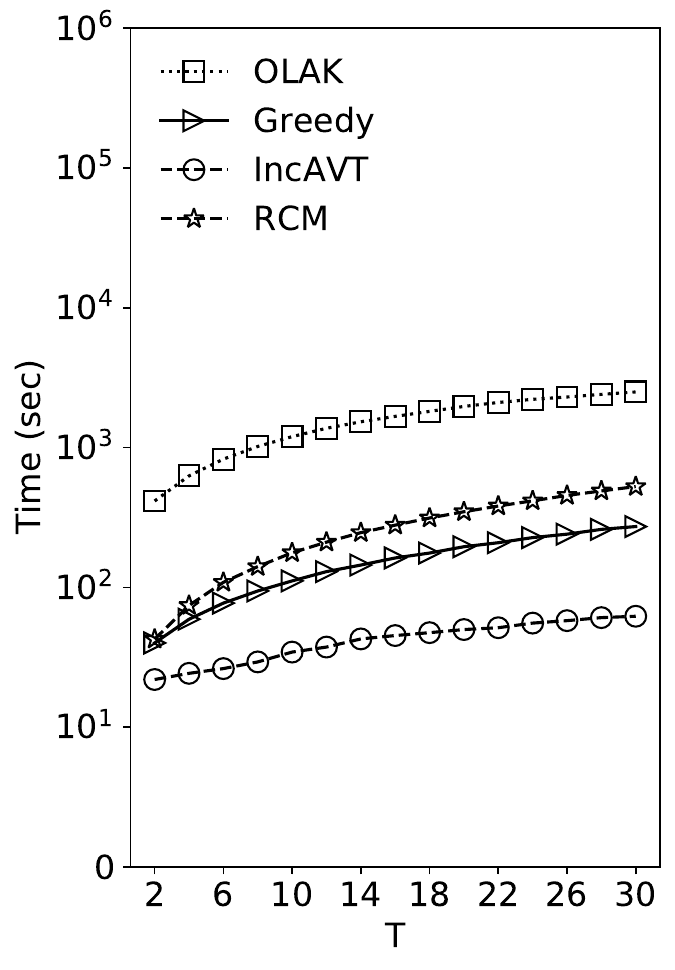}
	}
	\subfigure[Gnutella]{	\label{R3:exp:vary_t2}		
		\includegraphics[width=0.3\linewidth]{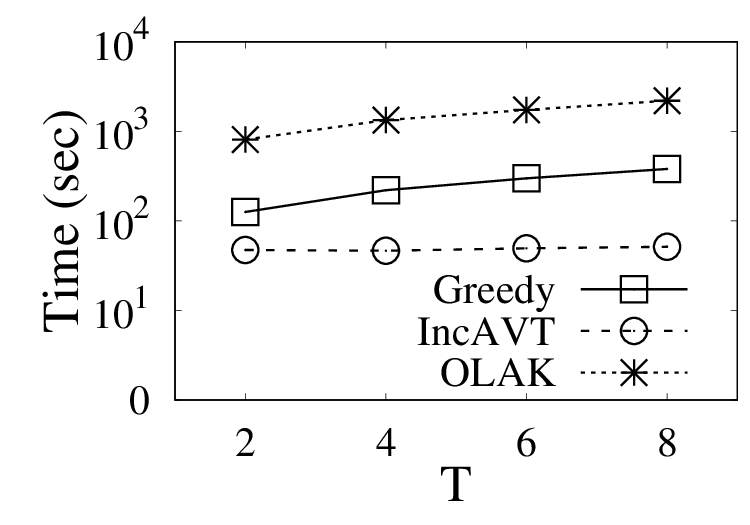}
	}
	\subfigure[Deezer]{	\label{R3:exp:vary_t3}
		\includegraphics[width=0.3\linewidth]{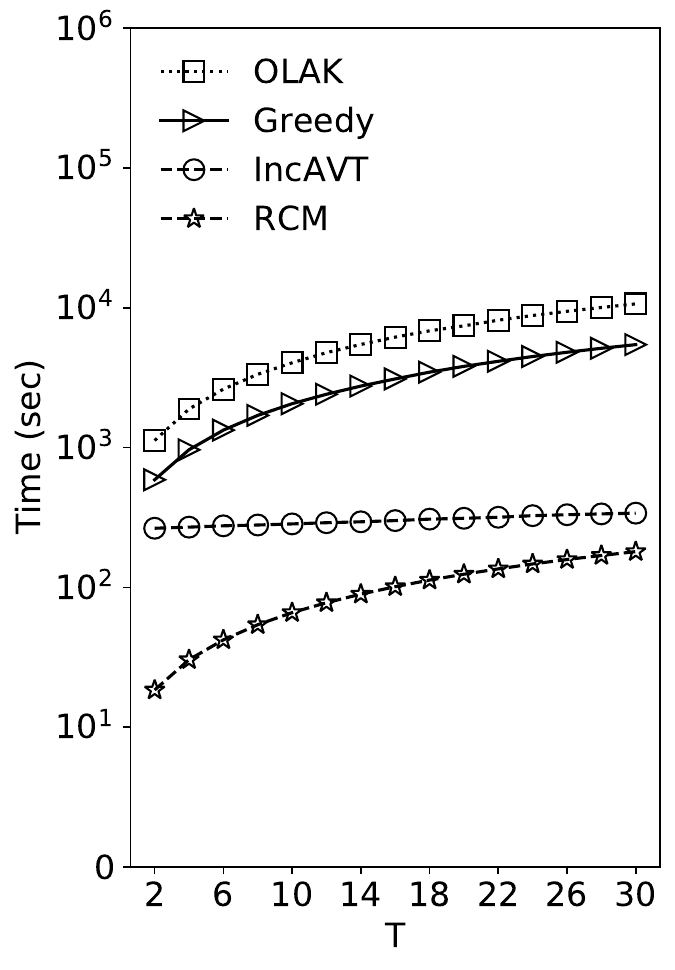}
	}
	\subfigure[eu-core]{	\label{R3:exp:vary_t4}		
		\includegraphics[width=0.3\linewidth]{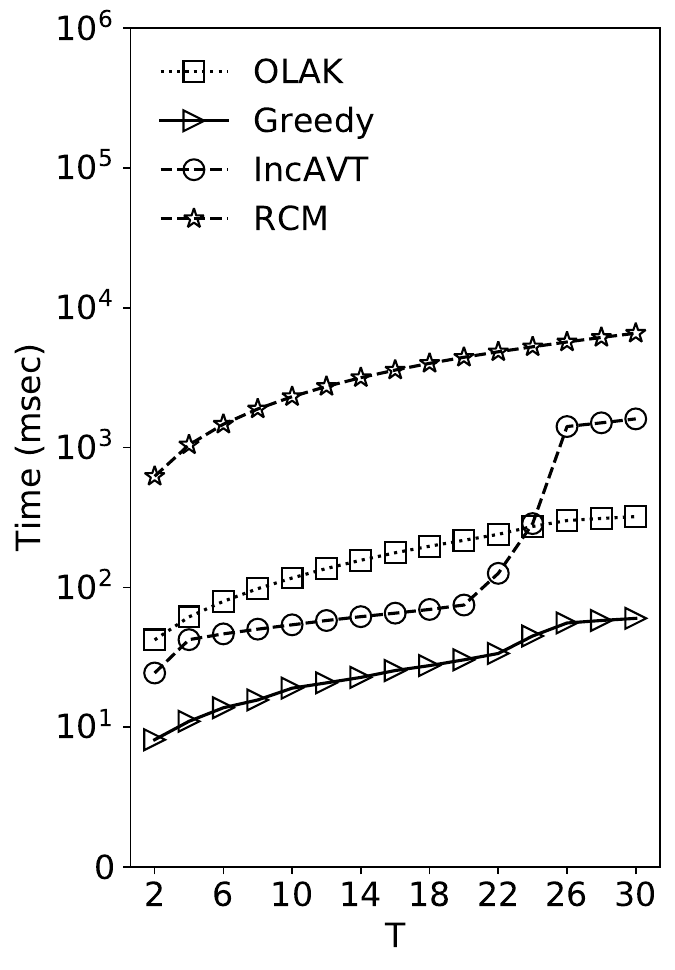}
	}
	\subfigure[mathoverflow]{	\label{R3:exp:vary_t5}
		\includegraphics[width=0.3\linewidth]{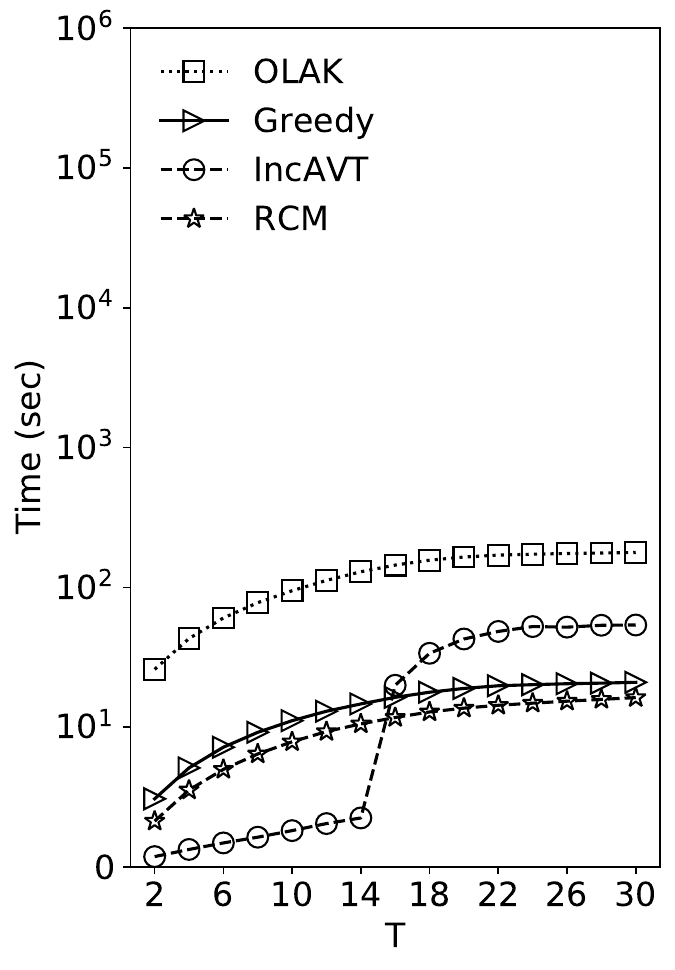}
	}
	\subfigure[\TTao{CollegeMsg}]{	\label{R3:exp:vary_t6}
		\includegraphics[width=0.3\linewidth]{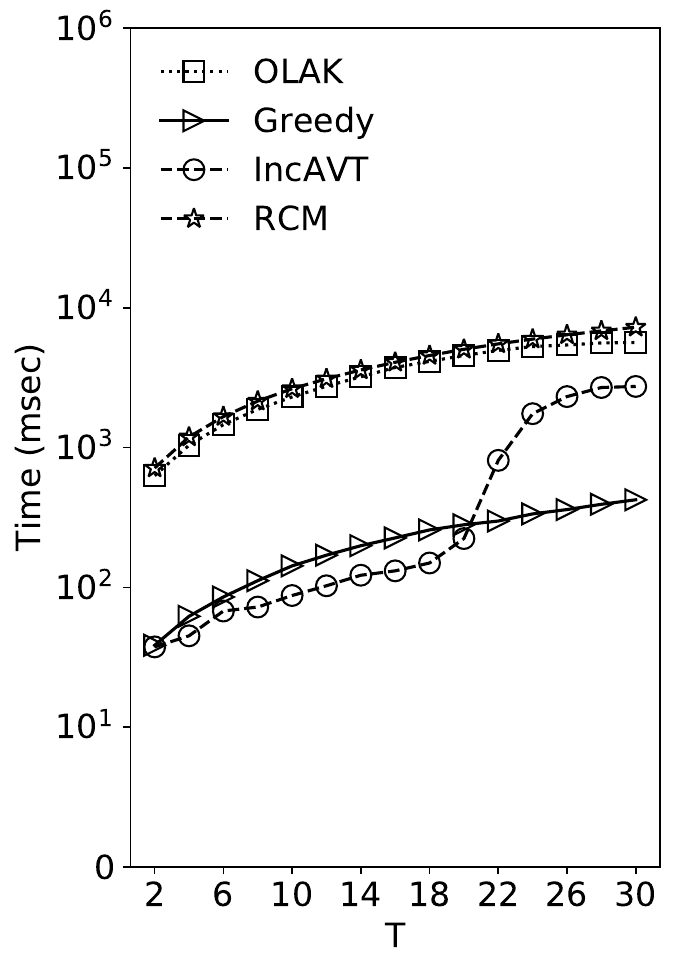}
	}
		\vspace{-2mm}
	\caption{\TT{Time cost of algorithms with varying $T$}}
	\label{fig:running_time_t}
\end{figure}

Figure~\ref{R3_exp:anchor_vary_t1} - \TTao{\ref{R3_exp:anchor_vary_t6}} report our further evaluation on the number of visited candidate anchored vertices when $T$ is varied.  
As expected, \textit{IncAVT} has the minimum number of visited candidate anchored vertices than the other two approaches. What is more, the number of visited candidate anchored vertices by \textit{IncAVT} in each snapshot is steady than \textit{Greedy} and \textit{OLAK}.

\begin{figure}[t!]
	\centering
		\subfigure[email-Enron]{	\label{R3_exp:anchor_vary_t1}		
		\includegraphics[width=0.3\linewidth]{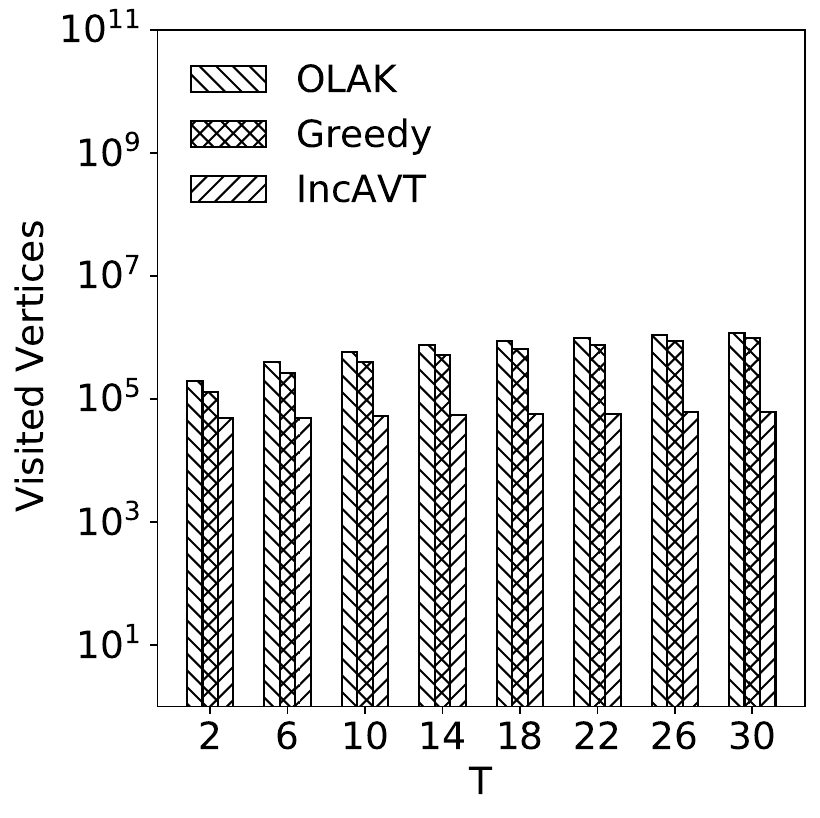}
	}
	\subfigure[Gnutella]{	\label{R3_exp:anchor_vary_t2}
		\includegraphics[width=0.3\linewidth]{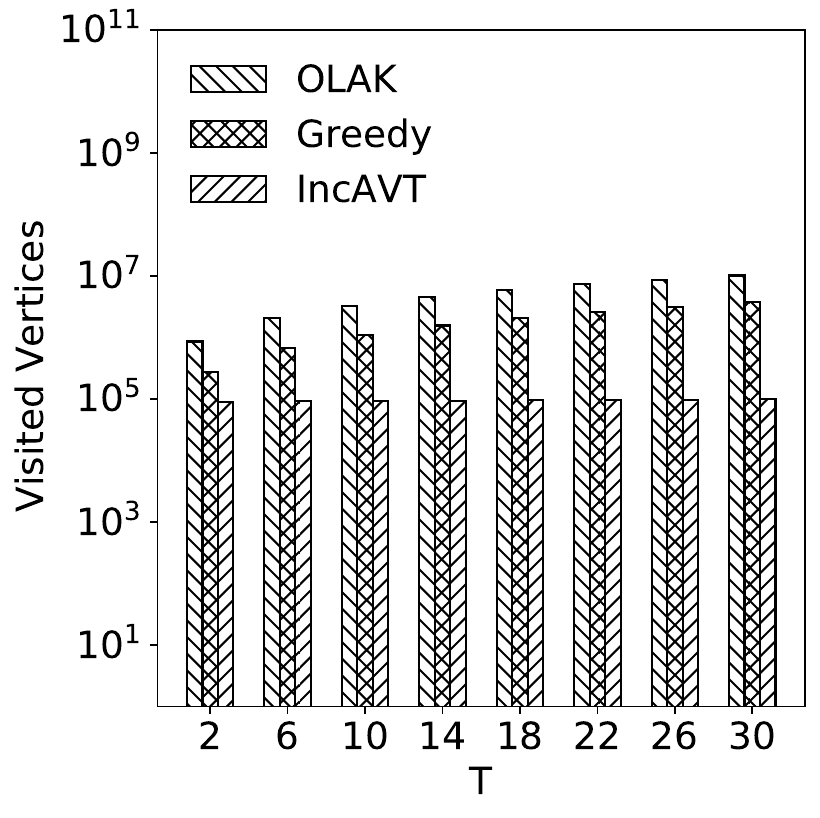}
	}
	\subfigure[Deezer]{	\label{R3_exp:anchor_vary_t3}
	\includegraphics[width=0.3\linewidth]{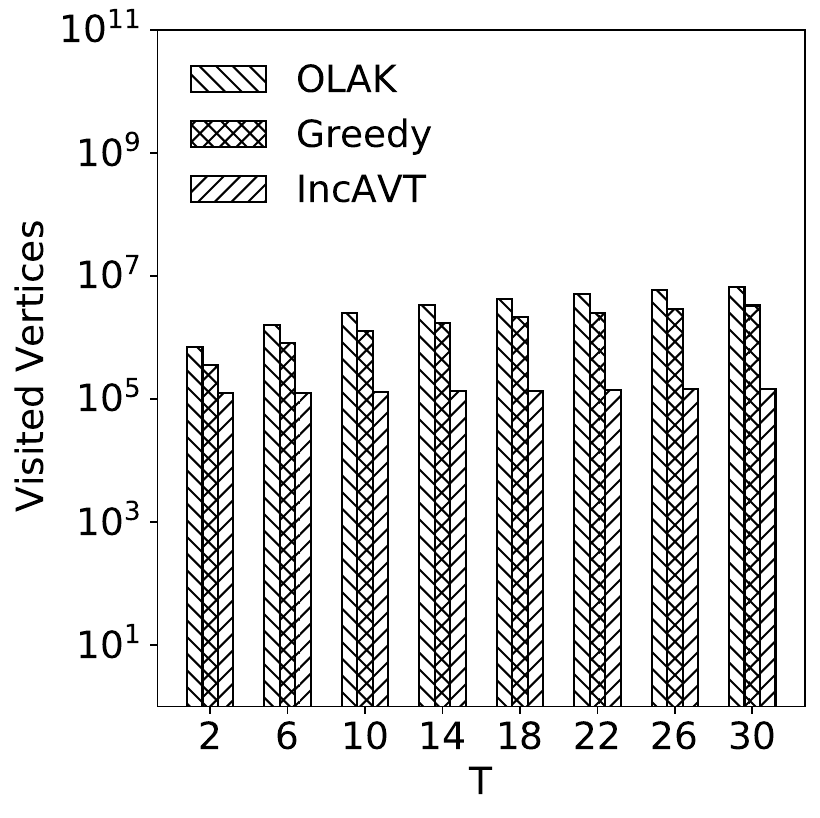}
	}
	\subfigure[eu-core]{	\label{R3_exp:anchor_vary_t4}		
		\includegraphics[width=0.3\linewidth]{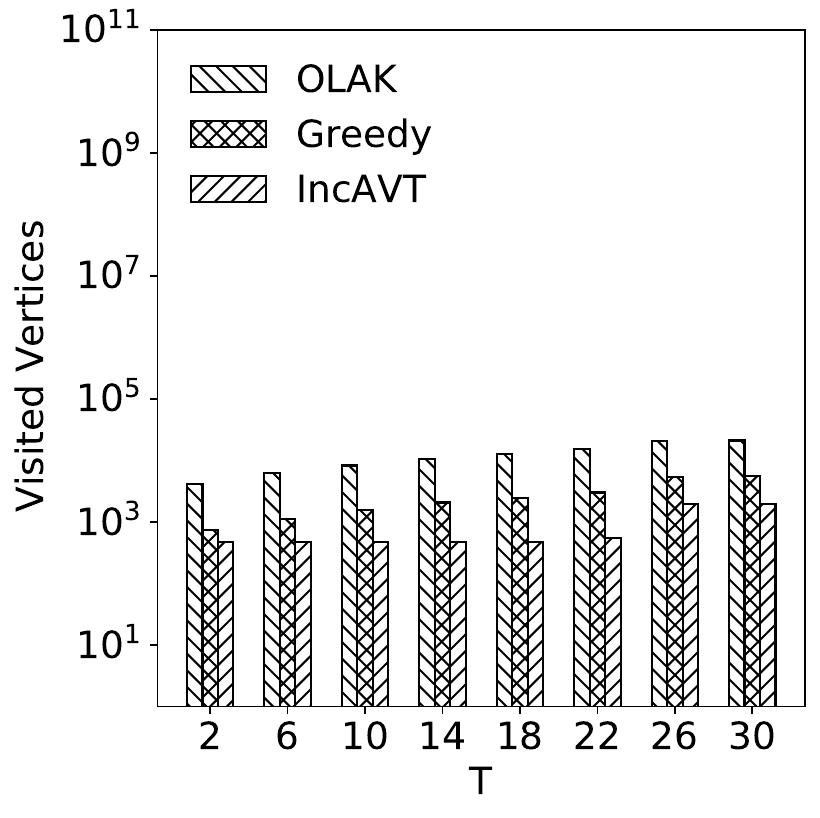}
	}
	\subfigure[mathoverflow]{	\label{R3_exp:anchor_vary_t5}		
		\includegraphics[width=0.3\linewidth]{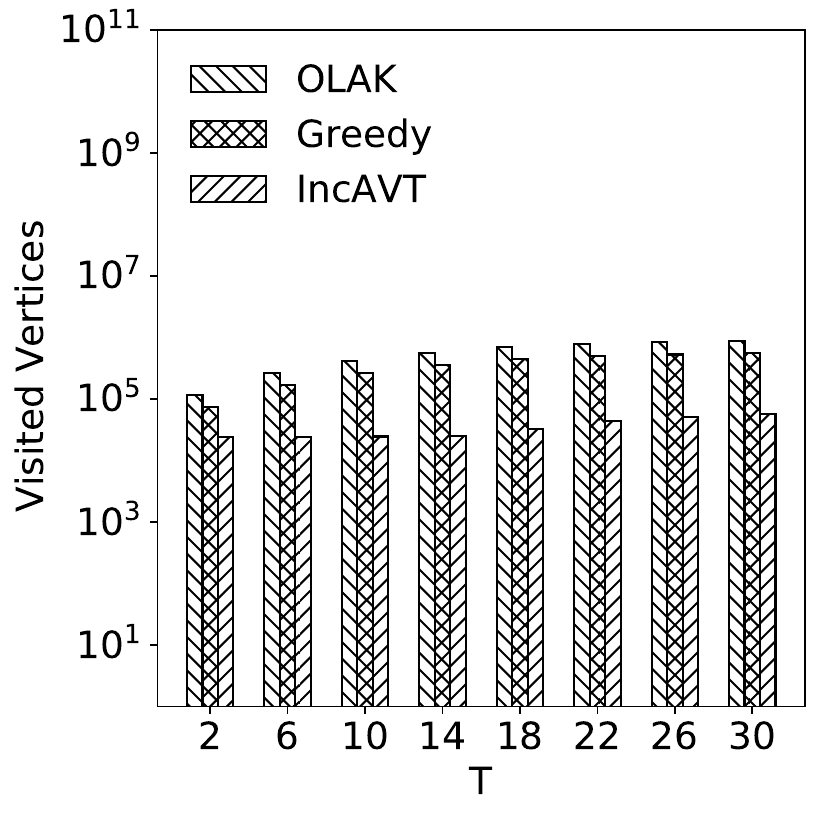}
	}
	\subfigure[\TTao{CollegeMsg}]{	\label{R3_exp:anchor_vary_t6}		
		\includegraphics[width=0.3\linewidth]{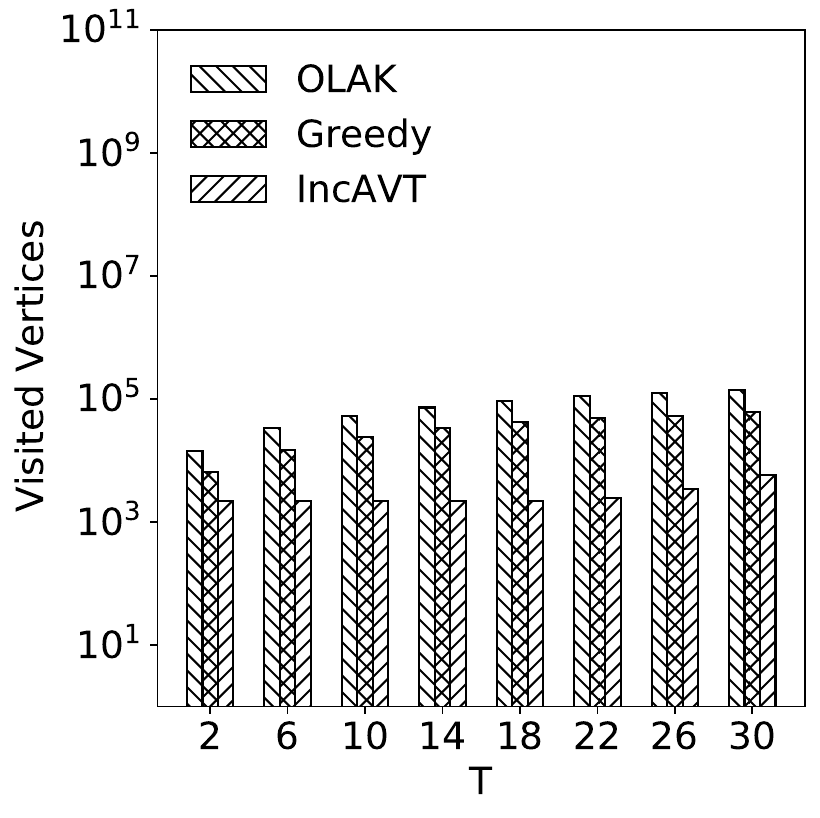}
	}
		\vspace{-2mm}
	\caption{\TT{Number of candidate anchored vertices with varying $T$}}
	\label{fig:visited_candidates_t}
\end{figure}

\subsubsection{Varying Anchored Vertex Set Size $l$}
Figure \ref{R3:exp:vary_l1} - \TTao{\ref{R3:exp:vary_l6}} show the average running time of the approaches by varying $l$ from 5 to 20. 
As we can see, \textit{IncAVT} is significantly efficient than \textit{Greedy} and \textit{OLAK} in \TT{\textit{email-Enron, Gnutella, Deezer, eu-core}, \textit{mathoverflow}}, \TTao{and \textit{CollegeMsg}}. Specifically, \textit{IncAVT} can reduce the running time by around 36 times and 230 times compared with \textit{Greedy} and \textit{OLAK} respectively under different $l$ settings on the \textit{Gnutella} dataset.
The improvements are built on the facts that \textit{IncAVT} visits less number of candidate anchored vertices than \textit{Greedy} and \textit{OLAK}.  Besides, \textit{IncAVT} performs far well than \textit{RCM} in \textit{Enron} and \textit{Gnutella}. Meanwhile, the running time of \textit{IncAVT} is slightly higher than \textit{RCM} in \textit{Deezer}. From the result, we notice that the performance of above approaches are also influenced by the type of networks.

\begin{figure}[t!]
	\centering
	\subfigure[email-Enron]{	\label{R3:exp:vary_l1}
		\includegraphics[width=0.3\linewidth]{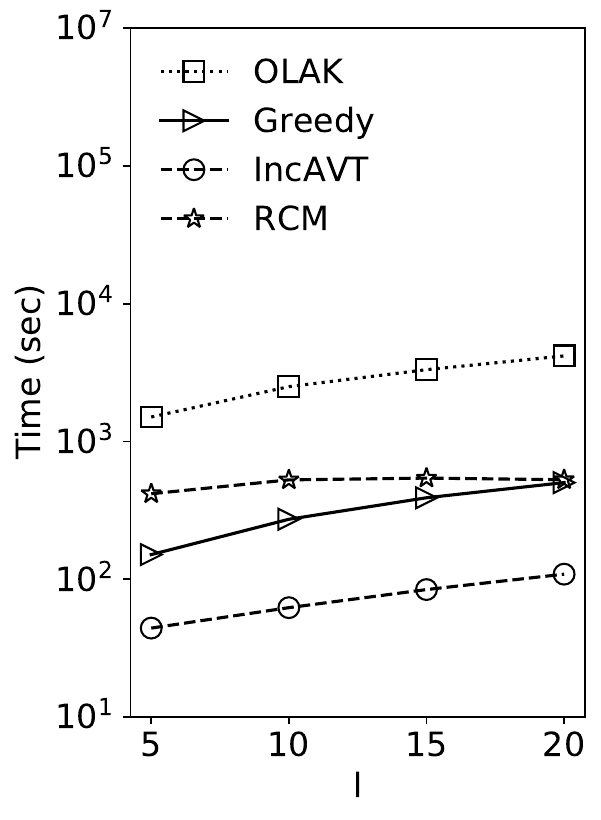}
	}
	\subfigure[Gnutella]{	\label{R3:exp:vary_l2}		
		\includegraphics[width=0.3\linewidth]{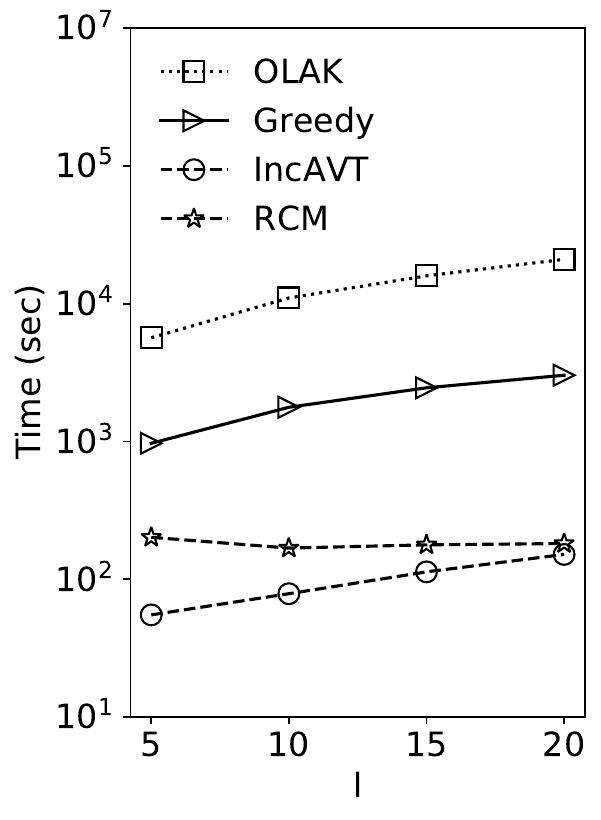}
	}
	\subfigure[Deezer]{	\label{R3:exp:vary_l3}		
		\includegraphics[width=0.3\linewidth]{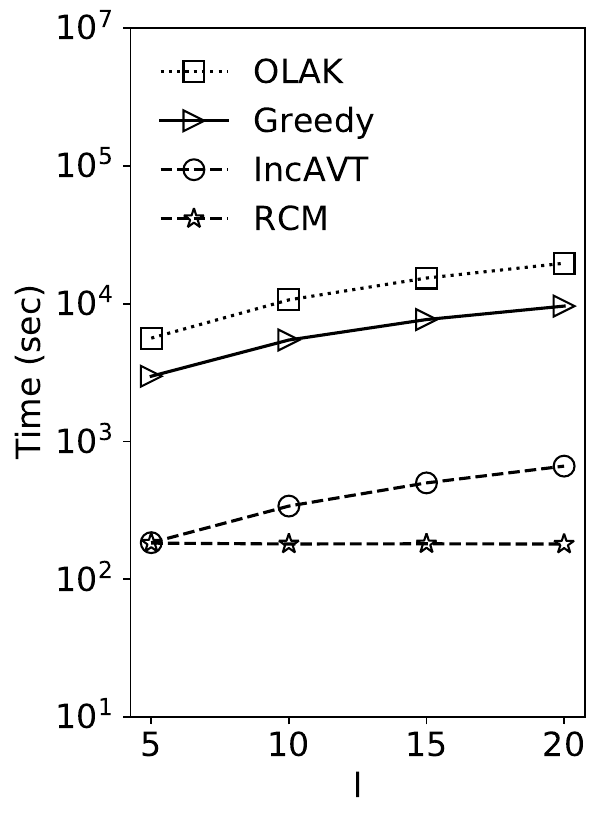}
	}
	\subfigure[eu-core]{	\label{R3:exp:vary_l4}		
		\includegraphics[width=0.3\linewidth]{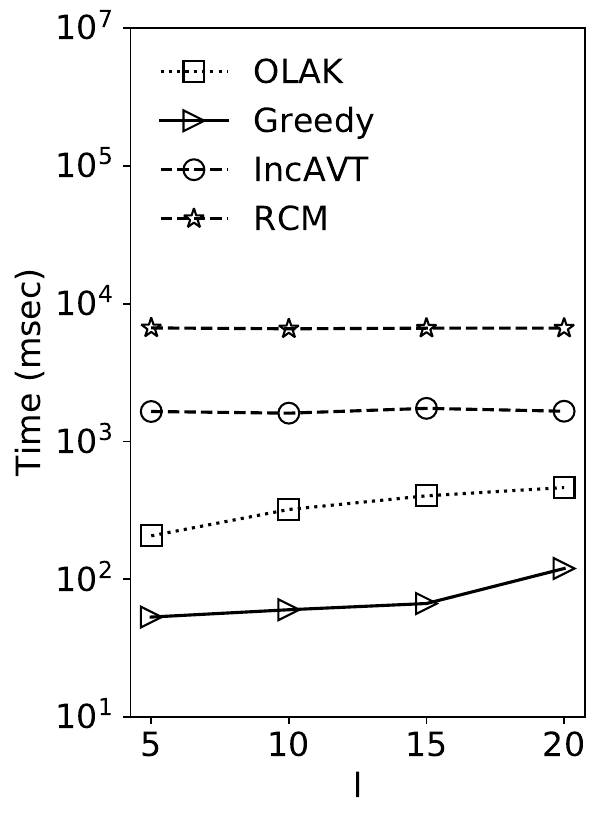}
	}
	\subfigure[mathoverflow]{	\label{R3:exp:vary_l5}		
		\includegraphics[width=0.3\linewidth]{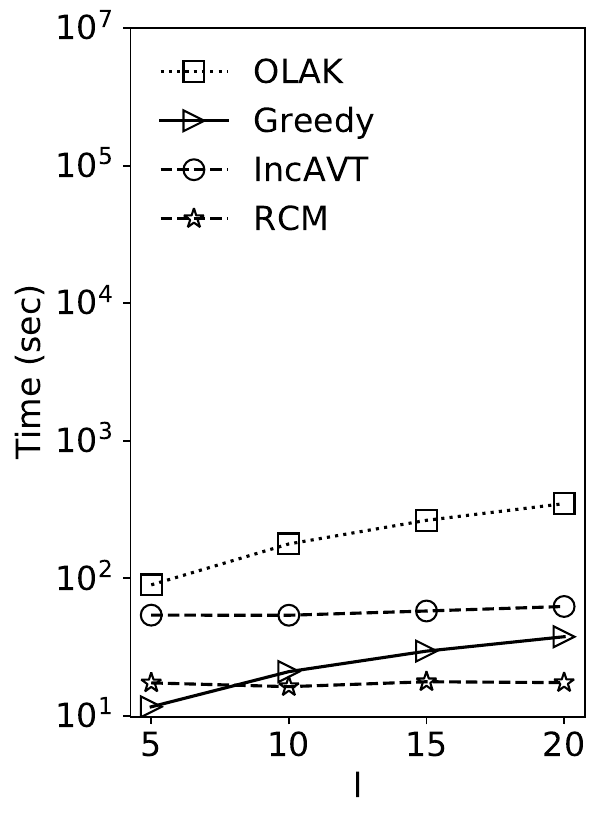}
	}
	\subfigure[\TTao{CollegeMsg}]{	\label{R3:exp:vary_l6}		
		\includegraphics[width=0.3\linewidth]{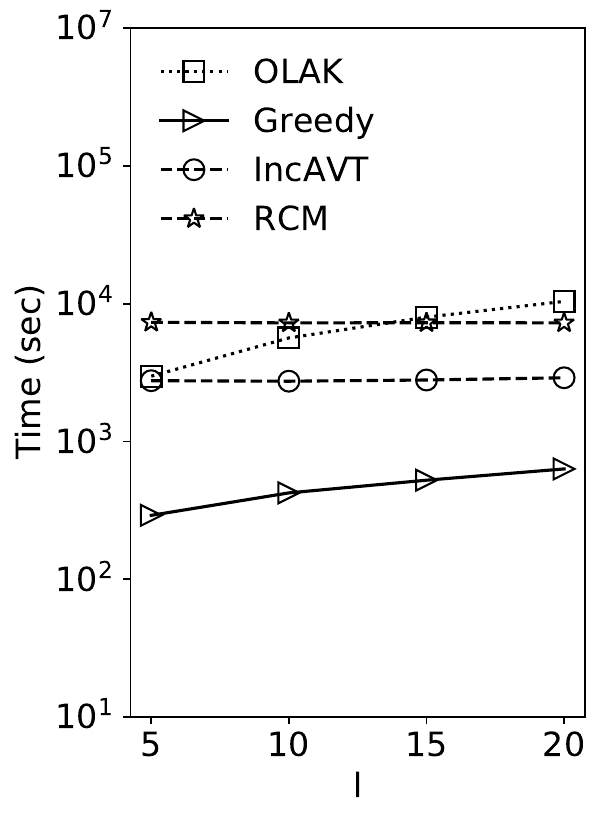}
	}
		\vspace{-2mm}
	\caption{\TT{Time cost of algorithms with varying $l$}}
	\label{fig:runing_time_l}
\end{figure}

Figure~\ref{R3:exp:anchor_vary_l1} - \TTao{\ref{R3:exp:anchor_vary_l6}} show the total number of visited anchored vertices.   
We can see that \textit{IncAVT} visits much less anchored vertices than the other two methods even though it shows a slightly increased number of visited vertices as $l$ increases.
The visited candidate anchored vertices in \textit{OLAK} is around 2.8 times more than \textit{Greedy}, and 102 times more than \textit{IncAVT} on the Gnutella dataset. 
The total number of visited candidate anchored vertex set in \textit{IncAVT} is minimum during the anchored vertex tracking process across all the datasets.

\begin{figure}[t!]
	\centering
	\subfigure[email-Enron]{	\label{R3:exp:anchor_vary_l1}		
		\includegraphics[width=0.3\linewidth]{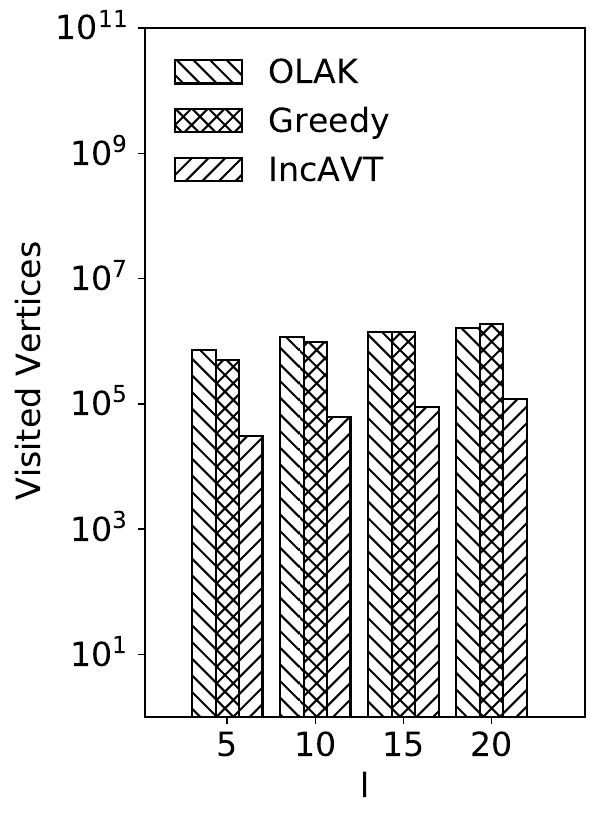}
	}
		\subfigure[Gnutella]{	\label{R3:exp:anchor_vary_l2}
		\includegraphics[width=0.3\linewidth]{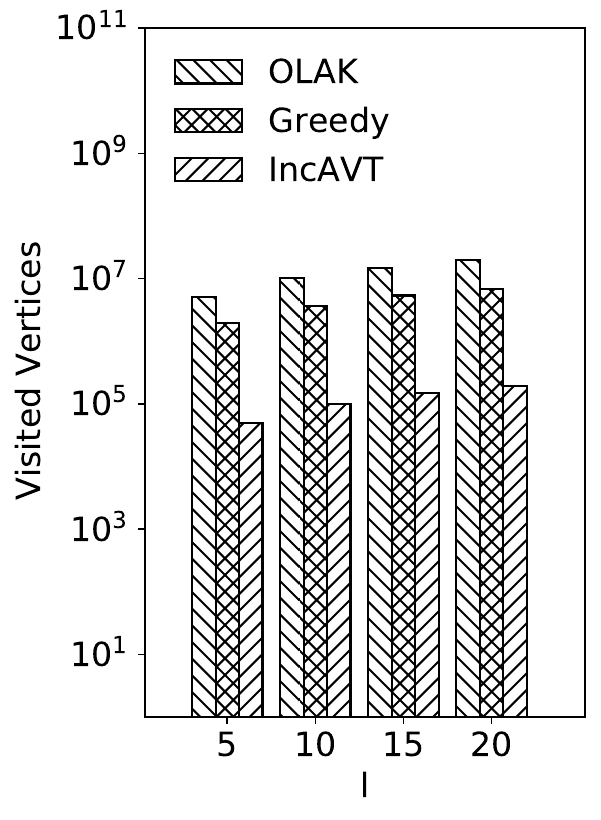}
	}
	\subfigure[Deezer]{	\label{R3:exp:anchor_vary_l3}
		\includegraphics[width=0.3\linewidth]{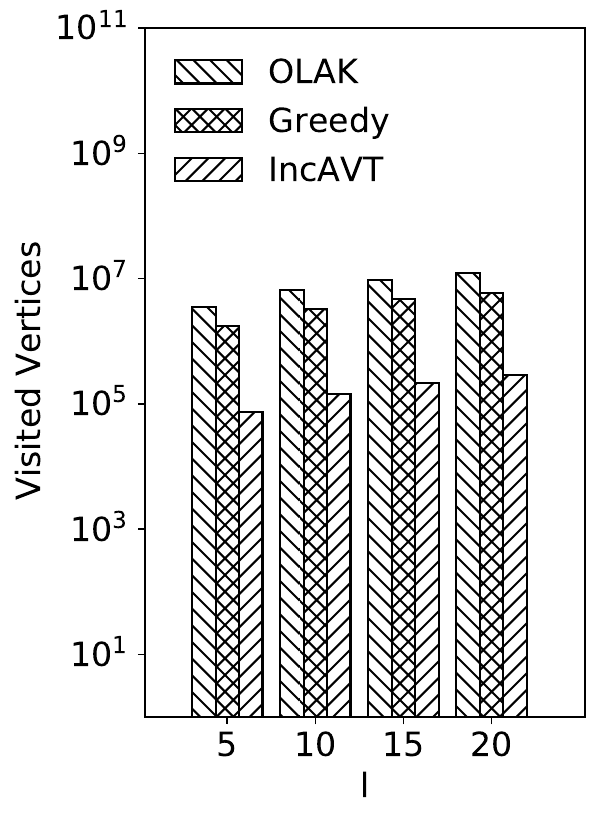}
	}
	\subfigure[eu-core]{	\label{R3:exp:anchor_vary_l4}		
		\includegraphics[width=0.3\linewidth]{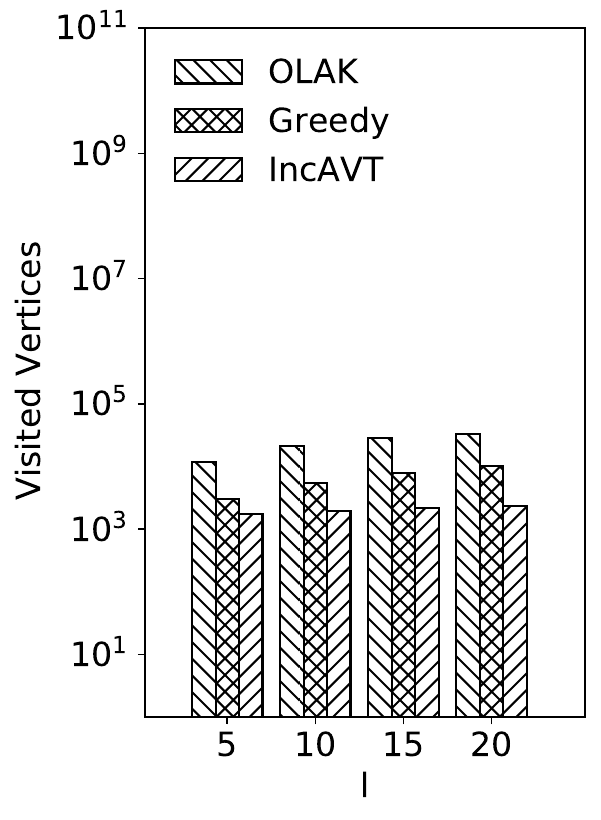}
	}
	\subfigure[mathoverflow]{	\label{R3:exp:anchor_vary_l5}		
		\includegraphics[width=0.3\linewidth]{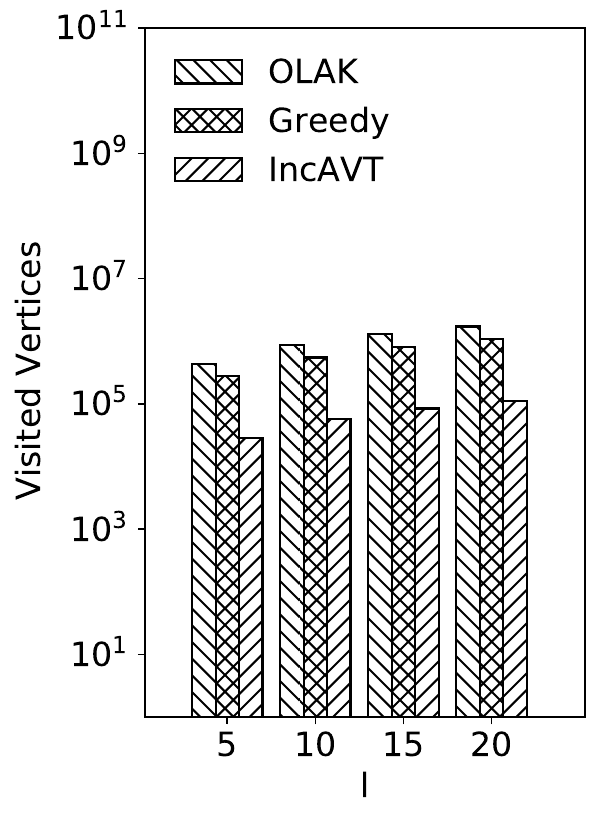}
	}
		\subfigure[\TTao{CollegeMsg}]{	\label{R3:exp:anchor_vary_l6}		
		\includegraphics[width=0.3\linewidth]{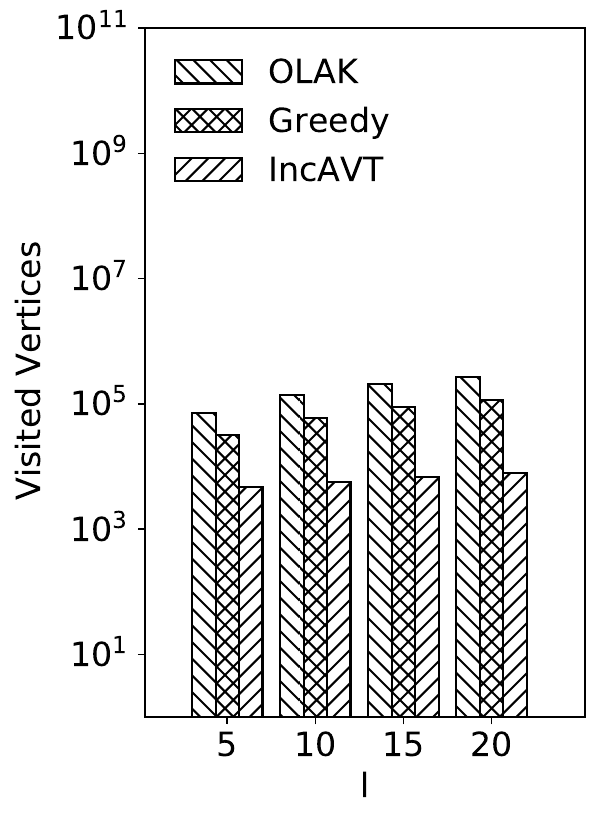}
	}
		\vspace{-2mm}
	\caption{\TT{Number of candidate anchored vertices with varying $l$}}
	\label{fig:visited_candidates_l}
	\vspace{-3mm}
\end{figure}

\subsection{Effectiveness Evaluation}
\begin{figure}[t!]
	\centering
	\subfigure[email-Enron]{	\label{R3:exp_follows:vary_t1}
		\includegraphics[width=0.3\linewidth]{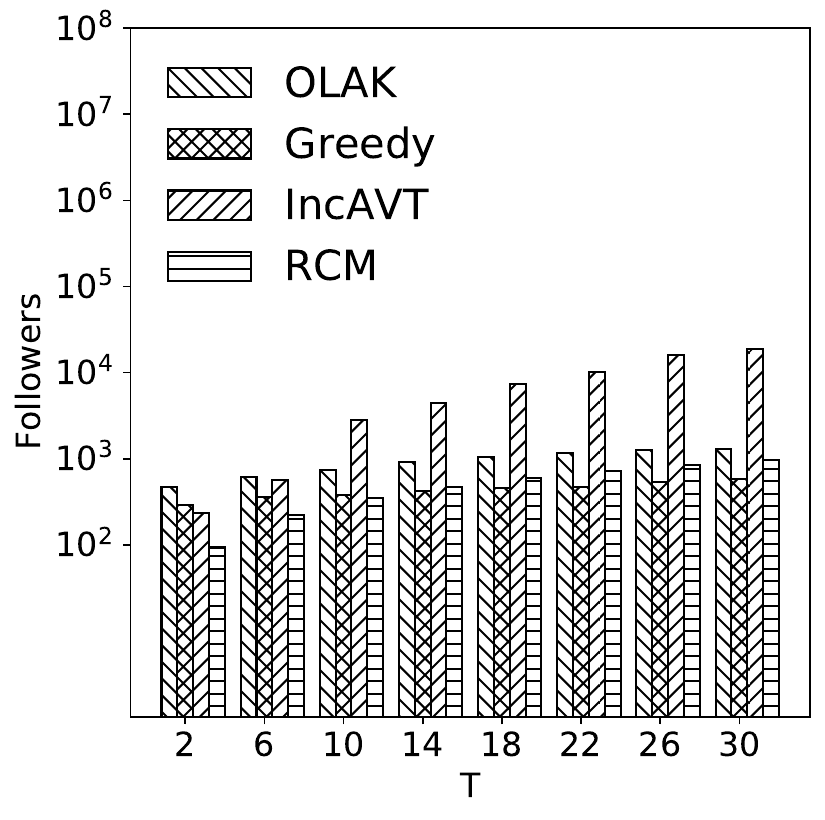}
	}
	\subfigure[Gnutella]{	\label{R3:exp_follows:vary_t2}		
		\includegraphics[width=0.3\linewidth]{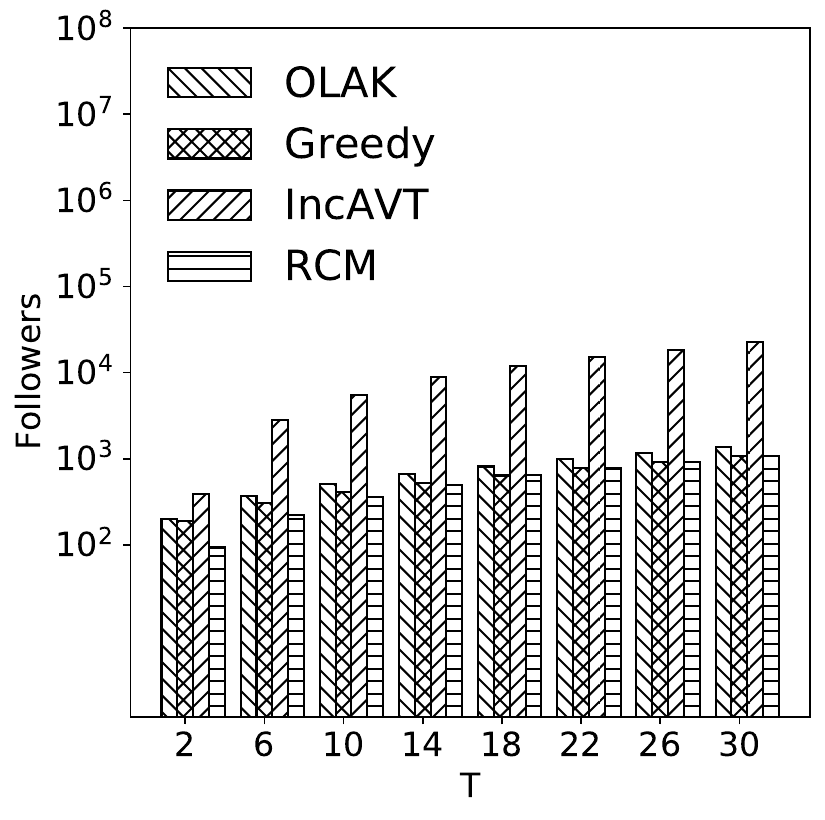}
	}
	\subfigure[Deezer]{	\label{R3:exp_follows:vary_t3}
		\includegraphics[width=0.3\linewidth]{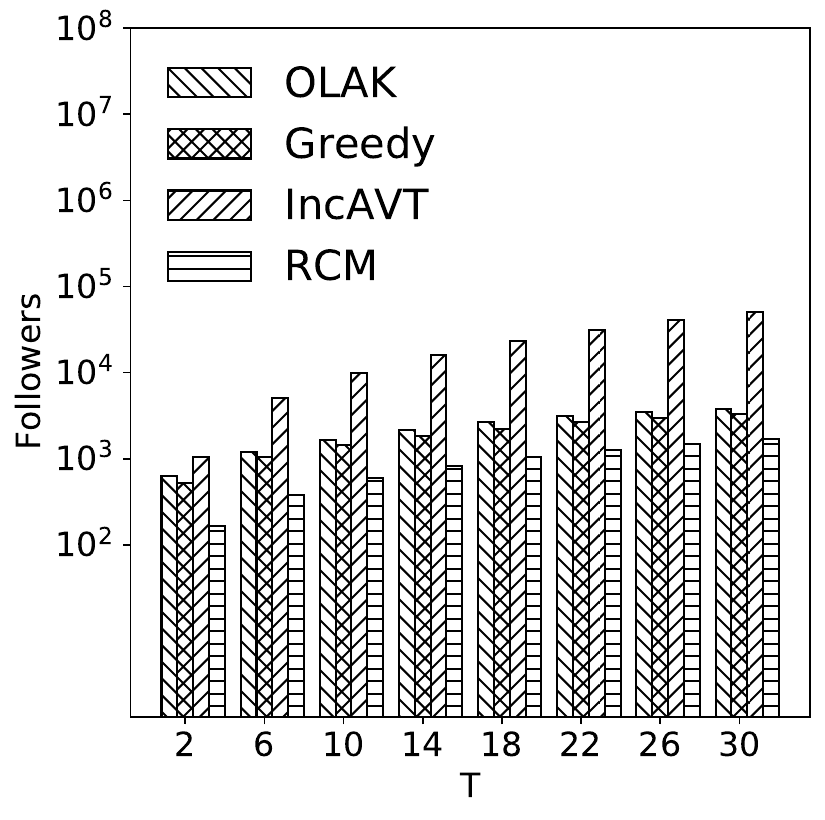}
	}
	\subfigure[eu-core]{	\label{R3:exp_follows:vary_t4}		
		\includegraphics[width=0.3\linewidth]{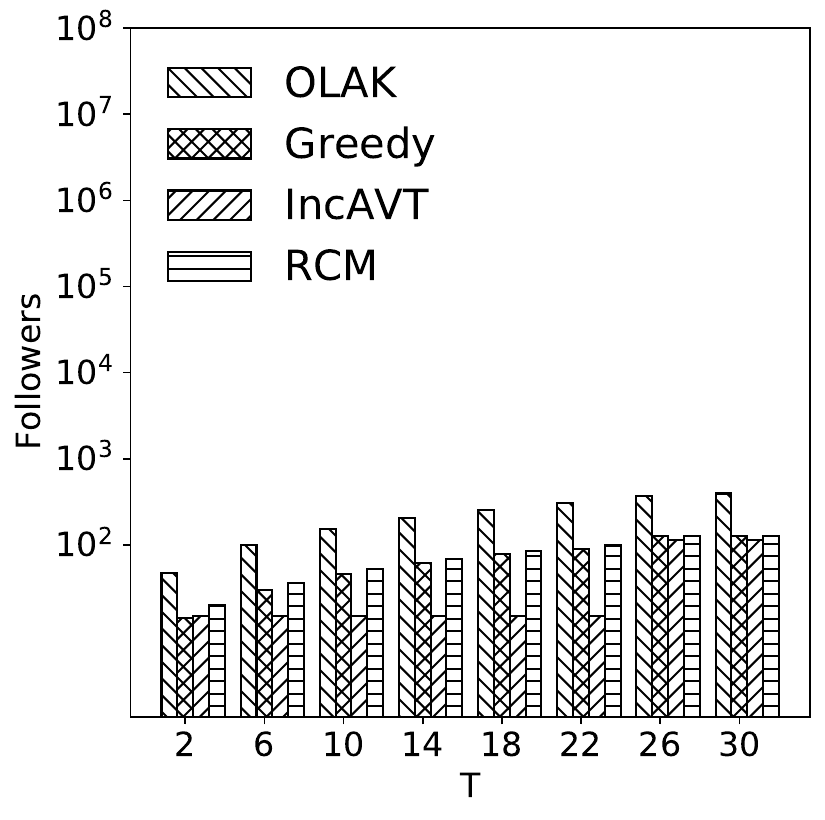}
	}
	\subfigure[mathoverflow]{	\label{R3:exp_follows:vary_t5}
		\includegraphics[width=0.3\linewidth]{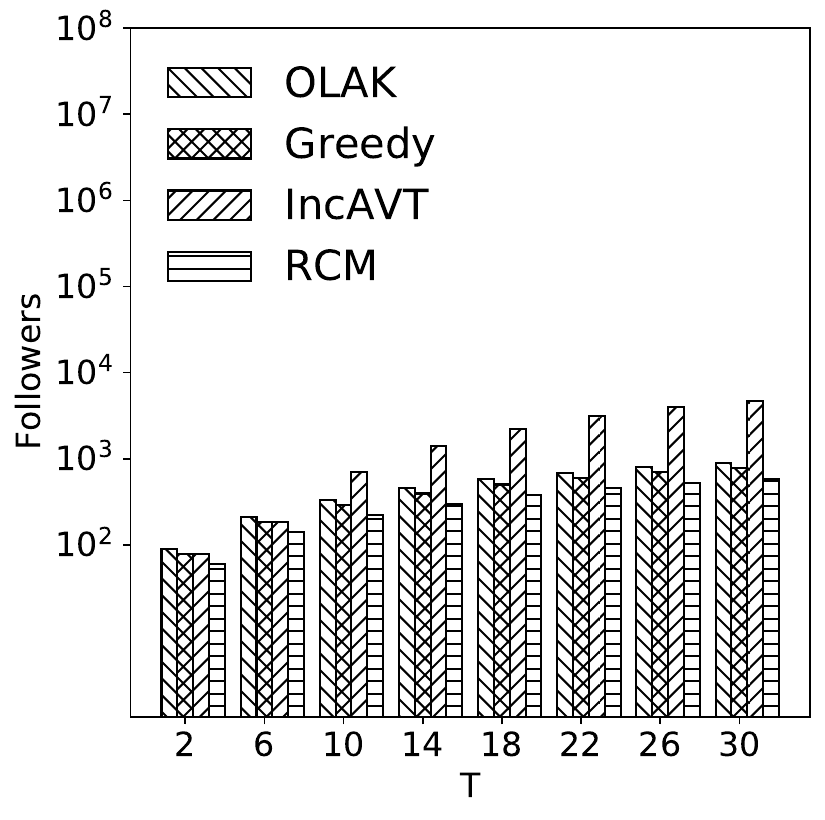}
	}
	\subfigure[\TTao{CollegeMsg}]{	\label{R3:exp_follows:vary_t6}
		\includegraphics[width=0.3\linewidth]{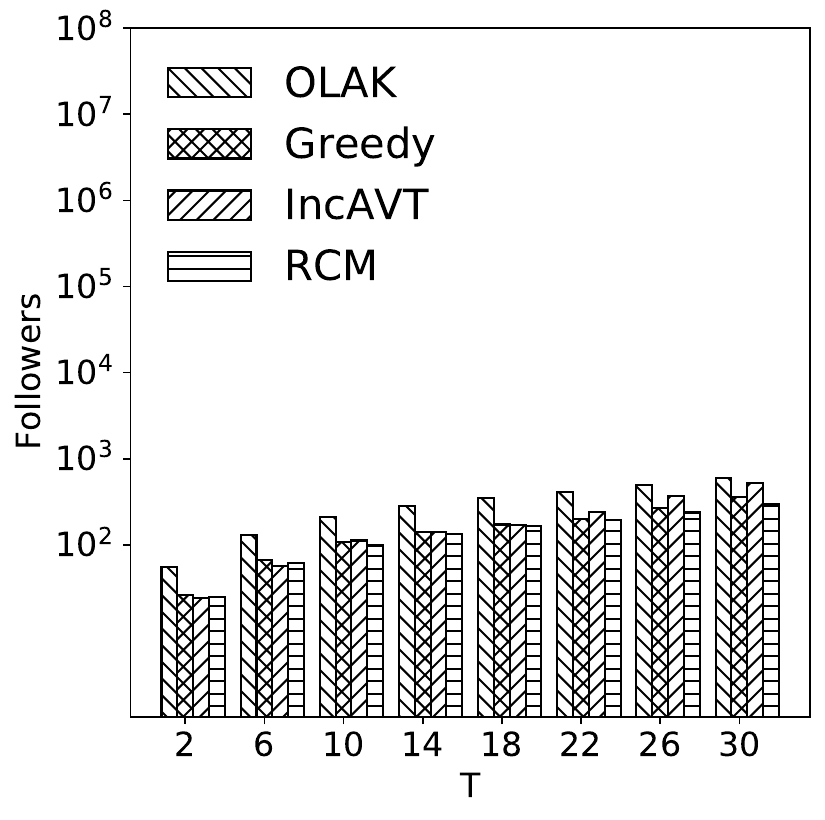}
	}
	\vspace{-2mm}
	\caption{\TT{Number of followers with varying $T$}}
	\label{fig:followers_t}
\end{figure}

\begin{figure}[ht]
	\centering
	\subfigure[email-Enron]{	\label{R3:exp_follows:vary_l1}
		\includegraphics[width=0.3\linewidth]{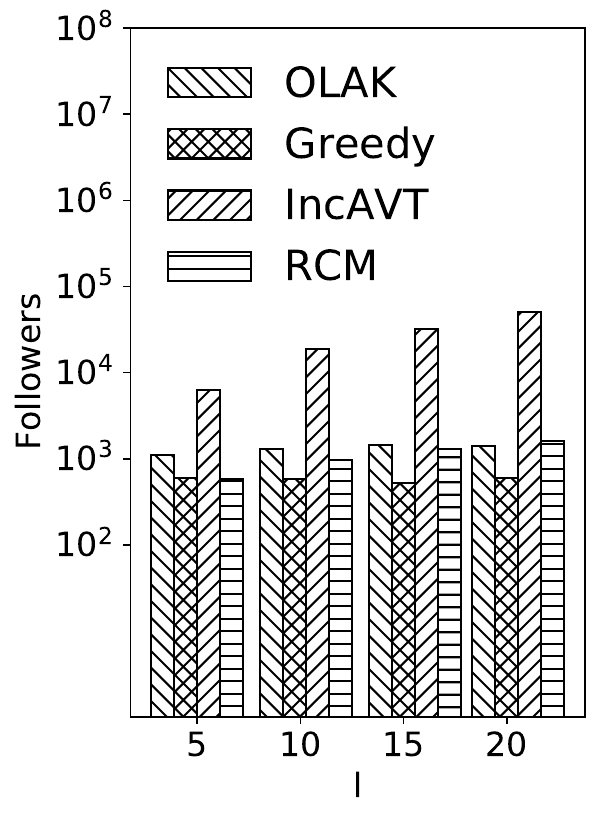}
	}
	\subfigure[Gnutella]{	\label{R3:exp_follows:vary_l2}		
		\includegraphics[width=0.3\linewidth]{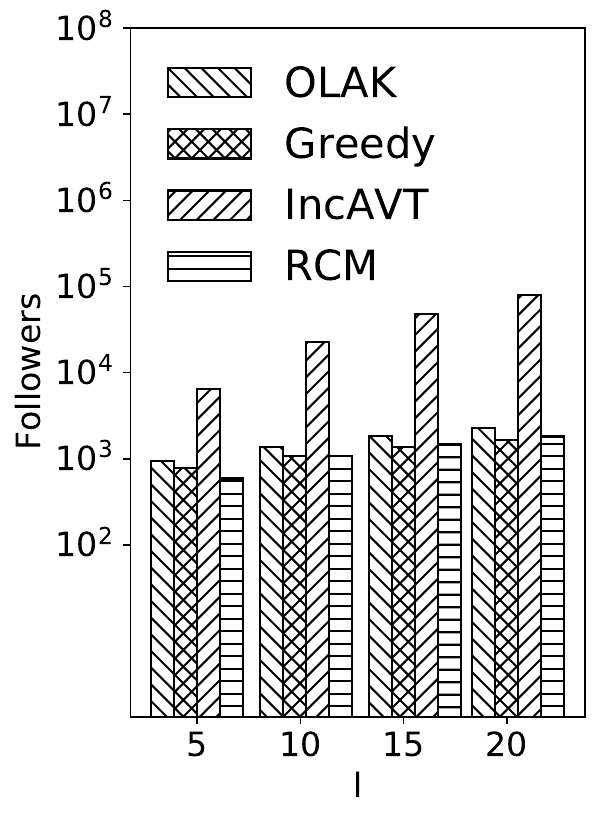}
	}
	\subfigure[Deezer]{	\label{R3:exp_follows:vary_l3}
		\includegraphics[width=0.3\linewidth]{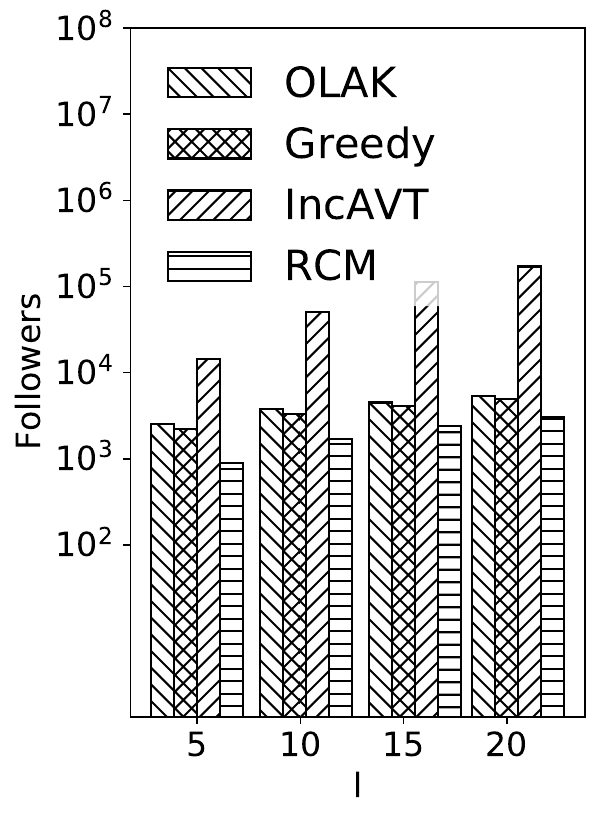}
	}
	\subfigure[eu-core]{	\label{R3:exp_follows:vary_l4}		
		\includegraphics[width=0.3\linewidth]{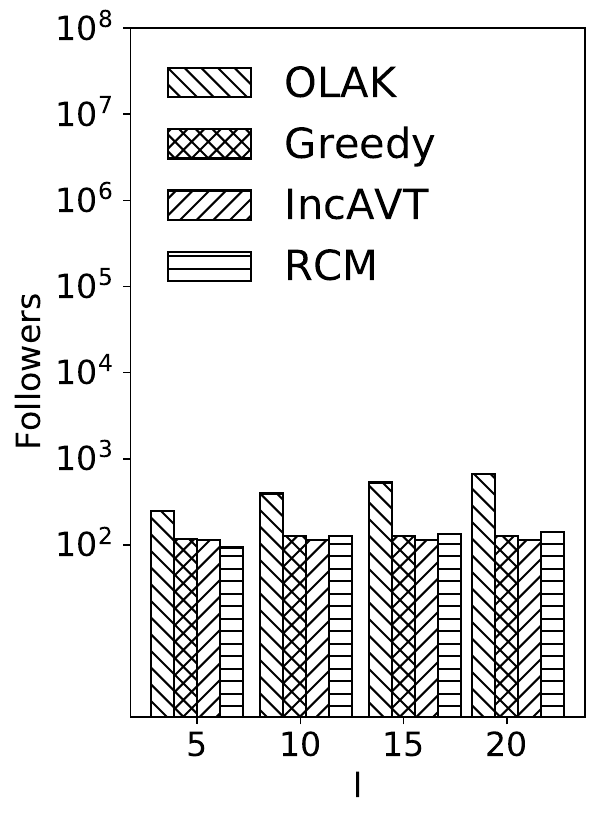}
	}
	\subfigure[mathoverflow]{	\label{R3:exp_follows:vary_l5}
		\includegraphics[width=0.3\linewidth]{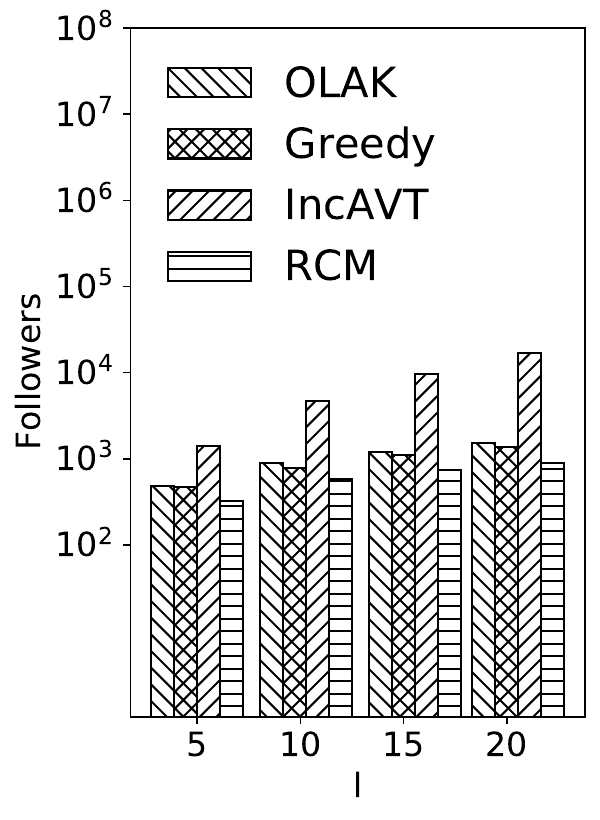}
	}
	\subfigure[\TTao{CollegeMsg}]{	\label{R3:exp_follows:vary_l6}
		\includegraphics[width=0.3\linewidth]{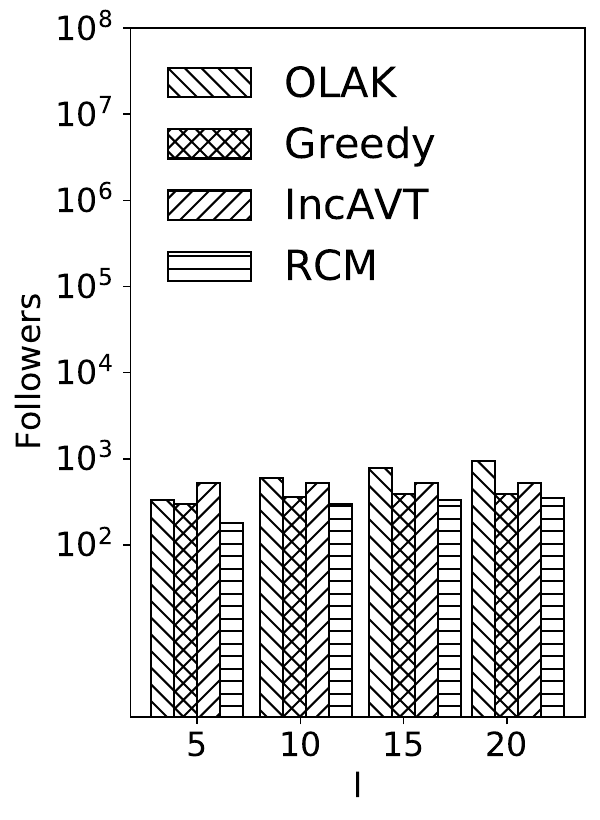}
	}

		\vspace{-2mm}
	\caption{\TT{Number of followers with varying $l$}}
	\label{fig:followers_l}
	\vspace{-3mm}
\end{figure}

\begin{figure}[ht]
	\centering
	\subfigure[email-Enron]{	\label{R3:exp_follows:vary_k1}
		\includegraphics[width=0.3\linewidth]{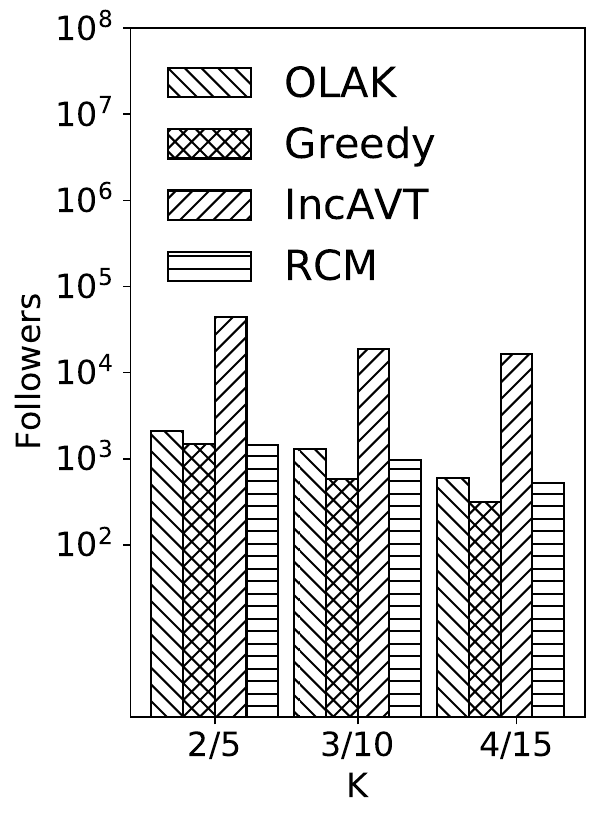}
	}
	\subfigure[Gnutella]{	\label{R3:exp_follows:vary_k2}		
		\includegraphics[width=0.3\linewidth]{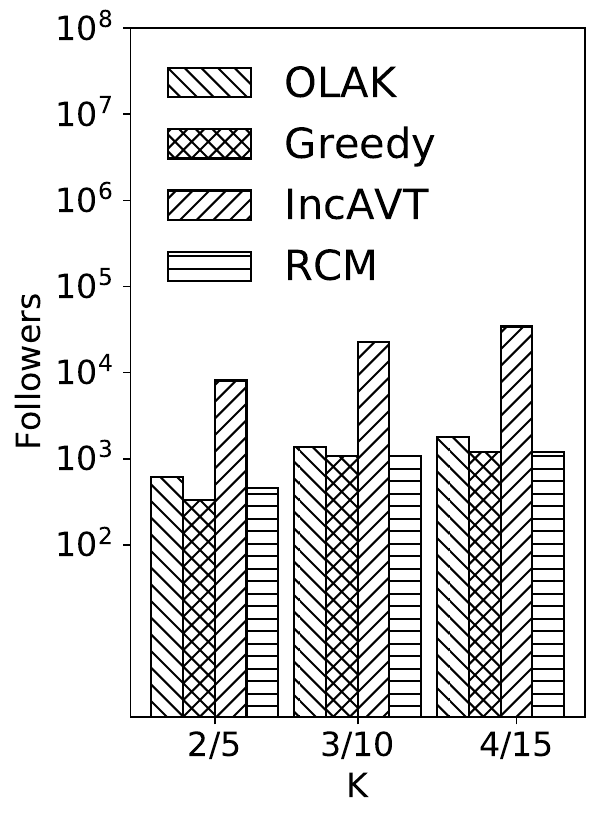}
	}
	\subfigure[Deezer]{	\label{R3:exp_follows:vary_k3}
		\includegraphics[width=0.3\linewidth]{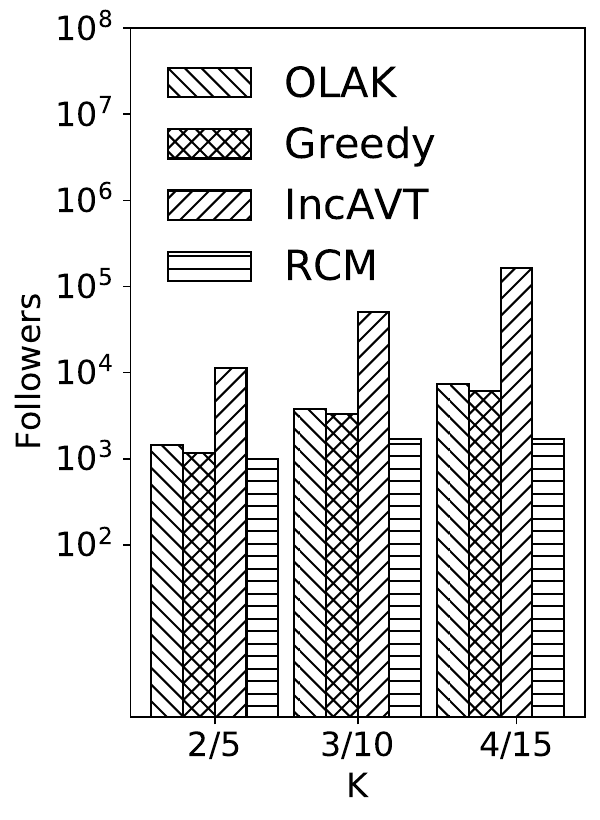}
	}
	\subfigure[eu-core]{	\label{R3:exp_follows:vary_k4}		
		\includegraphics[width=0.3\linewidth]{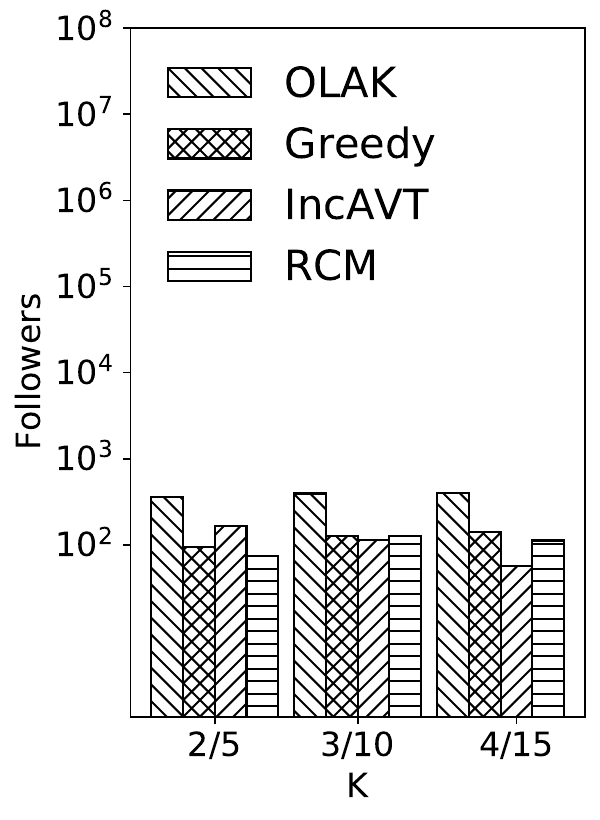}
	}
	\subfigure[mathoverflow]{	\label{R3:exp_follows:vary_k5}
		\includegraphics[width=0.3\linewidth]{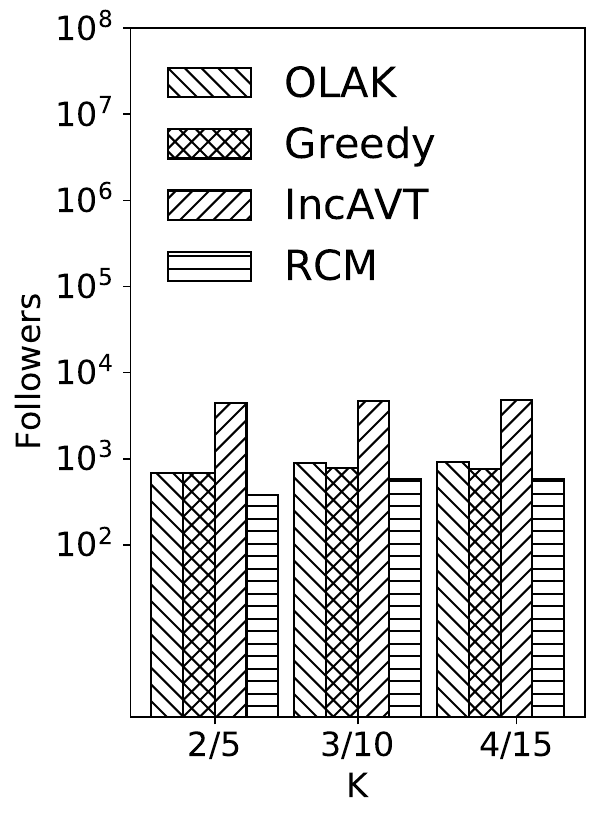}
	}
	\subfigure[\TTao{CollegeMsg}]{	\label{R3:exp_follows:vary_k6}
		\includegraphics[width=0.3\linewidth]{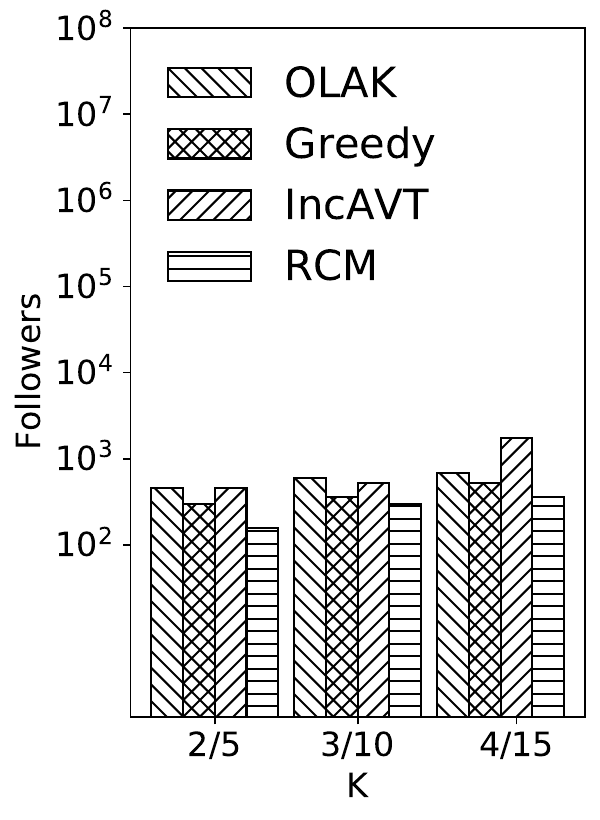}
	}
		\vspace{-2mm}
	\caption{\TT{Number of followers with varying $K$}}
	\label{fig:followers_k}
\end{figure}

In this experiment, we evaluate the total number of followers produced by the AVT problem with different datasets and approaches \TT{in Figure~\ref{fig:followers_t} - Figure~\ref{fig:followers_k}} by varying one parameter and setting the other two as defaults. 
As we can see, the number of followers in each snapshot discovered by all four approaches increases rapidly in all datasets with the evolving of the network. 
For example, \TT{in Figure~\ref{R3:exp_follows:vary_t3}}, the follower size in the \textit{Deezer} dataset is about one thousand when $T=2$ and goes up to 50,000 when $T=30$. 
\TT{Similar pattern can also be found in Figure \ref{fig:followers_l} as more followers can be found when we increase $l$ with the other two parameters fixed.} As expected, we do not observe a noticeable followers trend \TT{from Figure~\ref{fig:followers_k}} for all four approaches when varying $k$. This is because the anchored $k$-core size is highly related to the network structure. 
From the above experimental results, we can conclude that tracking the anchored vertices in an evolving network is necessary to maximize the benefits of expanding the communities.

\subsection{\TT{A Case Study on Anchored Vertex Tracking}}

\begin{figure}
    \centering
    \includegraphics[width=0.45\linewidth]{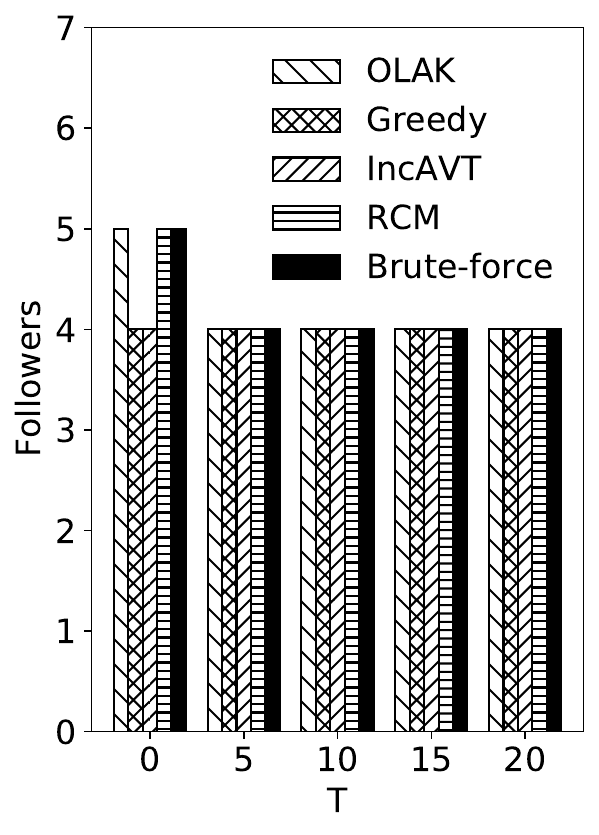}
    \caption{Follower number comparison.}
    \label{fig:my_label}
\end{figure}

\TT{We conduct a case study in this subsection to provide more insights into comparing our proposed methods with the \textit{brute-force} method for the problem studied in this paper. Specifically, the \textit{brute-force} method requires exhaustively enumerating all possible anchored sets with size $l$. The time complexity is $\mathcal{O}(C^l_{|V|}\cdot|E|)$, which is cost-prohibitive and growing exponentially while $l$ increases (\textit{e.g., the running time of \textit{brute-force} in \textit{mathoverflow} and \textit{eu-core} by setting $l=2$ and $k=3$ are over $24$ hours and 38,140 ms, respectively}).
In Figure~\ref{fig:my_label}, we report the followers results of given anchored vertices at different snapshots in \textit{eu-core} using \textit{IncAVT}, \textit{Greedy}, and \textit{brute-force} method by varying $T$ and setting $l=2$ and $k=3$. 
We observed that, the approximate results (\textit{i.e., number of followers}) reported by the four approximate algorithms (\textit{i.e., OLAK, Greedy, IncAVT, and RCM}) are  very close to the exact result queried by \textit{brute-force} algorithm. 
}

\begin{table}[t]
	\caption{Selected Anchored Vertices and Followers. \label{R3_tab:parameter}}
		\vspace{-2mm}
	\begin{center}
		\scalebox{0.9}{
			\begin{tabular}{|c|p{3.0cm}<{\centering}|c|} \hline
				{\bf Algorithms}  &  Selected Anchored Vertices                         & Followers \\\hline 
				\hline
				Brute-force              & 469, 630	                       & 163, 72, 630, 468, 469      \\\hline
				OLAK             & 630, 541     & 630, 163, 72, 541, 531   \\\hline 
				Greedy     & 541, 351                        & 541, 531, 351, 184      \\\hline
					IncAVT             &541, 351                        & 541, 531, 351, 184      \\\hline
				RCM              & 552, 630                    & 552, 630, 72, 163, 320  \\\hline
			\end{tabular}
		}
	\end{center} \label{exp:case_study2}
\end{table}

\TT{Finally, we further show the selected anchored vertices and the related followers in detail at the first snapshot period in Table~\ref{exp:case_study2}.}

\section{Related work}
\label{sec:related}
\subsection{$k$-core Decomposition} 
The model of $k$-core was first introduced by Seidman et al.~\cite{SEIDMAN1983269}, and has been widely used as a metric for measuring the structure cohesiveness of a specific community in the topic of social contagion~\cite{DBLP:journals/pnas/UganderBMK12},  user engagement~\cite{DBLP:conf/icalp/BhawalkarKLRS12, DBLP:conf/cikm/MalliarosV13}, Internet topology~\cite{DBLP:journals/nhm/Alvarez-HamelinDBV08,Carmi11150}, influence studies~\cite{Influence_Nature,DBLP:journals/amc/LiWSX18}, and graph clustering~\cite{DBLP:conf/aaai/GiatsidisMTV14,DBLP:journals/tkde/LiYM14}. 
The $k$-core can be computed by using core decomposition algorithm, while the core decomposition is to efficiently compute for each vertex its core number~\cite{DBLP:journals/corr/cs-DS-0310049}. Besides, with the dynamic change of the graph, incrementally computing the new core number of each affected vertices is known as core maintenance, which has been studied in~\cite{DBLP:conf/icde/ZhangYZQ17,DBLP:conf/sigmod/0001Y19,DBLP:journals/tkde/AksuCCKU14,DBLP:journals/tkde/LiYM14,DBLP:journals/pvldb/SariyuceGJWC13}. 


\subsection{\TTao{User Engagement}} 
User engagement in social networks has attracted much attention while quantifying user engagement dynamics in social networks is usually measured by using $k$-core~\cite{DBLP:conf/icalp/BhawalkarKLRS12, DBLP:journals/iandc/ChitnisFG16, DBLP:conf/aaai/ZhangZQZL17, DBLP:journals/tkde/ZhangLZQZ20,DBLP:conf/ijcai/ZhouZLZ019,DBLP:conf/sigmod/LinghuZ00Z20,DBLP:conf/icde/CaiLHMYS20, DBLP:conf/sdm/LaishramSEPS20}. Bhawalker et al.~\cite{DBLP:conf/icalp/BhawalkarKLRS12} first introduced the problem of anchored $k$-core, which was inspired by the observation that the user of a social network remains active only if her neighborhood meets some minimal engagement level: in $k$-core terms. Specifically, the anchored $k$-core problem aims to find a set of anchored vertices that can further induce maximal anchored $k$-core. Then, Chitnis et al. ~\cite{DBLP:journals/iandc/ChitnisFG16} proved that the anchored $k$-core problem on general graphs is solvable in polynomial time for $k \leq 2$, but is NP-hard for $k > 2$. Later, Zhang et al. in 2017 ~\cite{DBLP:conf/icde/ZhangYZQ17} proposed an efficient greedy algorithm by using the vertex deletion order in $k$-core decomposition, named OLAK. In the same year, another research~\cite{DBLP:journals/pvldb/ZhangZZQL17} studied the anchored $k$-core problem,  which aims to identify critical users that may lead a maximum $k$-core. 
Zhou et al.~\cite{DBLP:conf/ijcai/ZhouZLZ019} introduced a notion of resilience in terms of the stability of k-cores while the vertex or edges are randomly deleting, which is close to the anchored $k$-core problem. Cai et al.~\cite{DBLP:conf/icde/CaiLHMYS20} focused on a new research problem of anchored vertex exploration that considers the users' specific interests, structural cohesiveness, and structure cohesiveness, making it significantly complementary to the anchored $k$-core problem in which only the structure cohesiveness of users is considered. 
Very recently, Ricky et al. in 2020~\cite{DBLP:conf/sdm/LaishramSEPS20} proposed a novel algorithm by selecting anchors based on the measure of anchor score and residual degree, called Residual Core Maximization (RCM). The RCM algorithm is the state-of-the-art algorithm to solve the anchored $k$-core problem.
However, all of the works mentioned above on anchored $k$-core only consider the static social networks. 
\TT{Considering that the topology of networks often evolves in real-world, we proposed and studied the anchored vertex tracking problem (AVT) in this paper, which is extended from the traditional anchored $k$-core problem~\cite{DBLP:conf/icalp/BhawalkarKLRS12}, aiming to find out the optimal anchored vertices in each timestamp so as to fully maximize the community size at each period of evolving networks.}
To the best of our knowledge, our work is the first to study the anchored vertex tracking problem to find the anchored vertices at each timestamp of evolving networks.

\TTao{In addition, some other community models such as $k$-truss~\cite{huang2014querying,zhang2018finding} and $k$-plex~\cite{balasundaram2011clique} can be applied to measure the quality of user engagement dynamics in social networks. Compared with $k$-core, the $k$-truss model not only captures users with high engagement but also ensures strong tie strength among the users. However, the $k$-truss is defined based on the triangle, a local concept, and may not fully represent the user's cluster in a global view. Besides, the cohesiveness of the $k$-plex is higher than that in both $k$-core and $k$-truss. In other words, the users in $k$-plex have a tighter relationship than that in both $k$-core and $k$-truss. Nevertheless, finding a $k$-plex from a given graph for an integer $k$ is NP-hard, 
leads to the unsuitability of the k-plex model in this work.}

\section{Conclusions}
\label{sec:Con}

In this paper, we focus on a novel problem, namely 
the 
anchored vertex tracking (AVT) problem, which is the extension of the \textit{anchored $k$-core} problem towards dynamic networks. The AVT problem aims at tracking the anchored vertex set dynamically such that the selected anchored vertex set can induce the maximum anchored $k$-core at any moment. We develop a Greedy algorithm to solve this problem. We further accelerate the above algorithm from two aspects, including (1) \TT{reducing} the potential anchored vertices that need probing; and (2) \TT{proposing} an algorithm to improve the followers' computation efficiency with a given anchored vertex. Moreover, an incremental computation method is designed by utilizing the smoothness of the evolution of the network structure and the well-designed Bounded $K$-order maintenance methods in an evolving graph. 
Finally, the extensive performance evaluations also reveal the practical efficiency and effectiveness of our proposed methods in this paper.         
 
\appendices

\ifCLASSOPTIONcompsoc
  \section*{Acknowledgments}
This work was mainly supported by ARC Discovery Project under Grant No. DP200102298 and the ARC Linkage Project under Grant No. LP180100750. This work also partially supported by NNSF of China No.61972275.
\else
   regular IEEE prefers the singular form
  \section*{Acknowledgment}
\fi

\ifCLASSOPTIONcaptionsoff
  \newpage
\fi

\bibliographystyle{abbrv} 

\bibliography{Taotaocai03,Thesis}

\end{document}